\newtheorem{thm}{Theorem}[section]
\newtheorem{lemma}{Lemma}[section]
\newtheorem{prop}{Proposition}[section]
\newtheorem{Corollary}{Corollary}[section]
\newtheorem{definition}{Definition}[section]
\theoremstyle{remark}
  \let\de=\delta
\newcommand{\opunit}{\text{1}\kern-0.22em\text{l}}
\DeclareMathAlphabet{\mathpzc}{OT1}{pzc}{m}{it}
\newcommand{\id}{\textrm{d}}
\def\bea{\begin{eqnarray}}
\def\eea{\end{eqnarray}}
\def\ba{\begin{array}}
	\def\ea{\end{array}}
\begin{document}
	
    \title{Stabilization in the eye of a cyclone}
	\author{Thibaut Demaerel}
	\affiliation{Instituut voor Theoretische Fysica, KU Leuven, Belgium}
	\author{Christian Maes}
	\affiliation{Instituut voor Theoretische Fysica, KU Leuven, Belgium}
		\author{Karel Neto\u{c}n\'{y}}
	\affiliation{Institute of Physics AS CR, Czech republic}

\begin{abstract}
We consider the systematic force on a heavy probe induced by interaction with an overdamped diffusive medium where particles undergo a rotating force around a fixed center.  The stiffness matrix summarizes the stability of the probe around that center, where the induced force vanishes. We prove that the introduction of the rotational force in general enhances the stability of that point (and may turn it from unstable to stable!), starting at second-order in the nonequilibrium amplitude. When the driving is further enhanced the stabilization occurs for a wide range of rotation profiles and the induced stiffness converges to a universal expression proportional to the average mechanical stiffness. The model thus provides a rigorous example of stabilization of a fixed point due to contact with a nonequilibrium medium and beyond linear order around equilibrium.
\end{abstract}
\maketitle

\section{Introduction}
It is of much recent interest to understand the stabilization of fixed points for the dynamics of slow particles in contact with a fast nonequilibrium medium. Applications vary from the induced force on probes in contact with biological tissue or filaments to the motion of dust in atmospheric dynamics. While the context of probes in contact with nonequilibrium media is clearly physically interesting there are very few mathematical treatments and even less rigorous results \cite{mal,yan,pre,kaf,tim,epl,prl, Sheshka,njp}.  
The present paper  presents a mathematical study about the stabilization of a probe (the slow particle) 
which interacts with (fast) medium particles that are subject to a vortex-shaped force-field.  The induced force on the probe is zero in the cyclone's eye explaining the title of the paper, i.e., in the center of rotation which is a symmetry point for the medium dynamics.
Simulations and heuristics have been given in the paper \cite{njp}; here we add new results combined with a mathematically rigorous treatment of previous questions.  In words, we are dealing with a two-dimensional overdamped medium of diffusing particles in short range interaction with a probe.  Each medium particle moves under a sufficiently confining self-potential and undergoes a rotational force.  We consider the quasi-static regime in which the mechanical force on the probe is averaged over the (mostly unknown) stationary distribution of the medium.  That mean force on the probe vanishes when the probe is placed at the center of rotation.  We define the stiffness matrix to discuss the linear stability of that fixed point.   It allows to discuss the stabilizing effect of the rotations, or how the effective spring constant between origin and probe changes as function of the rotation amplitude.\\

A more formal introduction to our results is to consider an overdamped dynamics of a medium particle with position $y(t) \in D(0,R)\subset \mathbb{R}^2$ on a disk of radius $R$ centered at the origin (possibly $R=+\infty$),
\begin{equation}\label{med}
 \dot{y}_t = -\nabla_y U(x,y_t) - \epsilon\, F(y_t) + \sqrt{2 \beta^{-1}}\; \xi_t
\end{equation}
where $(\xi_t)$ stands for standard white noise with prefactor $\beta^{-1}$ equal to the temperature, representing the thermal environment of the medium. The  $x\in \mathbb{R}^2$ is the position of a probe in interaction with the medium through a smooth potential $U(x,y)$.  The latter contains also the self-potential on the medium particles.  The nonequilibrium driving of strength $\epsilon$ is in the force $F(y) = \omega(|y|)\,(-y_2,\,y_1)=|y|\,\omega(|y|)\,\hat{\varphi} $ which is a divergence-less (possibly differential) rotation around the origin. The unit vector $\hat{\varphi} $ is in the azimuthal direction and the rotation profile $r \mapsto \omega(r)$ is assumed to be smooth in the radial coordinate $|y|=r$. For $\epsilon=0$ the dynamics is reversible at inverse temperature $\beta$ with equilibrium density $\rho_\text{eq}(y) = \exp[-\beta U(x,y)]/Z_x$ with respect to $\id y$.  In the next section we give mathematical details and sufficient assumptions for the well-behavedness of the Markovian dynamics corresponding to \eqref{med}. There will be a unique smooth stationary probability measure $\rho_{x,\epsilon}(y) \id y$ and we will discuss  the `smoothness' of the map $x \mapsto \rho_{x,\epsilon}(y)$, which is necessary  to define the stiffness matrix ${\cal K}$ from the mean force $x \mapsto {\mathscr F}(x)$ exerted on the probe when $x\rightarrow 0$,
\[
{\cal K}_{ij}(\epsilon) := -\frac{\partial \mathscr{F}_j}{\partial x^i}(0,\epsilon),\qquad
\mathscr{F}_i(x,\,\epsilon) := -\left\langle \frac{\partial U}{\partial x^i}(x,y)\right\rangle_{x,\,\epsilon},\quad i=1,2
\]
The expectation in $\langle\cdot\rangle_{x,\,\epsilon}$ is the average under $\rho_{x,\epsilon}(y)\,\id y$.  The force ${\mathscr F}(x)$ is zero for $x=0$ so that the stiffness estimates the stability of the probe at the origin.
The entry ${\cal K}_{ii}$ on the diagonal of the stiffness matrix can be called the spring constant $\kappa(\epsilon) = \kappa(-\epsilon)$ which starts as $\epsilon^2$ around equilibrium (where $\epsilon=0$).\\

 After the preliminary material of Section \ref{mod} we start in Section \ref{stifm} by proving a general expression for the stiffness matrix which distinguishes clearly the nonequilibrium effects.
 What we show is that
 \begin{equation}\label{difnon}
 {\cal K}_{ij}(\epsilon) =
 {\cal K}_{ij}(0) + \beta\epsilon\, \int_0^{+\infty}\id t\,\, \left\langle\frac{\partial U}{\partial x^i}(0,y_0) F^k(y_0)\frac{\partial }{\partial y^k}\frac{\partial U}{\partial x^j}(0,y_t)\right\rangle_0
 \end{equation}
 where the time-correlation is in the nonequilibrium medium dynamics with the probe at $x=0$ and started at time zero with probability density $\rho_0(y)\propto \exp\left(-\beta U(0,y)\right)$.\\
We also prove the universality of the stiffness in the limit $|\epsilon|\rightarrow \infty$  where the stiffness becomes independent of the imposed rotation profile, getting
  \begin{equation}\label{inflim}
{\cal K}_{ij}(\infty):=   \lim_{\epsilon\uparrow \infty} {\cal K}_{ij}(\epsilon) =\int \id y\,\rho_{0}(y)\; \frac{\partial^2 U}{ \partial x^i\partial x^j}(0,y)=\left\langle \frac{\partial^2 U}{ \partial x^i\partial x^j}(0)\right\rangle_{0}
   \end{equation}
 Clearly then the asymptotic spring constant depends only on the mechanical stiffness in that limit while the difference
   \begin{eqnarray*}
{\cal K}_{ij}(\infty) - {\cal K}_{ij}(0)  &=&   \beta \left\langle\frac{\partial U}{\partial x^i}(0),\,\frac{\partial U}{\partial x^j}(0)\right\rangle_0\\
&=& \beta \left\{\left\langle\frac{\partial U}{\partial x^i}(0)\,\frac{\partial U}{\partial x^j}(0)\right\rangle_0-\left\langle\frac{\partial U}{\partial x^i}(0)\right\rangle_0\left\langle\frac{\partial U}{\partial x^j}(0)\right\rangle_0\right\}\geq 0
   \end{eqnarray*}
  is non-negative and picks up the equilibrium fluctuation in terms of the force-force covariance.\\
    
In Section \ref{smd} we show that the stiffness matrix is a smooth function of the driving parameter $\epsilon$ so that we can carry out expansions around equilibrium, $\epsilon=0$.  That is used in Section \ref{closetoeq} to prove that for either unequivocally attractive or for repulsive probe-particle interaction $U$ and non-negative rotation profile $\omega$ one has a relative stabilization, i.e.
\begin{equation*}
\frac{\id^2{\cal K}_{ij}}{\id\epsilon^2}(\epsilon=0)\xi^i\xi^j >0,\qquad \xi \in \mathbb{R}^2
\end{equation*}
(Einstein summation is assumed throughout the paper. Indices are always lowered or raised with respect to the standard metric $\delta_{ij}$).\\
In Section \ref{monot} we show the monotonicity in $\epsilon$ of the symmetric part of ${\cal K}_{ij}(\epsilon)$ when the rotation is rigid ($\omega(r)=1$), even when an interaction between the medium particles is added which only depends on the relative distance between the particles. That means that the spring constant $\kappa(\epsilon)$ is non-decreasing in $\epsilon$.  We expect (but do not prove) that that scenario of monotone stabilization of the probe at the center as function of the rotational amplitude holds true under much weaker conditions as well\footnote{Counterexamples exist however, but are not discussed here.}.

\section{Model and mathematical setup}\label{mod}
The present section contains mathematical specifications concerning model \eqref{med}.  Some generalities from the theory of diffusion processes are repeated in the present context for self-consistency.  We emphasize mostly the construction of the pseudo-inverse of the backward generator in Section \ref{pse}.\\
We start with independent medium particles having positions $y$ on the open disk $M=D(0,R)\subset \mathbb{R}^2$ (possibly $R=+\infty$) and a probe is placed statically at a position $x \in M$.

\subsection{Forces}
We give here assumptions on the potential $U(x,y)= V_{pp}(|x-y|)+V_{ext}(|y|)$ and forces $F(y) = \omega(|y|)\,(-y_2,\,y_1)=|y|\,\omega(|y|)\,\hat{\varphi} $ appearing in \eqref{med}.\\
We assume $(x,y) \mapsto V_{pp}(|x-y|)$ and $y \mapsto V_{ext}(y)$ both are smooth maps. The same is then true for $(x,y) \mapsto U(x,y)$.
We write $V(y):=  U(x=0,\,y)$ and we suppose that  $\int_M\exp(-\beta V(y))\,\id y <  \infty$.  Moreover, the potential $V$ is bounded below and diverges at the boundary of the disk $M$, i.e., for every closed interval $[a,b]$,
\begin{equation} \label{pot top}
V^{-1}([a,b])\subset M \text{ is compact }
\end{equation}
We assume finally that the divergence at the boundary (when $R<\infty$) is sufficiently pronounced: there are $\delta,m,\gamma>0$ so that for all $y \in M$ with $|y|>R-\delta$, 
\begin{equation} \label{boundary condition}
(1- \frac{\gamma}{\beta})(\nabla V(y))^2 - \frac{1}{\beta}\nabla^2 V(y) \geq m
\end{equation}
That condition assures that there is sufficient compactness, so that the Poincar\'e inequality (see \eqref{Poincare inequality}) holds true.\\
Additionally, we assume
 \begin{eqnarray} 
&& \int_M \id y\,\exp(-\beta V(y)) (\nabla V(y))^2 < \infty \label{int 1}\\
&& \int_M \id y\,\exp(-\beta V) (\nabla_x U(x=0,y))^2 < \infty. \label{int 2}
\end{eqnarray}
We further assume there to be some $\epsilon>0$ s.t. $\forall x \in B(0,\epsilon)$
\begin{eqnarray}
&& k_1(x):=\sup_{z \in M:\,|z|<|x|}\left\|\nabla_y (U(z)-U(0))\right\|_{\infty}=\sup_{z \in M:\,|z|<|x|}\,\sup_{y \in M}\left|\nabla_x (U(z,y)-U(0,y))\right| < \infty \nonumber\\
&& k_2(x):=\sup_{z \in M:\,|z|<|x|}\left\|\left(-(\nabla_y V)\cdot \nabla_y+\beta^{-1}\nabla_y^2\right)(U(z)-U(0))\right\|_{\infty} <\infty \nonumber\\
&& \limsup_{x\to 0} \frac{k_1(x)}{|x|}, \,\limsup_{x\to 0} \frac{k_2(x)}{|x|}\,<\infty. \label{lip}
\end{eqnarray}
For the rotational driving we suppose that the force $F=(F_1,F_2)$ is bounded: there is a finite $C$ so that for all $y \in M$,
\begin{equation}\label{bounded driving}
|F(y)|=\sqrt{F_1^2(y)+F_2^2(y)}=|\omega(|y|)\,|y| < C.
\end{equation}
For e.g. rigid rotation, where $\omega(r)\equiv 1$, that implies we work on a  finite disk: $R<\infty$.\\
Note that by the foregoing constructions the driving field leaves the potential  invariant, i.e.,
\begin{equation} \label{invariance}
F_i(y) \frac{\partial V}{\partial y^i}(y)=F_i(y) \frac{\partial U}{\partial y^i}(0,y)=0
\end{equation}
Likewise,  for (weakly-)differentiable functions $u,v$  and for $\Omega:=-F^i \partial_i$,
\begin{equation}\label{omg} 
\int_M \id y\, \rho_0 F^i \partial_i \left(\frac{u^2}{2}\right) = -\int_M \id y\, \rho_0 \,u \,\Omega u = 0
\end{equation}

\subsection{Spaces} \label{spaces}
To  define the generator and the diffusive dynamics, we first set the function-analytic stage. Consider the weight function given by the probability density 
\[
\rho_0(y) := \frac 1{Z} \exp\left(-\beta V(y)\right), \quad Z =\int \id y\, \exp\left(-\beta V(y)\right) <\infty
\]
always with respect to the Lebesgue measure $\id y$ on $\mathbb{R}^2$. It is easy, by virtue of \eqref{invariance}, to show that $\rho_0$ solves the stationary Kolmogorov forward equation associated to \eqref{med} provided $x=0$ (but for $\epsilon$ arbitrary.)\\
We use the notation
\[
\langle u\rangle_0 := \frac 1{Z}\,\int \id y\, \exp\left\{-\beta V(y)\right\}\,u(y)\,= \int \id y\, \rho_0(y)\,u(y)
\]
for the expectation of a function $u: \mathbb{R}^2\rightarrow \mathbb{R}$.
We define $\mathscr{L}^2$ as the real Hilbert space of functions $u:M \to \mathbb{R}$ whose norm associated to the inner product
\begin{equation*}
 (u,v)_{\mathscr{L}^2} := \int \id y \,\rho_0 \,uv =: (u,v)
\end{equation*}
is finite. For example, \eqref{omg} says that 
\begin{equation} \label{antisymmetry}
(u,\Omega v)=-(\Omega u,v).
\end{equation}
Next, we use the notation $1$ for the constant function $y\in M \mapsto 1$ and  $H := 1^{\perp} \subset \mathscr{L}^2$.
\\
Consider the bilinear form
\begin{equation*}
(u,v)_{H^0} := \int \id y \,\rho_0 \nabla u \cdot \nabla v.
\end{equation*}

Note that \eqref{pot top}, \eqref{boundary condition} imply that the smooth function $\eta:\mathbb{R} \to (1,+\infty):y\mapsto \exp(\gamma(V(y)-\min(V))$ has the properties
\begin{enumerate}
	\item $\lim_{|s| \to R}\eta(s)=+\infty$.
	\item There exists an $m>0$ and a compact subset $K \subset M$ so that: 
	\begin{equation} \label{spec}
	\left.\frac{1}{\eta}\left(\nabla V - \beta^{-1}\nabla^2\right)\eta\right|_{M\setminus K}\geq \gamma m
	\end{equation}
\end{enumerate}
It is shown in \cite{Bakry} that the existence of such a ``Lyapunov function" implies a so-called Poincar\'e inequality:
For all weakly differentiable functions $u:M \to \mathbb{R}$, we have the Poincar\'e inequality
\begin{equation} \label{Poincare inequality} 
c^2\left(\left\|u\right\|_{\mathscr{L}^2}^2 - (u,1)(1,u)\right)=c^2\left(\left\langle u^2\right\rangle_{0} - \left\langle u\right\rangle_{0}^2\right) \leq \left\langle (\nabla u)^2\right\rangle_{0} = \left\|u\right\|_{H^0}^2
\end{equation}
where the constant $c$ depends only on the temperature $\beta^{-1}$ and the potential $V$.\\
As a consequence,
\begin{equation*}
H^0 := \left\{u \subset \mathscr{L}^2 \left.\right| (u,u)_{H^0} < +\infty,\, (u,1)_{\mathscr{L}^2}=0\right\}
\end{equation*}
is a Hilbert space.

\subsection{Operators}\label{pse}
\begin{definition}
$\ell_{x,\epsilon}$ is the formal backward operator acting on twice-weakly differentiable functions $u:M\mapsto \mathbb{R}$ as
\[\ell_{x,\epsilon}u=-\left((\nabla_y U)(x,y)+\epsilon F\right) \cdot (\nabla u)(y)+\beta^{-1}(\nabla^2u)(y) \]
(Its co-domain being locally integrable functions).
\end{definition}
\begin{prop} For all $x$ in some open ${\cal W} \subset B(0,\epsilon)$
there exists a densely-defined unbounded operator $L_{x,\epsilon}:D \subset \mathscr{L}^2 \to \mathscr{L}^2$ with the following properties:
\begin{enumerate}
\item For $u\in H^0$ twice weakly-differentiable so that $\ell_{x,\epsilon}u \in \mathscr{L}^2$, one has
\begin{equation} \label{smooth}
L_{x,\epsilon}u=\ell_{x,\epsilon}u
\end{equation}
\item $L_{x,\epsilon}$ is the generator of a positive strongly-$\mathscr{L}^2$-continuous contraction semi-group $\left\{S(t):\mathscr{L}^2 \to \mathscr{L}^2\right\}_{t\geq 0}$ which is probability-conserving: $S(t)1 \equiv 1$.
\item For $x=0$ and all $u,v\in D$, abbreviating $L_{0,\epsilon}=:L_{\epsilon}$, one has
\begin{equation} \label{adj}
(u,L_{\epsilon}v)=(L_{-\epsilon}u,v)
\end{equation}
and
\begin{equation} \label{neg def}
(u,L_{\epsilon}u)=-\beta^{-1}(u,u)_{H^0}
\end{equation}
so that $L_{\epsilon}$ is negative--definite.
\item Further abbreviating $L_0=:L$, one has $L_{\epsilon}=L+\epsilon \Omega$ with $\Omega:H^0 \to \mathscr{L}^2: u \mapsto -F^i \partial_i u$ and for all $u,v \in H^0$,
\begin{equation*}
(u,Lv)=-\beta^{-1}(u,v)_{H^0}
\end{equation*}
\end{enumerate}
\end{prop}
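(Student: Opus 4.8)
The plan is to build $L_{x,\epsilon}$ from the theory of closed (non\-symmetric) sectorial forms on the fixed Hilbert space $\mathscr{L}^2=L^2(\rho_0\,\id y)$, treating both $\epsilon$ and $x$ as perturbations of the reversible case $x=0,\epsilon=0$. First I would fix the form domain $\mathcal{F}:=\{u\in\mathscr{L}^2:\ u\ \text{weakly differentiable},\ \nabla u\in L^2(\rho_0\,\id y)\}$ and, on it, the symmetric bilinear form
\[
\mathcal{E}(u,v):=\beta^{-1}\int\id y\,\rho_0\,\nabla u\cdot\nabla v ,
\]
which vanishes on the constants. The Poincar\'e inequality \eqref{Poincare inequality} --- available by hypothesis through the Lyapunov function $\eta$ --- is exactly what makes $(\mathcal{E},\mathcal{F})$ \emph{closed}: $\mathcal{F}$ is complete for the form norm $(\|u\|_{\mathscr{L}^2}^2+\mathcal{E}(u,u))^{1/2}$, and on $H^0$ that norm is equivalent to $\mathcal{E}(u,u)^{1/2}$, i.e.\ $\mathcal{E}$ has a spectral gap there. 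Since $\mathcal{E}$ is Markovian (unit contractions operate, the standard check for gradient\-type forms), the associated self\-adjoint operator $-L\ge 0$ generates a symmetric sub\-Markovian $C_0$ semigroup; as $1\in\mathcal{F}$ and $\mathcal{E}(1,\cdot)=0$ we get $L1=0$, hence $S(t)1=1$ and the semigroup is conservative. This already yields the $\epsilon=0$ case of property (4), $(u,Lv)=-\mathcal{E}(u,v)=-\beta^{-1}(u,v)_{H^0}$.

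Next I would identify the action of $L$: integrating by parts against $u\in C^\infty_c(M)$ (no boundary terms) gives $Lu=\beta^{-1}e^{\beta V}\nabla\!\cdot(e^{-\beta V}\nabla u)=-\nabla V\cdot\nabla u+\beta^{-1}\nabla^2u=\ell_{0,0}u$, and to upgrade this to the class in property (1) --- every twice\-weakly\-differentiable $u\in H^0$ with $\ell_{0,0}u\in\mathscr{L}^2$ --- I would cut off with $\eta$\-adapted functions and invoke \eqref{boundary condition} together with the integrability bounds \eqref{int 1}--\eqref{int 2} to show the boundary contributions in the integration by parts vanish in the limit, so that $Lu=\ell_{0,0}u$. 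The rotational term enters as the first\-order operator $\Omega u=-F^i\partial_i u$; by \eqref{bounded driving} it is bounded $\mathcal{F}\to\mathscr{L}^2$ with $\|\Omega u\|_{\mathscr{L}^2}\le C\sqrt{\beta}\,\mathcal{E}(u,u)^{1/2}$, and by \eqref{invariance} (equivalently $\nabla\!\cdot(\rho_0F)=0$) it is antisymmetric, $(u,\Omega v)=-(\Omega u,v)$, so it does not change the symmetric part. Hence $\mathcal{E}_\epsilon(u,v):=\mathcal{E}(u,v)-\epsilon(\Omega u,v)$ is again a closed sectorial form with the same domain $\mathcal{F}$ and with $\mathcal{E}_\epsilon(u,u)=\mathcal{E}(u,u)$; its generator is $L_\epsilon=L+\epsilon\Omega$, which is a contraction generator by the resulting non\-negativity. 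Antisymmetry gives at once $(u,L_\epsilon u)=(u,Lu)=-\beta^{-1}(u,u)_{H^0}$ (negative\-definiteness) and $(u,L_\epsilon v)=(Lu,v)-\epsilon(\Omega u,v)=(L_{-\epsilon}u,v)$ (the adjoint relation \eqref{adj}), i.e.\ properties (3) and (4). Positivity of $S(t)$ for $\epsilon\neq 0$ comes from the theory of non\-symmetric Dirichlet forms: $\nabla\!\cdot(\rho_0F)=0$ is precisely the condition keeping $(\mathcal{E}_\epsilon,\mathcal{F})$ a Dirichlet form, hence sub\-Markovian, and $\Omega 1=0$ gives $L_\epsilon 1=0$, $S(t)1=1$, so it is conservative. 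Property (1) for general $\epsilon$ follows by subtracting $\epsilon\Omega u\in\mathscr{L}^2$.

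Finally, for $x\neq 0$ small I would write $\ell_{x,\epsilon}=\ell_{0,\epsilon}-b_x\cdot\nabla$ with $b_x:=\nabla_y\bigl(U(x,\cdot)-U(0,\cdot)\bigr)$, where by \eqref{lip} one has $\sup_M|b_x|=k_1(x)\to 0$ (indeed $O(|x|)$), so $u\mapsto (b_x\cdot\nabla u,v)$ is a form\-bounded perturbation of $\mathcal{E}_\epsilon$ with relative bound $O(k_1(x))$. Thus for $x$ in a sufficiently small ball $\mathcal{W}\subset B(0,\epsilon)$ the form $\mathcal{E}_{x,\epsilon}:=\mathcal{E}_\epsilon+(b_x\cdot\nabla\,\cdot\,,\,\cdot\,)$ stays closed and sectorial (coercive after a shift by a large constant, using \eqref{lip} and the Poincar\'e inequality), with domain $\mathcal{F}$; its generator $L_{x,\epsilon}$ generates a $C_0$ semigroup, still with $L_{x,\epsilon}1=0$ (so $S(t)1=1$) and with positivity inherited through the (perturbed) Dirichlet\-form structure, and the identification $L_{x,\epsilon}u=\ell_{x,\epsilon}u$ on the stated class is obtained as before together with \eqref{lip}. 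The routine functional\-analytic packaging (closed form $\Rightarrow$ generator $\Rightarrow$ semigroup, plus the perturbation estimates) is standard; the two genuinely delicate points, and the ones I expect to be the main obstacle, are (i) showing that the form\-generated operator really coincides with the \emph{formal} operator $\ell_{x,\epsilon}$ on the full class of twice\-weakly\-differentiable $u$ with $\ell_{x,\epsilon}u\in\mathscr{L}^2$, which is where the boundary\-growth condition \eqref{boundary condition} and the integrability bounds \eqref{int 1}--\eqref{int 2} must be used to discard boundary terms (the diffusion does not ``feel'' the boundary), and (ii) verifying the Dirichlet/positivity property of the non\-symmetric perturbed forms, for which the divergence\-free identity \eqref{invariance} is the essential structural input at $x=0$, with $x\neq 0$ handled perturbatively.
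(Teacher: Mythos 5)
Your proposal is correct in overall strategy and is a close cousin of the paper's construction: both are variational. Where you package everything as a closed sectorial (non-symmetric Dirichlet) form $\mathcal{E}_{x,\epsilon}$ on a fixed form domain and read off the generator from the representation theorem, the paper instead runs Lax--Milgram directly on the shifted bilinear forms $B_{x,\epsilon,\lambda}$ to build the resolvents $R_{x,\epsilon,\lambda}$ for each $\lambda$, verifies the Hille--Yosida bound $\|R_{x,\epsilon,\lambda}\|\le(c_7+\lambda)^{-1}$, and defines $L_{x,\epsilon}$ as the operator whose resolvent that is. The structural inputs are identical in both routes: the Poincar\'e inequality for coercivity, the antisymmetry \eqref{antisymmetry} of $\Omega$ (from \eqref{invariance}) for \eqref{adj} and \eqref{neg def}, the smallness $k_1(x)=O(|x|)$ from \eqref{lip} to absorb the drift $b_x\cdot\nabla$, and the density of compactly supported smooth functions (built from cutoffs $\chi_n=\eta_n\circ V$ using \eqref{pot top} and \eqref{int 1}) to identify the form-generated operator with $\ell_{x,\epsilon}$ --- your ``discard the boundary terms'' step is exactly the paper's Lemma on density plus the single integration by parts \eqref{partial int}.

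The genuine divergence is in how positivity of $S(t)$ is obtained, and this is also where your argument is thinnest. You appeal to sub-Markovianity of non-symmetric Dirichlet forms, which works cleanly at $x=0$ because $\nabla\cdot(\rho_0 F)=0$ preserves the Dirichlet property; but for $x\neq 0$ the drift $b_x\cdot\nabla$ is \emph{not} divergence-free with respect to $\rho_0$, and ``positivity inherited through the perturbed Dirichlet-form structure'' is an assertion, not an argument --- the sub-Markovianity criterion for non-symmetric forms with a general bounded drift must actually be checked, and form-boundedness alone does not give it. The paper sidesteps this entirely: it proves positivity of the resolvents $R_{x,\epsilon,\lambda}$ concretely, by solving the Dirichlet problem on an exhausting sequence of compact subdomains, invoking the classical maximum principle there, passing to the monotone limit, and then transferring positivity to $S(t)$ via the Hille--Yosida exponential formula. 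If you want to complete your route you should either verify the non-symmetric Dirichlet contraction condition for $\mathcal{E}_{x,\epsilon}$ explicitly, or fall back on exactly this maximum-principle approximation; as written, that one step is a gap, albeit one you correctly flagged as the delicate point.
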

Proofs of these statements can be found in the Appendix.

We now define the pseudo-inverse, as appearing in the following
\begin{prop}
There exists an operator $L_{\epsilon}^{-1}: \mathscr{L}^2 \to H^0 \subset \mathscr{L}^2$ with the following properties:
\begin{enumerate}
\item One has
\begin{equation*}
\forall u \in D:\, L_{\epsilon}^{-1}L_{\epsilon}u=P_Hu:=u-1(u,1), \qquad \forall u \in \mathscr{L}^2: L_{\epsilon}^{-1}u \in D \text{ and } L_{\epsilon}L_{\epsilon}^{-1}u=P_Hu
\end{equation*}
\item There are $k_3(x), k_4(x)>0$ so that $\forall x \in B(0,\epsilon)$
\begin{eqnarray} 
&&\sup_{u \in \mathscr{L}^2} \frac{\left\|L_{x,\epsilon}^{-1}u\right\|_{\mathscr{L}^2}}{\left\|u\right\|_{\mathscr{L}^2}}\leq k_3(x) \label{second bound} \\
&& \sup_{u \in \mathscr{L}^2}\frac{\left\|L_{x,\epsilon}^{-1}u\right\|_{H^0}}{\left\|u\right\|_{\mathscr{L}^2}}\leq k_4(x)\label{first bound}
\end{eqnarray}
\item For all $u,v \in \mathscr{L}^2$, 
\begin{equation}\label{pseudo adj}
(u,L_{\epsilon}^{-1}v)=(L_{-\epsilon}^{-1}u,v) \text{ and } (u,L_{\epsilon}^{-1}u)\leq 0
\end{equation}
\end{enumerate}
\end{prop}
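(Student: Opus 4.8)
\noindent The plan is to realize the pseudo--inverse as the solution operator of a coercive, continuous bilinear form on the Hilbert space $H^0$, apply the Lax--Milgram theorem, and then identify that operator with the pseudo--inverse of the abstract generator by elliptic regularity. At $x=0$, part 4 of the preceding Proposition together with the antisymmetry \eqref{antisymmetry} of $\Omega$ suggests introducing, for $u,v\in H^0$, the form
\[
b_\epsilon(u,v)\;:=\;\beta^{-1}(u,v)_{H^0}\;-\;\epsilon\,(\Omega u,\,v),
\]
which is ``$(-L_\epsilon u,v)$'' extended to all of $H^0\times H^0$. It is well defined and continuous there: by the boundedness \eqref{bounded driving} of $F$, $|(\Omega u,v)|\le C\|\nabla u\|_{\mathscr{L}^2}\|v\|_{\mathscr{L}^2}$, and \eqref{Poincare inequality} bounds $\|v\|_{\mathscr{L}^2}$ by $c^{-1}\|v\|_{H^0}$. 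It is coercive because \eqref{omg} gives $(\Omega u,u)=0$, whence $b_\epsilon(u,u)=\beta^{-1}\|u\|_{H^0}^2$. For $g\in\mathscr{L}^2$ the functional $v\mapsto(g,v)$ is bounded on $H^0$ (again by \eqref{Poincare inequality}), so Lax--Milgram yields a unique $w=w(g)\in H^0$ with $b_\epsilon(w,v)=(g,v)$ for all $v\in H^0$; I set $L_\epsilon^{-1}g:=-w$, the sign chosen so that $L_\epsilon L_\epsilon^{-1}=P_H$. Coercivity immediately gives $\|w\|_{H^0}\le\beta c^{-1}\|g\|_{\mathscr{L}^2}$ and then, via \eqref{Poincare inequality}, $\|w\|_{\mathscr{L}^2}\le\beta c^{-2}\|g\|_{\mathscr{L}^2}$; these are the bounds \eqref{second bound}--\eqref{first bound} at $x=0$.

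Next I would verify that $L_\epsilon^{-1}$ is a genuine pseudo--inverse (item 1). Testing $b_\epsilon(w,\cdot)=(g,\cdot)$ against $\phi\in C^\infty_c(M)$ --- writing $\phi=P_H\phi+(\phi,1)\,1$ and noting $b_\epsilon(w,1)=0$ (since $\nabla 1=0$ and $\Omega 1=0$) --- shows that $w$ is an $H^1_{\mathrm{loc}}$ weak solution of the uniformly elliptic divergence--form equation $\nabla\!\cdot\!(\rho_0\nabla w)=-\beta\rho_0\,(P_Hg-\epsilon\,F\!\cdot\!\nabla w)$, with smooth coefficient $\rho_0$ bounded away from $0$ on compacts and right--hand side in $L^2_{\mathrm{loc}}$. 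Interior elliptic regularity then places $w$ in $H^2_{\mathrm{loc}}$, so $w$ is twice weakly differentiable with $\ell_\epsilon w=-P_Hg\in\mathscr{L}^2$; by the identification \eqref{smooth} this forces $w\in D$ and $L_\epsilon w=-P_Hg$, that is, $L_\epsilon^{-1}g\in D$ and $L_\epsilon L_\epsilon^{-1}g=P_Hg$. Conversely, for $u\in D$ one has $L_\epsilon u\in H$ (since $(L_\epsilon u,1)=(u,L_{-\epsilon}1)=0$ by \eqref{adj}), so $L_\epsilon^{-1}L_\epsilon u-P_Hu$ lies in $H^0\cap D$ and in the kernel of $L_\epsilon$; pairing it with itself and using \eqref{neg def} shows it is a mean--zero constant, hence $0$.

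Items 2 and 3 are then short. Because $\Omega$ is antisymmetric, $b_\epsilon(v,u)=b_{-\epsilon}(u,v)$, so the solution operators at $+\epsilon$ and $-\epsilon$ are mutual $\mathscr{L}^2$--adjoints, which is exactly $(u,L_\epsilon^{-1}v)=(L_{-\epsilon}^{-1}u,v)$; and choosing the test function $v=w$ in $b_\epsilon(w,\cdot)=(u,\cdot)$ gives $(u,L_\epsilon^{-1}u)=-(u,w)=-b_\epsilon(w,w)=-\beta^{-1}\|w\|_{H^0}^2\le0$, which is \eqref{pseudo adj}. For the bounds at general $x$, write $L_{x,\epsilon}=L_\epsilon+{\cal R}_x$ with ${\cal R}_xu=-\big(\nabla_y U(x,\cdot)-\nabla_y V\big)\!\cdot\!\nabla u$, whose $H^0\to\mathscr{L}^2$ operator norm is controlled by $k_1(x)$ of \eqref{lip}; the form $b_{x,\epsilon}(u,v):=b_\epsilon(u,v)+\big((\nabla_y U(x,\cdot)-\nabla_y V)\!\cdot\!\nabla u,\,v\big)$ is still continuous and, because $\limsup_{x\to0}k_1(x)/|x|<\infty$, coercive with constant $\beta^{-1}-c^{-1}k_1(x)>0$ on a small enough ball ${\cal W}\ni0$. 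Lax--Milgram on ${\cal W}$ then produces $L_{x,\epsilon}^{-1}$ satisfying \eqref{second bound}--\eqref{first bound} with $k_4(x)=\beta c^{-1}\big(1-\beta c^{-1}k_1(x)\big)^{-1}$ and $k_3(x)=c^{-1}k_4(x)$; the same elliptic--regularity step, now also invoking $k_2(x)$ to control the non--stationarity of $\rho_0$ under $L_{x,\epsilon}$, identifies it with a right pseudo--inverse of $L_{x,\epsilon}$.

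I expect the main obstacle to be precisely this regularity/domain step: upgrading the variational solution $w$ to an element of the abstract generator domain $D$ with $L_\epsilon w=-P_Hg$ holding in the \emph{strong} sense. This is where one must combine interior elliptic regularity with the exact description of $D$ and the identity \eqref{smooth} furnished by the preceding Proposition (whose proof is deferred to the Appendix), and where, for $x\ne0$, the loss of exact coercivity of the form has to be absorbed using the Lipschitz--type smallness \eqref{lip} of $k_1(x)$, shrinking ${\cal W}$ if necessary.
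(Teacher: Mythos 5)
Your construction is essentially the paper's: the appendix likewise obtains $R_\epsilon=L_\epsilon^{-1}$ from the Lax--Milgram theorem applied to the coercive form $B_\epsilon(u,v)=\int\rho_0\left(\beta^{-1}\nabla u+\epsilon F u\right)\cdot\nabla v\,\id y$ (your $b_\epsilon$ with the arguments swapped), derives the bounds from coercivity plus the Poincar\'e inequality, and gets the adjoint relation from $B_{0,\epsilon}(u,v)=B_{0,-\epsilon}(v,u)$; the only organizational difference is that the paper defines the domain $D$ as the image of $R_\epsilon$ plus the constants, so that $L_\epsilon R_\epsilon=P_H$ holds by construction, whereas you take $D$ as given from the preceding proposition and therefore need the elliptic-regularity step to land in it, which works. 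One caveat: for $x\neq0$ the relevant left null vector is $\rho_{x,\epsilon}/\rho_0$ rather than $1$, so your variational solution only satisfies $\ell_{x,\epsilon}w=-P_Hg+c\,1$ with a generally nonzero constant $c=(\ell_{x,\epsilon}w,1)$ and is not an exact right pseudo-inverse of $L_{x,\epsilon}$; this does not damage the proposition, which for $x\neq 0$ claims only the norm bounds \eqref{second bound}--\eqref{first bound}.
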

Proofs of these statements care in the Appendix, subsection \ref{Kolm}, where the notation $R_{\epsilon} := L_{\epsilon}^{-1}$ is used.

\subsection{Stationary density: construction and uniqueness} \label{stationary density}
In this section, we construct the stationary density $\rho_{x,\epsilon}$ corresponding to probe position $x(\neq 0)$ and driving parameter $\epsilon$.

\begin{definition}
Let $x \in B(0,\epsilon)$. \\
Define the map (suppressing the $y$-dependence) $F_x:H^0\times \langle 1\rangle \to \mathscr{L}^2$
\begin{eqnarray*}
&&F_xu = -\exp(\beta V)\nabla\left[\exp(-\beta V)\cdot \nabla\left(U(x)- U(0)\right)u\right]\\
&&=\beta \ell_{0,0}(U(0)-U(x))u-\nabla\left( U(x)- U(0)\right)\cdot \nabla u \\
&&=\beta L(U(0)-U(x))u-\nabla\left( U(x)- U(0)\right)\cdot \nabla u
\end{eqnarray*}
\end{definition}
A few remarks:
\begin{enumerate}
\item Replacing $\ell_{0,0}$ by $L$, in the last step, is justified by \eqref{smooth} and assumption \eqref{lip}.
\item $F_x$ is well-defined with the stated (co-)domain since
\begin{eqnarray*}
&& \left\|F_x u\right\|_{\mathscr{L}^2} \leq \beta \left\|L(U(0)-U(x))\right\|_{\infty}\left\|u\right\|_{\mathscr{L}^2}+\left\|\nabla [ U(x)- U(0)]\right\|_{\infty}\left\|u\right\|_{H^0} \\
&&\leq \beta k_2(x)\left\|u\right\|_{\mathscr{L}^2}+k_1(x)\left\|u\right\|_{H^0} 
\end{eqnarray*}
If $u \in H^0$, the Poincar\'e inequality \eqref{Poincare inequality} implies that the RHS is smaller than $\left(\beta c^{-1}k_2(x)+k_1(x)\right)\left\|u\right\|_{H^0}=:k_F(x)\left\|u\right\|_{H^0}$ and $k_F(x)=O(x)$ by \eqref{lip}.
\end{enumerate}
\begin{prop} \label{sta}
In some open neighborhood of $x=0$, the function
\begin{equation} \label{stat dens}
\nu_{x,\epsilon}=\sum_{n=0}^{+\infty} (L_{-\epsilon}^{-1}F_x)^n 1
\end{equation}
is well-defined and $\nu_{x,\epsilon}-1\in \mathscr{L}^2\cap H^0$.
Moreover, $\nu_{x,\epsilon}$ is the unique element in $\mathscr{L}^2$ solving ($\forall u \in D$)
\begin{equation} \label{stat}
(\nu_{x,\epsilon},L_{x,\epsilon}u)=0.
\end{equation}
One also has
\begin{equation} \label{pert}
\nu_{x,\epsilon}=1+L_{-\epsilon}^{-1}F_x 1 + O(x^2)
\end{equation}
\end{prop}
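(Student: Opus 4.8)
The plan is to treat the series \eqref{stat dens} as the Neumann solution of the fixed-point equation $\nu=1+L_{-\epsilon}^{-1}F_x\nu$ on the affine set $1+H^0$, to recognise that equation — after two integrations by parts — as the weak stationary equation \eqref{stat}, and to use the smallness of $L_{-\epsilon}^{-1}F_x$ near $x=0$ for both convergence and uniqueness.

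\emph{Convergence, regularity, and \eqref{pert}.} Set $T_x:=L_{-\epsilon}^{-1}F_x$. The remark following the definition of $F_x$ gives $\|F_xu\|_{\mathscr{L}^2}\le k_F(x)\|u\|_{H^0}$ with $k_F(x)=O(|x|)$, and $\|F_x1\|_{\mathscr{L}^2}\le\beta k_2(x)=O(|x|)$; the bound \eqref{first bound}, applied with $-\epsilon$ in place of $\epsilon$ (again a generator, by \eqref{adj}), gives $\|L_{-\epsilon}^{-1}v\|_{H^0}\le k_4(0)\|v\|_{\mathscr{L}^2}$. Hence $T_x$ is a bounded endomorphism of $H^0$ of norm $q(x):=k_4(0)\,k_F(x)=O(|x|)$, so $q(x)<1$ on a ball about $x=0$; moreover one integration by parts — the boundary flux being killed by the confining assumptions \eqref{pot top}, \eqref{boundary condition} and the integrability \eqref{int 1} — shows $(F_x1,1)=0$, so $T_x1=L_{-\epsilon}^{-1}F_x1\in H^0$ with $\|T_x1\|_{H^0}=O(|x|)$. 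Then
\[
\nu_{x,\epsilon}-1=\sum_{n\ge1}T_x^{\,n}1=\sum_{m\ge0}T_x^{\,m}(T_x1)
\]
converges in $H^0$, with $\|\nu_{x,\epsilon}-1\|_{H^0}\le(1-q(x))^{-1}\|T_x1\|_{H^0}=O(|x|)$, so $\nu_{x,\epsilon}-1\in\mathscr{L}^2\cap H^0$; and $\nu_{x,\epsilon}-1-T_x1=T_x(\nu_{x,\epsilon}-1)$ has $H^0$-norm at most $q(x)\|\nu_{x,\epsilon}-1\|_{H^0}=O(|x|^2)$, which is \eqref{pert}.

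\emph{The limit solves \eqref{stat}.} By construction $\nu_{x,\epsilon}=1+T_x\nu_{x,\epsilon}$, so $g:=\nu_{x,\epsilon}-1=L_{-\epsilon}^{-1}F_x\nu_{x,\epsilon}$ lies in the range of $L_{-\epsilon}^{-1}$, hence in $D$; since $(F_x\nu_{x,\epsilon},1)=(F_x1,1)+(F_xg,1)=0$ by the same divergence argument, the identity $L_{-\epsilon}L_{-\epsilon}^{-1}=P_H$ gives $L_{-\epsilon}g=F_x\nu_{x,\epsilon}$. For $u\in D$, writing $L_{x,\epsilon}u=L_{0,\epsilon}u-\nabla_y(U(x)-U(0))\cdot\nabla u$, using stationarity of $\rho_0$ under $L_{0,\epsilon}$ (so $(1,L_{0,\epsilon}u)=0$, cf. \eqref{invariance}), \eqref{adj} in the form $(g,L_{0,\epsilon}u)=(L_{-\epsilon}g,u)$, and the single integration-by-parts identity $(w,\nabla_y(U(x)-U(0))\cdot\nabla u)=(u,F_xw)$ valid for $w\in\{1\}\cup H^0$, one obtains
\[
(\nu_{x,\epsilon},L_{x,\epsilon}u)=-(u,F_x1)+(L_{-\epsilon}g,u)-(u,F_xg)=(u,\,L_{-\epsilon}g-F_x\nu_{x,\epsilon})=0 .
\]

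\emph{Uniqueness.} The identities just used run in reverse: if $\mu\in\mathscr{L}^2$ solves \eqref{stat}, set $c:=(\mu,1)$ and $h:=\mu-c\,1$; \emph{granting} that $h\in H^0\cap D$, the vanishing of $(u,L_{-\epsilon}h-cF_x1-F_xh)$ over the dense set $u\in D$ forces $(\mathrm{id}-T_x)h=c\,T_x1$, whence $h=c\,(\nu_{x,\epsilon}-1)$ by invertibility of $\mathrm{id}-T_x$ on $H^0$ (as $q(x)<1$), i.e. $\mu=c\,\nu_{x,\epsilon}$; the choice $c=1$ is the normalisation making $\rho_0\nu_{x,\epsilon}$ a probability density. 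The hard part will be precisely this regularity input — that a bare $\mathscr{L}^2$ weak solution of \eqref{stat} lies, up to an additive constant, in $H^0\cap D$. This is an elliptic / a-priori-estimate statement of the same nature as the pseudo-inverse bounds \eqref{second bound}--\eqref{first bound} (equivalently, ergodicity of the non-degenerate confined diffusion \eqref{med} with the probe at small $x$), and I would handle it alongside the construction of $L_{x,\epsilon}^{-1}$ in the Appendix.
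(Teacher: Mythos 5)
Your construction of the series, the $H^0$-contraction bound $q(x)=O(|x|)$, the resulting membership $\nu_{x,\epsilon}-1\in H^0\cap\mathscr{L}^2$, and the tail estimate giving \eqref{pert} coincide with the paper's argument (the paper uses $k_S(x)=k_4(x)k_F(x)$ and the Poincar\'e inequality in exactly this way). Your verification of \eqref{stat} is organized differently --- you pass to the limit first, use the fixed-point identity $\nu=1+T_x\nu$ together with $L_{-\epsilon}L_{-\epsilon}^{-1}=P_H$ and the observation $(F_x\nu,1)=0$, whereas the paper telescopes the partial sums inside the bilinear form $B_{x,\epsilon}$ and then extends by density --- but both rest on the same integration-by-parts identity $B_{x,\epsilon}(v,u)-B_{0,\epsilon}(v,u)=-(F_xv,u)$, and your version is sound provided you justify that identity for $u\in D$ (the paper does this by first taking $u$ smooth and compactly supported and then using the density Lemma together with the continuity bound on $B_{x,\epsilon}$; you should do the same rather than integrate by parts directly against a general $u\in D$).

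The genuine gap is in uniqueness. You reduce it to the claim that an arbitrary $\mathscr{L}^2$ weak solution of \eqref{stat} lies, up to a constant, in $H^0\cap D$, and you defer that claim as ``an elliptic a-priori estimate of the same nature as the pseudo-inverse bounds.'' It is not: the bounds \eqref{second bound}--\eqref{first bound} control $L_{x,\epsilon}^{-1}$ applied to given data, they say nothing about the regularity of a weak solution of the \emph{adjoint} (stationary Kolmogorov) equation, and supplying that regularity would require a separate elliptic-regularity argument that the paper never develops. The point is that this input is unnecessary. The paper's trick is to put all the regularity on the \emph{test function}: given a second solution $\tilde\rho_{x,\epsilon}$ with $\tilde\rho_{x,\epsilon}/\rho_0\in\mathscr{L}^2$, choose $\lambda$ so that $\nu:=\tilde\rho_{x,\epsilon}/\rho_0+\lambda\,\rho_{x,\epsilon}/\rho_0\perp 1$ and test both stationarity identities against the single function $v:=L_{\epsilon}^{-1}\nu$, which lies in $D$ by construction of the pseudo-inverse. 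Then $(\nu,L_{x,\epsilon}v)=0$ by hypothesis, while
\[
(\nu,L_{x,\epsilon}v)\;\geq\;(\nu,L_{\epsilon}v)-k_1(x)\,\|\nu\|_{\mathscr{L}^2}\|v\|_{H^0}\;\geq\;\bigl(1-k_1(x)k_4(x)\bigr)\|\nu\|_{\mathscr{L}^2}^2,
\]
which is strictly positive for small $x$ unless $\nu=0$. No regularity of $\tilde\rho_{x,\epsilon}$ beyond square-integrability of $\tilde\rho_{x,\epsilon}/\rho_0$ is used. You should replace your deferred regularity step by this duality argument; as written, your uniqueness proof is incomplete and rests on a lemma you cannot easily obtain from the machinery available.
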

Therefore $\rho_{x,\epsilon}:=\rho_0 \nu_{x,\epsilon}$ is the unique stationary density.\\
A proof of these statements can be found in the Appendix, subsection \eqref{proof}.
\section{The stiffness matrix}\label{stifm}
The statistical force $\mathscr{F}$ exerted on the probe when at position $x$ (in a neighborhood of the origin) and at driving $\epsilon$ is given by
\begin{equation*}
\mathscr{F}_i(x,\,\epsilon) := -\left\langle \frac{\partial U}{\partial x^i}(x)\right\rangle_{x,\,\epsilon}
\end{equation*}
That force is zero for $x=0$, due to the rotational symmetry of the system. To estimate the stability of the origin we introduce 
the stiffness matrix (at the origin), defined as
\begin{equation*}
{\cal K}_{ij}(\epsilon):=-\frac{\partial \mathscr{F}_j}{\partial x^i}(0,\epsilon)
\end{equation*}
It is not hard to show that rotational and reflection symmetry\footnote{Note that plane reflection switches the direction of the nonequilibrium force $\epsilon F \to -\epsilon F$. So formally, such a reflection switches the sign of the driving parameter $\epsilon$.} imply that the stiffness matrix takes the form
\begin{equation} \label{stm2}
{\cal K}(\epsilon)=\begin{pmatrix} 
m(\epsilon) & a(\epsilon) \\
-a(\epsilon) & m(\epsilon)
\end{pmatrix}
\end{equation}
where moreover $\forall \epsilon \in \mathbb{R}$
\begin{equation} \label{reflection law}
m(\epsilon)=m(-\epsilon), \qquad a(\epsilon)=-a(-\epsilon).
\end{equation}

\begin{thm}
The statistical force $x\mapsto \mathscr{F}(x,\epsilon)$ is differentiable at $x=0$ and the stiffness matrix equals
\begin{equation}
\label{stm}
{\cal K}_{ij}(\epsilon)=\left\langle \frac{\partial^2 U}{ \partial x^i\partial x^j}(0)\right\rangle_{0} -\beta \left(\frac{\partial U}{\partial x^i}(0),\,\frac{\partial U}{\partial x^j}(0)\right)+ \beta \left(\frac{\partial U}{\partial x^i}(0),\, \epsilon\Omega L_{\epsilon}^{-1}\frac{\partial U}{\partial x^j}(0)\right)
\end{equation}
or equivalently
\begin{equation} \label{stm3}
{\cal K}_{ij}(\epsilon)=\left\langle \frac{\partial^2 U}{ \partial x^i\partial x^j}(0)\right\rangle_{0}- \beta \left(\frac{\partial U}{\partial x^i}(0),\, L L_{\epsilon}^{-1}\frac{\partial U}{\partial x^j}(0)\right)
\end{equation}
\end{thm}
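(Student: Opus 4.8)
The plan is to differentiate $\mathscr{F}_i(x,\epsilon) = -\langle \partial_{x^i} U(x)\rangle_{x,\epsilon} = -\int \id y\, \rho_0(y)\,\nu_{x,\epsilon}(y)\, \partial_{x^i}U(x,y)$ with respect to $x^j$ at $x=0$, using the perturbative expansion \eqref{pert} of the stationary density, $\nu_{x,\epsilon} = 1 + L_{-\epsilon}^{-1}F_x 1 + O(x^2)$. Writing $\mathscr{F}_i(x,\epsilon) = -(\nu_{x,\epsilon},\partial_{x^i}U(x))$ in the $\mathscr{L}^2$ inner product, the product rule gives two contributions to $\partial_{x^j}\mathscr{F}_i(0,\epsilon)$: one from differentiating $\partial_{x^i}U(x)$ while holding $\nu$ at its value $\nu_{0,\epsilon}=1$, which yields $-\langle \partial_{x^i}\partial_{x^j}U(0)\rangle_0$; and one from differentiating $\nu_{x,\epsilon}$ while holding the force at $\partial_{x^i}U(0)$, which requires computing $\partial_{x^j}\big|_{x=0}\big(L_{-\epsilon}^{-1}F_x 1\big)$. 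Since $F_0 = 0$ (the operator $F_x$ is built from $U(x)-U(0)$), the $O(x^2)$ remainder in \eqref{pert} genuinely does not contribute to the first derivative, and it suffices to differentiate $L_{-\epsilon}^{-1}F_x 1$.

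For that term, I would compute $\partial_{x^j}\big|_{0}F_x 1$ directly from the definition of $F_x$. Using $F_x u = \beta L(U(0)-U(x))u - \nabla(U(x)-U(0))\cdot\nabla u$ with $u=1$, the gradient term is $-\nabla_y(U(x)-U(0))\cdot \nabla_y 1 = 0$, so $F_x 1 = \beta L(U(0)-U(x))$; but one must be careful because $L$ here means: multiply by the function $U(0,\cdot)-U(x,\cdot)$ and then apply the generator — i.e. $F_x 1 = \beta\, \ell_{0,0}\big((U(0)-U(x))\cdot 1\big)$ where the ``$\cdot 1$'' makes the multiplication operator explicit. Differentiating in $x^j$ at $x=0$ gives $\partial_{x^j}\big|_0 F_x 1 = -\beta\, \ell_{0,0}\big(\partial_{x^j}U(0,\cdot)\big) = -\beta L\,\partial_{x^j}U(0)$ (legitimate by remark 1 after the definition of $F_x$, invoking \eqref{smooth} and \eqref{lip}). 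Hence $\partial_{x^j}\big|_0\big(L_{-\epsilon}^{-1}F_x 1\big) = -\beta L_{-\epsilon}^{-1} L\, \partial_{x^j}U(0)$, and therefore
\begin{equation*}
{\cal K}_{ij}(\epsilon) = -\partial_{x^j}\mathscr{F}_i(0,\epsilon) = \Big\langle \frac{\partial^2 U}{\partial x^i\partial x^j}(0)\Big\rangle_0 - \beta\Big(\frac{\partial U}{\partial x^i}(0),\, L_{-\epsilon}^{-1}L\,\frac{\partial U}{\partial x^j}(0)\Big).
\end{equation*}
To reach \eqref{stm3} as written, use the adjoint relation \eqref{pseudo adj}, $(u,L_{-\epsilon}^{-1}v)=(L_\epsilon^{-1}u,v)$, together with $(u,Lv)=(Lu,v)$ for $u,v\in H^0$ (self-adjointness of $L$, from item 4 of the operator proposition); this moves both operators onto the other slot and symmetrizes indices appropriately, giving $-\beta(\partial_{x^i}U(0), L L_\epsilon^{-1}\partial_{x^j}U(0))$. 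For the equivalent form \eqref{stm}, decompose $L_\epsilon = L + \epsilon\Omega$ so that $L L_\epsilon^{-1} = (L_\epsilon - \epsilon\Omega)L_\epsilon^{-1} = P_H - \epsilon\Omega L_\epsilon^{-1}$ (using $L_\epsilon L_\epsilon^{-1}=P_H$ from the pseudo-inverse proposition); then $\beta(\partial_{x^i}U(0), P_H \partial_{x^j}U(0)) = \beta(\partial_{x^i}U(0),\partial_{x^j}U(0)) $ after noting the projection drops the constant part, i.e. it equals the force-force term $\beta(\partial_{x^i}U(0),\partial_{x^j}U(0))$ appearing in \eqref{stm}, and the remaining $+\beta\epsilon(\partial_{x^i}U(0),\Omega L_\epsilon^{-1}\partial_{x^j}U(0))$ is exactly the nonequilibrium term.

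The main obstacle I anticipate is \emph{justifying the differentiation under the integral sign} — i.e. that $x\mapsto \mathscr{F}(x,\epsilon)$ is genuinely differentiable at $0$ with derivative given by the formal term-by-term calculation above. This is where the technical assumptions \eqref{int 1}--\eqref{lip}, the bounds $k_1,k_2 = O(|x|)$, the operator-norm bounds \eqref{second bound}--\eqref{first bound} on $L_{x,\epsilon}^{-1}$, and the estimate $k_F(x) = O(|x|)$ on $\|F_x\|_{H^0\to\mathscr{L}^2}$ all earn their keep: one must show the Neumann-type series \eqref{stat dens} converges uniformly near $x=0$, that $x\mapsto \nu_{x,\epsilon}$ is differentiable in $\mathscr{L}^2$ (or at least that the relevant pairing is), and that the $O(x^2)$ error in \eqref{pert} is uniform enough to be discarded. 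I would handle this by writing $\mathscr{F}_i(x,\epsilon) - \mathscr{F}_i(0,\epsilon)$ exactly, splitting off the linear-in-$x$ piece, and bounding the remainder using Poincaré \eqref{Poincare inequality} plus the Lipschitz-type control \eqref{lip}; the rest of the argument is then the essentially algebraic manipulation of inner products and the operator identities sketched above.
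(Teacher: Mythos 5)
Your proposal follows essentially the same route as the paper: expand $\nu_{x,\epsilon}$ via \eqref{pert}, use $F_x1=\beta L(U(0)-U(x))$, extract the part linear in $x$, and pass between \eqref{stm3} and \eqref{stm} with the adjoint identities \eqref{pseudo adj}, \eqref{adj}, the decomposition $L_\epsilon=L+\epsilon\Omega$, $L_\epsilon L_\epsilon^{-1}=P_H$, and $\langle \partial_{x^j}U(0)\rangle_0=0$. Two caveats. First, on the point you yourself flag as the main obstacle: the paper Taylor-expands $U(0)-U(x)$ only \emph{after} moving $L_{-\epsilon}^{-1}$ and $L$ onto the other slot, so the remainder only needs the sup-norm bound $\|U(0)-U(x)+\partial_{x^j}U(0)\,x^j\|_\infty=O(x^2)$ coming from Lagrange's theorem and $k_1$. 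Your order --- differentiate $F_x1$ first, i.e.\ establish $\|L(U(0)-U(x))+L\,\partial_{x^j}U(0)\,x^j\|_{\mathscr{L}^2}=o(|x|)$, then apply the bounded operator $L_{-\epsilon}^{-1}$ --- requires second-order control of $\ell_{0,0}$ on the Taylor remainder of $U$, which \eqref{lip} (controlling only $k_2(x)=O(|x|)$) does not directly supply; the reordering is precisely what lets the stated hypotheses suffice. Second, you set ${\cal K}_{ij}=-\partial_{x^j}\mathscr{F}_i$, the transpose of the paper's definition ${\cal K}_{ij}=-\partial_{x^i}\mathscr{F}_j$; since ${\cal K}$ has a nonzero antisymmetric part this is not innocuous, but your adjoint step $(f_i,L_{-\epsilon}^{-1}Lf_j)=(f_j,LL_{\epsilon}^{-1}f_i)$ also transposes the indices, and the two swaps cancel so that the end result is indeed \eqref{stm3} --- ``symmetrizes indices appropriately'' is hiding exactly this cancellation and should be made explicit.
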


That expression distinguishes between the ``equilibrium'' and  the ``nonequilibrium'' contributions to the stiffness matrix as announced in \eqref{difnon}. The relation of the expression \eqref{stm} to the expression \eqref{inflim} (involving the time-integral) stems from the semigroup-identity $L_{\epsilon}^{-1}v=\int_0^{+\infty}S(t)v\id t$
which holds for all $v \in H$.
\begin{proof}
One has
\begin{eqnarray}
&& \left\langle \frac{\partial U}{\partial x^i}(x)\right\rangle_x -\left\langle \frac{\partial U}{\partial x^i}(0)\right\rangle_0 =\left\langle \left(\frac{\rho_{x,\epsilon}}{\rho_0}-1\right)\frac{\partial U}{\partial x^i}(x)\right\rangle_0 +\left\langle \frac{\partial U}{\partial x^i}(x)- \frac{\partial U}{\partial x^i}(0)\right\rangle_0\nonumber\\
&&=\left(\frac{\rho_{x,\epsilon}}{\rho_0}-1,\frac{\partial U}{\partial x^i}(x)\right)+\left(\frac{\partial U}{\partial x^i}(x)- \frac{\partial U}{\partial x^i}(0),1\right) \label{interm}
\end{eqnarray}
For the first term \eqref{interm}, \eqref{pert} and \eqref{lip} imply that
\begin{eqnarray}
&&\left(\frac{\rho_{x,-\epsilon}}{\rho_0}-1,\frac{\partial U}{\partial x^i}(x)\right)=\left(L_{-\epsilon}^{-1}F_x 1,\frac{\partial U}{\partial x^i}(0)\right)+O(x^2) \label{expl1}\\
&& = \beta\left(L_{-\epsilon}^{-1}L(U(0)-U(x)),\,\frac{\partial U}{\partial x^i}(0)\right)+O(x^2) \nonumber\\
&& = \beta\left(L(U(0)-U(x)),\,L_{\epsilon}^{-1}\frac{\partial U}{\partial x^i}(0)\right)+O(x^2) \label{expl3}\\
&& = \beta\left(U(0)-U(x),\,LL_{\epsilon}^{-1}\frac{\partial U}{\partial x^i}(0)\right)+O(x^2) \label{expl4}\\
&& = -\beta\left(\frac{\partial U}{\partial x^j}(0),\,LL_{\epsilon}^{-1}\frac{\partial U}{\partial x^i}(0)\right)x^j+O(x^2) \label{expl2}\\
&& = -\beta\left(\frac{\partial U}{\partial x^j}(0),\,\frac{\partial U}{\partial x^i}(0)\right)x^j+\beta\left(\frac{\partial U}{\partial x^j}(0),\,\epsilon\Omega L_{\epsilon}^{-1}\frac{\partial U}{\partial x^i}(0)\right)x^j+O(x^2) \nonumber
\end{eqnarray}
where in \eqref{expl3} we use that, by the RHS of \eqref{int 1}, $\frac{\partial U}{\partial x}(x=0)$ is in $\mathscr{L}^2$ so that we can apply \eqref{pseudo adj}. In \eqref{expl4} we use that $L_{\epsilon}^{-1}$ maps to $D$ and \eqref{adj}. In \eqref{expl2} we use that by Lagrange's theorem
\[\left|U(0,y)-U(x,y)-\frac{\partial U}{\partial x^j}(0,y)x^j\right|=\left|\frac{\partial U}{\partial x^j}(x_y,y)-\frac{\partial U}{\partial x^j}(0,y)\right||x|\leq k_1(x_y)|x|\leq k_1(x)|x|\]
so that $\left\|U(0)-U(x)-\frac{\partial U}{\partial x^j}(0)\right\|_{\mathscr{L}^2}\leq \left\|U(0)-U(x)-\frac{\partial U}{\partial x^j}(0)\right\|_{\infty} =O(x^2)$.\\
Regarding the second term in \eqref{interm}, by \eqref{lip} we see that the map
\[\varphi:(0,1)\times M \to \mathbb{R}:(t,y)\mapsto \frac{1}{|x|t}\left(\frac{\partial U}{\partial x^i}(xt,y)- \frac{\partial U}{\partial x^i}(0,y)\right)\]
is bounded above by $q1$ for some $q>0$. Since this upper bound is integrable, the dominated convergence theorem implies that
\begin{equation*}
\lim_{t \to 0+}\frac{1}{t}\left\langle \frac{\partial U}{\partial x^i}(xt)-\frac{\partial U}{\partial x^i}(0)\right\rangle_{0} = \left\langle \frac{\partial^2 U}{ \partial x^k\partial x^i}(0)\right\rangle_{0}(x^k)+o(|x|).
\end{equation*}
Now one arrives at the desired expression by combining the two terms again.
\end{proof}

\section{Example: linear forces}\label{har}
As interludium  consider the example (introduced in \cite{njp}) where
\begin{enumerate}
\item The reservoir is $\mathbb{R}^2$, i.e., $R=\infty$.
\item The potentials are harmonic: $V_{ext}(y)=\frac{1}{2}k_1|y|^2$, $V_{pp}(y)=\frac{1}{2}k_2|y-x|^2$.
\item The driving is rigid: $F=(-y_2,y_1) \leftrightarrow \omega \equiv 1$.
\end{enumerate}
Note that while $1$ and $3$ are incompatible within the framework of this paper due to the constraint \eqref{bounded driving}, this example can be worked out in full details and it provides useful insight in what can be expected.\\
Under those circumstances, one can compute that the density
\begin{equation*}
\rho_{x,\epsilon}=\left(\frac{\beta (k_1+k_2)}{2\pi}\right) \exp\left(-\beta \frac{k_1+k_2}{2}(y-c_x)^2\right)
\end{equation*}
is stationary for
\[c_x = \frac{k_2}{(k_1+k_2)^2+\epsilon^2}\begin{bmatrix}
k_1+k_2& -\epsilon\\
\epsilon & k_1+k_2
\end{bmatrix}\begin{bmatrix}
x_1 \\
x_2
\end{bmatrix}.\]
Taking $x=(x_1,0)$, we can compute the components of the statistical force:
\begin{eqnarray*}
&& \mathscr{F}_1(x)=-\left\langle \frac{\partial V_{pp}}{\partial x_1}(x,y)\right\rangle_{x,\epsilon}= k_2\int_{\mathbb{R}^2}(y_1-x_1)\rho_{x,\epsilon}(y)\id y=k_2((c_x)_1-x_1)\int_{\mathbb{R}^2} \rho_{x,\epsilon}(y)\id y\\
&&=k_2((c_x)_1-x_1) = k_2\left[\frac{k_2(k_1+k_2)}{(k_1+k_2)^2 + \epsilon^2}-1\right]x_1 \\
&& \mathscr{F}_2(x)=-\left\langle \frac{\partial V_{pp}}{\partial x_2}(x,y)\right\rangle_{x,\epsilon}=k_2(c_x)_2=\left[\frac{k_2^2\epsilon}{(k_1+k_2)^2 + \epsilon^2}\right]x_1
\end{eqnarray*}
So we see that the radial force $f_1(x,\epsilon)$ becomes more center-directed as $|\epsilon|$ increases, saturating at a finite value $-k_2x_1$ as $|\epsilon| \to \infty$. A tangential component $f_2(x,\epsilon)$ is picked up due to the rotational force, yet this component vanishes again in the limit $\epsilon \to \infty$.\\
At the level of the stiffness matrix \eqref{stm2}, these calculations yield the following results for the on- and off-diagonal elements:
\[m(\epsilon)=-k_2\left[\frac{k_2(k_1+k_2)}{(k_1+k_2)^2 + \epsilon^2}-1\right], \qquad a(\epsilon)=\left[\frac{k_2^2\epsilon}{(k_1+k_2)^2 + \epsilon^2}\right].\]
We recognize the following qualitative properties:
\begin{enumerate}
\item $\epsilon \mapsto m(\epsilon)$ is a smooth increasing map and $\lim_{\epsilon \to \infty} m(\epsilon)=k_2$ exists.
\item $\epsilon \mapsto a(t)$ is smooth, zero at $\epsilon =0$ and zero as $\epsilon \to \infty$.
\item \eqref{inflim} is verified in this system (since $\frac{\partial^{2} V_{pp}}{\partial x^i \partial x^j}=k_2 \delta_{ij}$).
\end{enumerate}
The smoothness of these maps is a property which will be confirmed at the general level in Section \ref{smd}. The limiting behavior for large $|\epsilon|$ is again confirmed in appreciable generality as discussed in Section \ref{idl}. The positivity and monotonicity of $m$ does not hold in general. Positivity however holds for small $|\epsilon|$ when the probe exerts an unequivocally attractive or repulsive force on the medium and the profile $\omega \geq 0$, as proven in Section \ref{closetoeq}. Monotonicity is shown to hold for $\omega \equiv 1$, as shown in Section \ref{monot}.

\section{Smooth dependence on the driving strength}\label{smd}

\begin{lemma}
	$\forall u \in D$, $\forall m \in \mathbb{N}_0$ and $\forall \epsilon_0 \in \mathbb{R}$ we have
	\begin{equation} \label{smoothness 1}
	(L+\epsilon\Omega)^{-1}u =\sum_{k=0}^{m-1} (\epsilon_0-\epsilon)^k((L+\epsilon_0\Omega)^{-1}\Omega)^{k}(L+\epsilon_0\Omega)^{-1}u + \underbrace{(L-\epsilon\Omega)^{-1}((\epsilon_0-\epsilon) \Omega(L+\epsilon_0\Omega)^{-1})^{m} u}_{=:I_m(\epsilon)}
	\end{equation}
	and the $\mathscr{L}^2$-norm of $I_m(\epsilon)$ is $O(|\epsilon_0-\epsilon|^m)$.
\end{lemma}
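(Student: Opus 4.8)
The plan is to read \eqref{smoothness 1} as the $m$-fold iterate of the resolvent identity relating the two pseudo-inverses $(L+\epsilon\Omega)^{-1}$ and $(L+\epsilon_0\Omega)^{-1}$, to establish it by induction on $m$, and then to bound the tail $I_m(\epsilon)$ by chaining a few uniform operator-norm estimates. The structural facts I would lean on throughout are the ones recorded in the preceding propositions: each $L_\bullet^{-1}$ is a bounded operator $\mathscr{L}^2\to H^0\subset D$ that annihilates constants, so that $L_\bullet^{-1}P_H=L_\bullet^{-1}$, while $L_\bullet^{-1}L_\bullet w=P_Hw$ for $w\in D$ and $L_\bullet L_\bullet^{-1}=P_H$ on $\mathscr{L}^2$; and $\Omega$ maps $H^0$ boundedly into $H\subset\mathscr{L}^2$ — boundedly because $\|\Omega u\|_{\mathscr{L}^2}\le C\|\nabla u\|_{\mathscr{L}^2}=C\|u\|_{H^0}$ by \eqref{bounded driving}, and into $H$ because $\Omega 1=0$ together with \eqref{antisymmetry} forces $(1,\Omega u)=-(\Omega 1,u)=0$. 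Consequently every composition occurring in \eqref{smoothness 1} is well defined and the orthogonal projection $P_H$ acts as the identity on all the ranges that appear.

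For the base case $m=1$ I would apply $(L+\epsilon\Omega)^{-1}$ to the identity $(L_\epsilon-L_{\epsilon_0})(L+\epsilon_0\Omega)^{-1}u=(\epsilon-\epsilon_0)\,\Omega(L+\epsilon_0\Omega)^{-1}u$, which is legitimate since $(L+\epsilon_0\Omega)^{-1}u\in H^0\subset D$. Using $(L+\epsilon\Omega)^{-1}L_\epsilon\,(L+\epsilon_0\Omega)^{-1}u=(L+\epsilon_0\Omega)^{-1}u$ and $(L+\epsilon\Omega)^{-1}L_{\epsilon_0}\,(L+\epsilon_0\Omega)^{-1}u=(L+\epsilon\Omega)^{-1}P_Hu=(L+\epsilon\Omega)^{-1}u$, the left-hand side collapses to $(L+\epsilon_0\Omega)^{-1}u-(L+\epsilon\Omega)^{-1}u$, which gives
\[
(L+\epsilon\Omega)^{-1}u=(L+\epsilon_0\Omega)^{-1}u+(\epsilon_0-\epsilon)\,(L+\epsilon\Omega)^{-1}\Omega(L+\epsilon_0\Omega)^{-1}u,
\]
i.e. \eqref{smoothness 1} for $m=1$ with $I_1(\epsilon)=(L+\epsilon\Omega)^{-1}\left((\epsilon_0-\epsilon)\Omega(L+\epsilon_0\Omega)^{-1}\right)u$. (For the estimate below it is immaterial which sign convention is chosen for the leading resolvent in $I_1$: \eqref{neg def} holds for every value of the driving, so $\|(L\pm\epsilon\Omega)^{-1}\|_{\mathscr{L}^2\to\mathscr{L}^2}$ obeys the same bound, and by \eqref{pseudo adj} these two operators are in any case adjoints of one another.) For the inductive step I would substitute this first-order identity into the leftmost factor of $I_m(\epsilon)=(\epsilon_0-\epsilon)^m(L+\epsilon\Omega)^{-1}\left(\Omega(L+\epsilon_0\Omega)^{-1}\right)^m u$: the $(L+\epsilon_0\Omega)^{-1}$-part gives, after using $(L+\epsilon_0\Omega)^{-1}\left(\Omega(L+\epsilon_0\Omega)^{-1}\right)^m=\left((L+\epsilon_0\Omega)^{-1}\Omega\right)^m(L+\epsilon_0\Omega)^{-1}$, exactly the new $k=m$ term of the finite sum, while the $(L+\epsilon\Omega)^{-1}\Omega(L+\epsilon_0\Omega)^{-1}$-part reassembles into $I_{m+1}(\epsilon)$.

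The bound on $\|I_m(\epsilon)\|_{\mathscr{L}^2}$ then follows from two estimates uniform in $\epsilon$. From \eqref{neg def}, with $v=(L+\epsilon\Omega)^{-1}w$ one has $-\beta^{-1}\|v\|_{H^0}^2=(v,L_\epsilon v)=(v,P_Hw)=(v,w)$; combining $\|v\|_{H^0}^2\ge c^2\|v\|_{\mathscr{L}^2}^2$ from the Poincar\'e inequality \eqref{Poincare inequality} with Cauchy--Schwarz yields first $\|(L+\epsilon\Omega)^{-1}\|_{\mathscr{L}^2\to\mathscr{L}^2}\le\beta/c^2$ and then, feeding this back, $\|(L+\epsilon_0\Omega)^{-1}\|_{\mathscr{L}^2\to H^0}\le\beta/c$. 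Together with $\|\Omega\|_{H^0\to\mathscr{L}^2}\le C$ this gives $\|\Omega(L+\epsilon_0\Omega)^{-1}\|_{\mathscr{L}^2\to\mathscr{L}^2}\le C\beta/c$, and chaining through the $m+1$ factors of $I_m(\epsilon)$ produces $\|I_m(\epsilon)\|_{\mathscr{L}^2}\le (\beta/c^2)(C\beta/c)^m\,|\epsilon_0-\epsilon|^m\,\|u\|_{\mathscr{L}^2}$, which is $O(|\epsilon_0-\epsilon|^m)$ with a constant independent of $\epsilon$.

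The main obstacle I anticipate is purely the domain and projection bookkeeping in the base case: one must verify at each step that the intermediate function lies in the domain of the operator applied next — guaranteed by $L_\bullet^{-1}:\mathscr{L}^2\to H^0\subset D$ and $\Omega:H^0\to H$ — and that $P_H$ never does anything, which rests precisely on $L_\bullet^{-1}$ killing constants and having range inside $H$. Once this is in place the induction and the norm bound are routine; note that the $\epsilon$-independence of the constant in the remainder estimate comes for free, since both the Poincar\'e constant $c$ and the driving bound $C$ are independent of $\epsilon$, and it is this uniformity that will later justify differentiating the resulting series in $\epsilon$.
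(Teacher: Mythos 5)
Your proposal is correct and takes essentially the same route as the paper: the first-order resolvent identity $L_{\epsilon}^{-1}-L_{\epsilon_0}^{-1}=(\epsilon_0-\epsilon)L_{\epsilon}^{-1}\Omega L_{\epsilon_0}^{-1}$ iterated by induction on $m$, followed by an $\epsilon$-uniform bound on the leading pseudo-inverse in $I_m(\epsilon)$ (you derive the constants $\beta/c^2$, $\beta/c$ and $C$ explicitly from \eqref{neg def}, \eqref{Poincare inequality} and \eqref{bounded driving}, where the paper simply invokes the bound $k_3(0)$ of \eqref{second bound}). The only blemish is the claimed inclusion $H^0\subset D$, which is neither asserted in the paper nor needed: what your domain bookkeeping actually requires is that the range of each pseudo-inverse lies in $D\cap H^0$ and that $L_{\bullet}^{-1}$ annihilates constants, which is exactly what the propositions of Section \ref{pse} supply, so the argument is unaffected.
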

\underline{Remark}: recall that $(L+\epsilon\Omega)^{-1}$ maps to $H^0$ which is in the domain of $\Omega$. Therefore the domain of the composition $\Omega(L+\epsilon\Omega)^{-1}$ is the entire $\mathscr{L}^2$-space and the expressions in the lemma above make sense without limitation.
\begin{proof}
	We prove \eqref{smoothness 1} by induction. The basis step is verified by
	\begin{eqnarray}
	&& \left\{(L+\epsilon\Omega)^{-1}-(L+\epsilon_0\Omega)^{-1}\right\} u = (L+\epsilon\Omega)^{-1}\left\{(L+\epsilon_0\Omega)-(L+\epsilon\Omega)\right\}(L+\epsilon_0\Omega)^{-1} u \nonumber\\
	&& =(\epsilon_0-\epsilon)(L+\epsilon\Omega)^{-1}\Omega(L+\epsilon_0\Omega)^{-1} u.
	\end{eqnarray}
	For the induction-step we have
	\begin{eqnarray*}
		&& I_m(\epsilon) -(\epsilon_0-\epsilon)^m(L+\epsilon_0\Omega)^{-1}(\Omega(L+\epsilon_0\Omega)^{-1})^{m+1}u \\
		&& = (\epsilon_0-\epsilon)^m\left\{(L+\epsilon\Omega)^{-1}-(L+\epsilon_0\Omega)^{-1}\right\}(\Omega(L+\epsilon_0\Omega)^{-1})^{m}u\nonumber \\
		&& = (\epsilon_0-\epsilon)^m(L+\epsilon\Omega)^{-1}\left\{(L+\epsilon_0\Omega)-(L+\epsilon\Omega)\right\}(L+\epsilon_0\Omega)^{-1}(\Omega(L+\epsilon_0\Omega)^{-1})^{m}u\\
		&& =(\epsilon_0-\epsilon)^{m+1}(L+\epsilon\Omega)^{-1}(\Omega(L+\epsilon_0\Omega)^{-1})^{m+1}u=I_{m+1}(\epsilon).
	\end{eqnarray*}
	The $\mathscr{L}^2$-norm of $I_{m}(\epsilon)$ can be bounded as
	\begin{equation*}
	\left\|I_{m}(\epsilon)\right\|\leq|\epsilon_0-\epsilon|^m \underbrace{\left\|(L+\epsilon\Omega)^{-1}\right\|}_{\leq k_3(0) \text{ by \eqref{second bound}}}\left\|\Omega((L+\epsilon_0\Omega)^{-1}\Omega)^mu\right\| \leq \text{ constant }|\epsilon_0-\epsilon|^m
	\end{equation*}
\end{proof}
Recall the expression \eqref{stm3} for the stiffness matrix.
writing $f_i:=\frac{\partial U}{\partial x^i}(0)$ and applying the expansion \eqref{smoothness 1}, we then get
\begin{prop} \label{smoothness}(smoothness of the stiffness matrix)
\newline
\begin{equation} \label{expres}
{\cal K}_{ij}(\epsilon)={\cal K}_{ij}(\epsilon_0)-\beta\sum_{k=1}^{m-1} (\epsilon_0-\epsilon)^k\left(f_i,\,L[L_{\epsilon_0}^{-1}\Omega]^{k}L_{\epsilon_0}^{-1}f_j\right) -\beta \left(f_i,\,LL_{\epsilon}^{-1}((\epsilon_0-\epsilon) \Omega L_{\epsilon_0}^{-1})^{m}f_j\right)
\end{equation}
implying that ${\cal K}_{ij}$ is a smooth function of $\epsilon$.
\end{prop}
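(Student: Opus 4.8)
The plan is to move the entire $\epsilon$-dependence of ${\cal K}_{ij}$ onto the pseudo-inverse and then substitute the resolvent expansion \eqref{smoothness 1}. Writing $f_i:=\frac{\partial U}{\partial x^i}(0)$, assumption \eqref{int 2} ensures $f_i\in\mathscr{L}^2$, so $L_\epsilon^{-1}f_j$ is defined, lies in $D$, and $LL_\epsilon^{-1}f_j\in\mathscr{L}^2$; the first term $\langle\frac{\partial^2U}{\partial x^i\partial x^j}(0)\rangle_0$ of \eqref{stm3} carries no $\epsilon$-dependence, so only $-\beta(f_i,LL_\epsilon^{-1}f_j)$ needs analysis. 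I would apply \eqref{smoothness 1} with $u=f_j$ — permissible for a generic $u\in\mathscr{L}^2$ by the remark following the lemma, since $L_{\epsilon_0}^{-1}$ maps into $H^0\subset\mathrm{dom}\,\Omega$ — and then distribute the operator $L$ (each summand lying in $D$) and the bounded functional $(f_i,\,\cdot\,)$ over the finite sum. The $k=0$ term gives $-\beta(f_i,LL_{\epsilon_0}^{-1}f_j)$, which together with the constant term reconstitutes ${\cal K}_{ij}(\epsilon_0)$; the terms $k=1,\dots,m-1$ produce precisely the polynomial in $(\epsilon_0-\epsilon)$ of \eqref{expres}; and the remainder $I_m(\epsilon)=L_\epsilon^{-1}w_m$, with $w_m:=\big((\epsilon_0-\epsilon)\Omega L_{\epsilon_0}^{-1}\big)^mf_j$, produces its last term. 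This is exactly \eqref{expres}, valid for every $m\in\mathbb{N}_0$ and every $\epsilon_0\in\mathbb{R}$.

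Next I would estimate the remainder in \eqref{expres}. From \eqref{bounded driving} one has $\|\Omega v\|_{\mathscr{L}^2}\le C\|v\|_{H^0}$, which combined with \eqref{first bound} gives $\|\Omega L_{\epsilon_0}^{-1}\|_{\mathscr{L}^2\to\mathscr{L}^2}\le C\,k_4(0)$ and hence $\|w_m\|_{\mathscr{L}^2}\le (C\,k_4(0))^m|\epsilon_0-\epsilon|^m\|f_j\|_{\mathscr{L}^2}$. Using $L=L_\epsilon-\epsilon\Omega$ together with $L_\epsilon L_\epsilon^{-1}=P_H$ one has $LL_\epsilon^{-1}w_m=P_Hw_m-\epsilon\,\Omega L_\epsilon^{-1}w_m$, whence $\|LL_\epsilon^{-1}w_m\|_{\mathscr{L}^2}\le(1+|\epsilon|\,C\,k_4(0))\|w_m\|_{\mathscr{L}^2}$, so the remainder term $-\beta(f_i,LL_\epsilon^{-1}w_m)$ is $O(|\epsilon_0-\epsilon|^m)$ as $\epsilon\to\epsilon_0$. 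Thus near every $\epsilon_0$ the map ${\cal K}_{ij}$ agrees with a polynomial of degree $m-1$ in $(\epsilon_0-\epsilon)$ up to $O(|\epsilon_0-\epsilon|^m)$, the coefficients being $c_k(\epsilon_0):=-\beta(f_i,L[L_{\epsilon_0}^{-1}\Omega]^kL_{\epsilon_0}^{-1}f_j)$; already the case $m=2$ shows that ${\cal K}_{ij}$ is differentiable at each $\epsilon_0$, with ${\cal K}_{ij}'(\epsilon_0)=\beta(f_i,LL_{\epsilon_0}^{-1}\Omega L_{\epsilon_0}^{-1}f_j)$.

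To upgrade differentiability to $C^\infty$ I would argue at the operator level. By \eqref{stm} one may write ${\cal K}_{ij}(\epsilon)=\mathrm{const}+\beta\,\epsilon\,(f_i,\Phi(\epsilon)f_j)$ with $\Phi(\epsilon):=\Omega L_\epsilon^{-1}\in{\cal B}(\mathscr{L}^2)$. Applying $\Omega$ to the $m=1$ instance of Lemma 5.1 yields the resolvent-type identity $\Phi(\epsilon)-\Phi(\epsilon_0)=(\epsilon_0-\epsilon)\,\Phi(\epsilon)\,\Phi(\epsilon_0)$, which together with the uniform bound $\|\Phi(\epsilon)\|\le C\,k_4(0)$ makes $\Phi$ norm-Lipschitz, then norm-differentiable with $\Phi'=-\Phi^2$, and — iterating the product rule — a $C^\infty$ map into ${\cal B}(\mathscr{L}^2)$, just as for an ordinary resolvent. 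Composing with the bounded linear functional $T\mapsto(f_i,Tf_j)$ and multiplying by $\epsilon$, I conclude ${\cal K}_{ij}\in C^\infty(\mathbb{R})$, with $k$-th derivative $(-1)^{k+1}k!\,\beta\,(f_i,L[L_{\epsilon_0}^{-1}\Omega]^kL_{\epsilon_0}^{-1}f_j)$ in accordance with \eqref{expres}. Equivalently, one could invoke the elementary criterion that a function admitting at every point, to all orders, a Taylor polynomial whose coefficients depend continuously on the base point is $C^\infty$, the requisite continuity of $\epsilon_0\mapsto c_k(\epsilon_0)$ being again the $m=1$ content of Lemma 5.1.

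I do not anticipate a genuine obstacle here: the real analytic input — the convergence of the resolvent expansion and the norm bounds \eqref{second bound}, \eqref{first bound} — is already supplied by Lemma 5.1 and the propositions preceding it. Two places call for a little care. First, the bookkeeping of domains: one must check that each composition of $L$, $\Omega$, $L_\epsilon^{-1}$ appearing in \eqref{expres} takes values in a space on which the next operation is defined, which reduces entirely to $L_\epsilon^{-1}:\mathscr{L}^2\to H^0\subset D$ and to $\Omega$ being bounded $H^0\to\mathscr{L}^2$. Second — and this is the one spot where naive reasoning would fail, since "polynomial plus $O(|\epsilon_0-\epsilon|^m)$ for every $m$" does not by itself imply smoothness — the passage from the family of finite-order expansions \eqref{expres} to genuine $C^\infty$ regularity; this is closed either by the operator-level argument above or by the smooth dependence of $c_k(\epsilon_0)$ on $\epsilon_0$, which is once more the $m=1$ case of Lemma 5.1.
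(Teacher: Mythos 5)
Your derivation of \eqref{expres} is exactly the paper's: write ${\cal K}_{ij}$ via \eqref{stm3}, substitute the resolvent expansion \eqref{smoothness 1} with $u=f_j$, and control the remainder through \eqref{bounded driving} together with \eqref{second bound}--\eqref{first bound}. The paper stops there and simply asserts smoothness, whereas you correctly flag that a pointwise degree-$(m-1)$ expansion at every $\epsilon_0$ does not by itself give $C^\infty$; your operator-level closure --- $\Phi(\epsilon)=\Omega L_\epsilon^{-1}$ obeys $\Phi(\epsilon)-\Phi(\epsilon_0)=(\epsilon_0-\epsilon)\Phi(\epsilon)\Phi(\epsilon_0)$ with an $\epsilon$-uniform norm bound, hence is norm-$C^\infty$ like an ordinary resolvent --- is a correct and clean way to supply the step the paper leaves implicit. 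I see no gaps.
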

In particular, when we set $\epsilon_0=0$. We get
\begin{equation} \label{special case}
{\cal K}_{ij}(\epsilon)={\cal K}_{ij}(0)-\beta\sum_{k=1}^{m-1} (-\epsilon)^k\left(f_i,\,[\Omega L^{-1}]^{k}f_j\right) -\beta \left(f_i,\,LL_{\epsilon}^{-1}((-\epsilon) \Omega L^{-1})^{m}f_j\right)
\end{equation}

\section{Vector fields and rotation invariance}
\begin{definition}
A radial function $\varphi:M \to \mathbb{R}$ is of the form $y \mapsto u(|y|)$ for some map $u:(0,+\infty) \to \mathbb{R}$. A vectorfield $V:M \to \mathbb{R}^2:y\mapsto(V_1(y),\,V_2(y))$ is called rotation invariant if a.e.
\begin{equation} \label{example}
\begin{cases}
& V_1(y)=\frac{1}{|y|}\left(V_r(y)y_1-V_{\varphi}(y)y_2\right)=V_r(y)\hat{y}_1-V_{\varphi}(y)\hat{y}_2\\
& V_2(y)=\frac{1}{|y|}\left(V_r(y)y_2+V_{\varphi}(y)y_1\right)=V_r(y)\hat{y}_2+V_{\varphi}(y)\hat{y}_1.
\end{cases}
\end{equation}
where $V_r$ and $V_{\varphi}$ are radial functions. When $V_{\varphi}$ vanishes in that expression, $V$ is a radial vectorfield.
\end{definition}
When we write e.g. $V \in \mathscr{L}^2$ or $V_i \in \mathscr{L}^2$ it is understood that all components of $V$ are in $\mathscr{L}^2$. The same agreement holds for $V \in H^0$, $V \in D$.\\
Let  $V,\,W$ be rotation invariant vector fields, then $\Omega V_{i}=\omega I_i^jV_j$, where
\begin{equation} \label{complex form}
I_i^j = \begin{pmatrix}
I_1^1 & I_1^2 \\
I_2^1 & I_2^2
\end{pmatrix} =
\begin{pmatrix}
0 & -1 \\
1 & 0
\end{pmatrix}
\end{equation}
Note that when $V \in D$, then $\Omega V\in D$.
When $V\in \mathscr{L}^2$ is rotation invariant and of the form \eqref{example}, Proposition \ref{equiv} (in the Appendix) implies that 
\begin{equation*}
\begin{cases}
& (L^{-1}V_1)(y)=(\tilde{R}V_r)(y)\hat{y}_1-(\tilde{R}V_{\varphi})(y)\hat{y}_2\\
& (L^{-1}V_2)(y)=(\tilde{R}V_r)(y)\hat{y}_2+(\tilde{R}V_{\varphi})(y)\hat{y}_1.
\end{cases}
\end{equation*}
where $\tilde{L}_{\epsilon}=L_{\epsilon}-\frac{1}{\beta r^2}$ maps radial functions to radial functions. So
$L_{\epsilon}^{-1}V:=(L_{\epsilon}^{-1}V_1,\,L_{\epsilon}^{-1}V_2)$ is rotation invariant as well. Likewise, when $V \in D$, $L_{\epsilon}V:=(L_{\epsilon}V_1,\,L_{\epsilon}V_2)$ is rotational invariant. 
\\
When $V \in \mathscr{L}^2$ and the rotation profile $r \mapsto \omega(r)\equiv \omega$ is constant, we have ($\forall \epsilon \in \mathbb{R}$)
\begin{equation*}
[\Omega,L_{\epsilon}^{-1}]V:=\Omega L_{\epsilon}^{-1}V - L_{\epsilon}^{-1}\Omega V=0
\end{equation*}
Moreover, when $V\in D$
\begin{equation*}
[\Omega ,L]V:=\Omega LV - L\Omega V=0. 
\end{equation*}

\section{Stabilization close to equilibrium} \label{closetoeq}
In this section, we prove a result detailing that when the probe-particle force is unequivocally attractive or repulsive and the rotation-profile $\omega \geq 0$, then close to equilibrium one has a relative stabilization of the stiffness matrix.
\begin{thm}
Suppose $y \mapsto f_i(y)y^i$ is either non-negative or non-positive. Suppose in addition that $y \mapsto \omega(|y|)$ is non-negative and $\omega f \neq 0$ in the sense of $\mathscr{L}^2$, then
\begin{equation} \label{close to eq}
\frac{\id^2{\cal K}_{ij}}{\id\epsilon^2}(\epsilon=0)\xi^i\xi^j >0
\end{equation}
for all $\xi\in \mathbb{R}^2$.
\end{thm}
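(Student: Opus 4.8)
The plan is to read off $\mathcal{K}_{ij}''(0)$ from the Taylor expansion of Proposition \ref{smoothness}, collapse the quadratic form $\mathcal{K}_{ij}''(0)\xi^i\xi^j$ to a scalar statement about radial functions using the vector-field/rotation-invariance reduction established above, and then recognize that scalar as an integral of a product of non-negative functions which is strictly positive by a positivity-improving property of a resolvent.

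First I would take $\epsilon_0=0$ and (say) $m=3$ in \eqref{special case}. The remainder is $O(\epsilon^3)$, so the $\epsilon^2$-coefficient of $\mathcal{K}_{ij}(\epsilon)$ is exactly the $k=2$ term, giving $\mathcal{K}_{ij}''(0)=-2\beta\bigl(f_i,\,\Omega L^{-1}\Omega L^{-1}f_j\bigr)$ with $f_i=\partial_{x^i}U(0)$. (Here $f_i\in H$ because $\langle f_i\rangle_0=-\mathscr{F}_i(0)=0$, which is what makes $L^{-1}f_i$ and $LL^{-1}f_i=f_i$ legitimate.) By the matrix form \eqref{stm2} one has $\mathcal{K}_{ij}(\epsilon)\xi^i\xi^j=m(\epsilon)|\xi|^2$ and $m=\tfrac12\delta^{ij}\mathcal{K}_{ij}$, so it suffices to prove $m''(0)>0$, where
\begin{equation*}
m''(0)=-\beta\,\delta^{ij}\bigl(f_i,\,\Omega L^{-1}\Omega L^{-1}f_j\bigr).
\end{equation*}
Note $f_i(y)=-V_{pp}'(|y|)\,\hat y_i$ is a radial vector field with radial component $f_r=-V_{pp}'$, and $f_i(y)y^i=f_r(y)\,|y|$, so the hypothesis says $f_r$ has a fixed sign; replacing $f_r$ by $-f_r$ if needed (which leaves the formula below unchanged) I take $f_r\geq 0$.

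Next I would iterate the reduction formulas: $L^{-1}f$ is radial with component $A:=\tilde L^{-1}f_r$; $\Omega L^{-1}f$ is rotation invariant with zero radial part and azimuthal part $\omega A$; $L^{-1}\Omega L^{-1}f$ is rotation invariant with azimuthal part $B:=\tilde L^{-1}(\omega A)$; and $\Omega L^{-1}\Omega L^{-1}f$ is radial with component $-\omega B$. Since $\delta^{ij}(V_i,W_i)=\int_M\rho_0\,V_rW_r\,\id y$ for radial vector fields, this gives
\begin{equation*}
m''(0)=\beta\int_M\rho_0\;\omega\,f_r\,B\;\id y=\beta\bigl(\omega f_r,\;\tilde L^{-1}\omega\,\tilde L^{-1}f_r\bigr)_{\mathrm{rad}}=\beta\bigl(\omega f_r,\;P^{-1}\omega\,P^{-1}f_r\bigr)_{\mathrm{rad}},
\end{equation*}
where $(\cdot,\cdot)_{\mathrm{rad}}$ is the $\rho_0$-weighted inner product on radial functions and $P:=-\tilde L$ is positive self-adjoint with respect to it (it is $-L$ plus the non-negative centrifugal potential $1/(\beta r^2)$). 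A sanity check on the linear-force example of Section \ref{har} should return $m''(0)=2k_2^2/(k_1+k_2)^3$, which matches differentiating the explicit $m(\epsilon)$ there. The positivity then follows from the fact that, by Feynman--Kac, $P^{-1}=\int_0^{\infty}e^{-tP}\,\id t$ is positivity-improving: if $u\geq 0$ and $u\not\equiv 0$ then $P^{-1}u>0$ a.e. Since $\omega f\neq 0$ in $\mathscr{L}^2$ forces $f_r\not\equiv 0$ and $\omega\not\equiv 0$, we get in turn $P^{-1}f_r>0$ a.e., then $\omega P^{-1}f_r\geq 0$ with $\omega P^{-1}f_r\not\equiv 0$, then $P^{-1}(\omega P^{-1}f_r)>0$ a.e.; and $\omega f_r\geq 0$ with $\omega f_r\not\equiv 0$. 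Hence $m''(0)$ is $\beta$ times the integral of the non-negative, not-a.e.-zero function $\omega f_r$ against a strictly positive function, so $m''(0)>0$ and \eqref{close to eq} holds for every $\xi\neq 0$.

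The step I expect to be the real work is the soft analysis behind the last part: that the radial operator $\tilde L$ with its singular $1/(\beta r^2)$ term is (essentially) self-adjoint and generates a semigroup whose kernel is strictly positive for $t>0$ with $\int_0^\infty e^{-tP}\,\id t$ convergent to a bounded operator, and that the repeated applications of the rotation-invariance reduction stay inside the stated domains (so that $L^{-1}$, $\Omega$, $L^{-1}$, $\Omega$ may be composed as written) — all of which I would import from the appendix's construction of $\tilde R=\tilde L^{-1}$.
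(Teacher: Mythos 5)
Your proposal is correct and follows essentially the same route as the paper: reduce to the trace $m''(0)$ via the symmetry \eqref{stm2}, read the $\epsilon^2$-coefficient $-\beta(f_i,(\Omega L^{-1})^2f^i)$ off the expansion \eqref{special case}, collapse it via the rotation-invariance reduction (Proposition \ref{equiv}) to $\beta(\omega f_r,\tilde{R}\omega\tilde{R}f_r)$ on radial functions, and conclude by the positivity-improving property of the radial pseudo-inverse (item 4 of Proposition \ref{resolvent properties}, which is the maximum-principle substitute for your Feynman--Kac remark). The only differences are presentational: your explicit radial/azimuthal bookkeeping and the clean sign convention $P=-\tilde{L}$ actually tidy up a sign slip in the paper's own chain of intermediate positivity claims.
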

\begin{proof}
Due to the rotational symmetry \eqref{reflection law}, it suffices to check the positivity of the trace
\begin{equation*}
\frac{\id^2{\cal K}_{i}^i}{\id\epsilon^2}(0).
\end{equation*}
Using expression \eqref{special case}, we see that
\begin{equation}
{\cal K}_{i}^i(\epsilon)={\cal K}_{i}^i(0)+\epsilon \underbrace{(f_i,\,\Omega L^{-1}f^i)}_{=0}-\epsilon^2 (f_i,\,(\Omega L^{-1})^2f^i)+o(\epsilon^2).
\end{equation}
Now
\begin{eqnarray*}&& (f_i,\,(\Omega L^{-1})^2 f^i)=\delta^{ij}(f_i,\,(\Omega L^{-1})^2f_j)=\delta^{ij}(f_i,\, \omega I_j^k L^{-1}\omega I_k^l L^{-1}f_l)\\
&&=-( f_i,\,\omega L^{-1} \omega L^{-1} f^i)=-( f,\,\omega \hat{y}_iL^{-1} \omega L^{-1} f^i)=-( f,\,\omega \tilde{R} (\omega \hat{y}_iL^{-1} f^i))\\
&&=-(f,\,\omega \tilde{R}\omega\tilde{R}(f^i \hat{y}_i))= -(f,\,\omega \tilde{R}\omega\tilde{R}f)
\end{eqnarray*}
where we write $f:=f^i\hat{y}_i$ and we use Proposition \ref{equiv} in the Appendix (recall that $L^{-1}=R_0$). But by item 4 of Proposition \ref{resolvent properties}, $-\tilde{R}$ maps non-zero, non-negative functions to positive functions. So $\tilde{R} f>0$. But then $\omega \tilde{R} f\geq0$ and non-zero. But then $\tilde{R}\omega \tilde{R} f>0$. Finally then, $\omega\tilde{R}\omega \tilde{R} f\geq0$ is positive in the support of $\omega$. Therefore, since $\omega f \neq 0$ in the sense of $\mathscr{L}^2$, $y \mapsto \omega(y)f(y)[\tilde{R}\omega \tilde{R} f](y)$ is a non-negative measurable function whose support has non-zero $\mathscr{L}^2$-measure. Therefore
\[\frac{\id^2}{\id\epsilon^2}{\cal K}_{i}^i(\epsilon)= -2( f_i,\,(\Omega L^{-1})^2f^i)=2( f,\,\omega\tilde{R}\omega \tilde{R}f)>0\]
\end{proof}

\section{Infinite driving limit}\label{idl}
Throughout this section, the bound \eqref{second bound} plays an important role.
\begin{lemma}
For every radial vectorfield $V$ for which both $V,\,\frac{V}{\omega} \in \mathscr{L}^2$, one has
\[ \|L_{\epsilon}^{-1}V\|_{\mathscr{L}^2} \leq \frac{1}{\epsilon}\left\|\frac{1}{\omega}V\right\|_{\mathscr{L}^2}.\]
\end{lemma}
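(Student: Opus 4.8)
One natural approach is to exploit the rotation invariance of the previous section and reduce the bound to an energy estimate. Assume $\epsilon>0$, the case $\epsilon<0$ being symmetric. Since $V$ is radial it is in particular rotation invariant, so by the previous section $g:=L_\epsilon^{-1}V$ is again a rotation–invariant vector field; moreover each component $V_i=V_r(|y|)\,\hat y_i$ is a radial function times an odd function of $y$ and hence lies in $H$, so that $L_\epsilon g=V$ holds exactly (not merely modulo $H$). I would then test the identity $Lg+\epsilon\,\Omega g=V$ against the rotation–invariant field $h$ with components $h_i:=\omega^{-1}\,I_i^{\,j}g_j$. Since $I$ is a pointwise rotation one has $|Ig|=|g|$ and $\Omega g=\omega\,Ig$, whence $(h,\Omega g)=\|g\|_{\mathscr L^2}^2$; and since $V$ is radial a short computation gives $(h,V)=-\bigl(V_r/\omega,\,g_\varphi\bigr)$, with $g_r,g_\varphi$ the radial and azimuthal components of $g$ in the sense of \eqref{example}. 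This yields
\[
\epsilon\,\|g\|_{\mathscr L^2}^2=(h,V)-(h,Lg)\;\le\;\bigl\|\tfrac{V}{\omega}\bigr\|_{\mathscr L^2}\,\|g\|_{\mathscr L^2}-(h,Lg),
\]
using Cauchy–Schwarz together with $\|V_r/\omega\|_{\mathscr L^2}=\|V/\omega\|_{\mathscr L^2}$ and $\|g_\varphi\|_{\mathscr L^2}\le\|g\|_{\mathscr L^2}$. Thus the lemma follows, with the sharp constant $1/\epsilon$, provided $(h,Lg)\ge 0$.

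Evaluating $(h,Lg)$ by integration by parts (using $(h_i,Lg_i)=-\beta^{-1}(h_i,g_i)_{H^0}$ and $\nabla\omega=\omega'\hat y$ with $\hat y=y/|y|$), the contribution $\propto\sum_i I_i^{\,j}\,\nabla g_j\!\cdot\!\nabla g_i$ vanishes because $\nabla g_j\!\cdot\!\nabla g_i$ is symmetric in $i,j$ while $I$ is antisymmetric, and one is left with
\[
(h,Lg)=\beta^{-1}\!\int_M\rho_0\,\frac{\omega'}{\omega^2}\,\bigl(g_r\,\partial_r g_\varphi-g_\varphi\,\partial_r g_r\bigr)\,\id y .
\]
For a rigid rotation, $\omega\equiv\text{const}$, this term vanishes identically and the proof is complete; equivalently, in that case $\omega^{-1}\tilde L_0$ is self-adjoint and $\le 0$ on the (radial part of) $\mathscr L^2$, the operator $\omega^{-1}\tilde L_0+i\epsilon$ is normal with spectrum at distance $\ge\epsilon$ from the origin, and the estimate is nothing but the spectral resolvent bound.

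The main obstacle is the non-rigid case. For a general profile $\omega^{-1}\tilde L_0$ is self-adjoint and $\le 0$ only with respect to the reweighted measure $\omega\rho_0\,\id y$, so a naive spectral argument only delivers the resolvent bound in the $L^2(\omega\rho_0\,\id y)$–norm rather than in the wanted $\mathscr L^2=L^2(\rho_0\,\id y)$–norm; in the energy formulation one must determine the sign of the Wronskian correction above. I would attack this by deriving, from the coupled radial equations $\tilde L_0 g_r=V_r+\epsilon\omega g_\varphi$ and $\tilde L_0 g_\varphi=-\epsilon\omega g_r$, the first–order linear ODE satisfied by $W:=g_r\partial_r g_\varphi-g_\varphi\partial_r g_r$ — its integrating factor is $r\rho_0$, since the first–order part of $\tilde L_0$ is $\beta^{-1}(r^{-1}-\beta V')\partial_r$ — solving that ODE as an integral and substituting back into $(h,Lg)$, with the behaviour at $\partial M$ controlled by the confinement assumption \eqref{boundary condition}; making this cancellation go through is the technical heart of the argument. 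Finally, I note that for the use made of the lemma in this section only $\|L_\epsilon^{-1}V\|_{\mathscr L^2}=O(1/\epsilon)$ is actually needed, and this follows immediately from the $\epsilon$–independent bound \eqref{second bound}: writing $V=\Omega h$ with $\|h\|_{\mathscr L^2}=\|V/\omega\|_{\mathscr L^2}$ and using $\epsilon\,\Omega=L_\epsilon-L$ gives $L_\epsilon^{-1}V=\epsilon^{-1}\bigl(h-L_\epsilon^{-1}Lh\bigr)$, hence $\|L_\epsilon^{-1}V\|_{\mathscr L^2}\le\epsilon^{-1}\bigl(\|V/\omega\|_{\mathscr L^2}+k_3(0)\,\|Lh\|_{\mathscr L^2}\bigr)$.
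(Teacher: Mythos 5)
Your setup is sound and close in spirit to the paper's: both arguments test the resolvent equation against a copy of the solution rotated by $I$, so that the antisymmetric driving term produces $\epsilon\|g\|_{\mathscr{L}^2}^2$ and everything reduces to showing that the contribution of $L$ is harmless. The genuine gap is that you only complete this last step for rigid rotation. The lemma is stated (and used in the Corollary that follows it) for an arbitrary profile $\omega$ with $f/\omega\in\mathscr{L}^2$, and for non-constant $\omega$ you are left with the Wronskian term $\beta^{-1}\int_M\rho_0\,\omega'\omega^{-2}\bigl(g_r\partial_rg_\varphi-g_\varphi\partial_rg_r\bigr)\,\id y$ whose sign you cannot determine; the ODE strategy you sketch for it is a plan, not a proof. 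A secondary defect of the test field $h=\omega^{-1}Ig$ is that nothing in the hypotheses controls $\omega^{-1}g$ (only $\omega^{-1}V$ is assumed square-integrable), so where $\omega$ degenerates the pairing $(h,Lg)$ and the integration by parts behind it are not justified. Your closing fallback, the $O(1/\epsilon)$ bound via $V=\Omega h$, likewise needs $h=-\omega^{-1}IV\in D$, which does not follow from $V/\omega\in\mathscr{L}^2$.

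The paper proceeds differently at exactly the problematic point: it does not move the weight $\omega^{-1}$ into the test function but keeps it attached to $L_{\epsilon}W$, bounding $\|\omega^{-1}L_{\epsilon}W\|_{\mathscr{L}^2}$ from below by its pairing with an azimuthal field built from $I_i^j\hat{y}_j$. For a purely radial $W=w\hat{y}$ one has $\omega^{-1}LW=\omega^{-1}\tilde{L}(w)\,\hat{y}$, which is \emph{pointwise} orthogonal to the azimuthal test direction, so the $L$-contribution vanishes identically with no integration by parts and hence no $\omega'$ ever appearing, while the $\Omega$-contribution yields the $\epsilon\|W\|^2$ term. The price is that this mechanism requires $W=L_{\epsilon}^{-1}V$ to have vanishing azimuthal component; your more careful bookkeeping, which allows $g_\varphi\neq 0$, is precisely what resurrects the $\omega'$ term. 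So you have correctly isolated the difficulty, but not resolved it, and your write-up does not establish the lemma for any non-rigid profile.
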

\begin{proof}
Notice that for every radial vectorfield $W\in D$, one has
\begin{eqnarray*}
&& \left\|\frac{1}{\omega}L_{\epsilon}W\right\|_{\mathscr{L}^2}\geq \left(I_i^j \hat{y}_j ,\, \frac{1}{\omega}L_{\epsilon}W^i\right)=\left(I_i^j \hat{y}_j ,\, \frac{1}{\omega}L(w \hat{y}^i)\right)+\epsilon \delta^{il}\left(I_i^j \hat{y}_j ,\, I_l^kw \hat{y}_k\right) \\
&& = ( \underbrace{I_i^j\hat{y}_j\hat{y}^i}_{=0},\frac{1}{\omega}\tilde{L}(w)) +\epsilon \left( \hat{y}_i ,\, w \hat{y}^i\right)=\epsilon \left\|W\right\|_{\mathscr{L}^2}
\end{eqnarray*}
Plugging in $W=L_{\epsilon}^{-1}V$, which is indeed in $D$, we get $\left\|\frac{1}{\omega}V\right\|_{\mathscr{L}^2} \geq \epsilon\left\|L_{\epsilon}^{-1}V\right\|_{\mathscr{L}^2}$.
\end{proof}
Recall that the force $f(y):=-(\nabla_x U)(x=0,\,y)$ is a smooth and compactly supported vector-field. Therefore $f \in D$. So in the expression \eqref{stm3} for the stiffness-matrix, \eqref{adj} implies that
\[\left( f_i, \, LL_{\epsilon}^{-1}f_j\right)=\left( Lf_i, \, L_{\epsilon}^{-1}f_j\right).\]
We can thus establish \eqref{inflim} by virtue of the following corollary
\begin{Corollary}
Suppose the force on the probe $f=-(\nabla_x U)(x=0)$ and the rotation profile $\omega$ share the property that $\frac{f}{\omega} \in \mathscr{L}^2$, then
\begin{eqnarray}
\lim_{\epsilon \to \infty}\left( Lf_i, \, L_{\epsilon}^{-1}f_j\right)=0.
\end{eqnarray}
\end{Corollary}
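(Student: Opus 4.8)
The plan is to combine the preceding Lemma with a single application of the Cauchy--Schwarz inequality, exploiting that the left entry of the bilinear form $(Lf_i,\,L_\epsilon^{-1}f_j)$ carries no $\epsilon$-dependence at all; the whole vanishing is then located in the right entry, which the Lemma kills at rate $1/\epsilon$.

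First I would record the structure of the force. Since $U(x,y)=V_{pp}(|x-y|)+V_{ext}(|y|)$, one computes $f(y)=-(\nabla_x U)(0,y)=V_{pp}'(|y|)\,\hat{y}$, which is a radial vector field in the sense of the Definition of a radial vector field given above: its azimuthal component vanishes and the radial component $V_r(\cdot)=V_{pp}'(|\cdot|)$ is a radial function. As recalled in the text preceding the Corollary, $f$ is moreover smooth and compactly supported, hence $f\in D$, and radiality of $\rho_0$ forces $\langle f_i\rangle_0=0$, so that $f_i\in H^0$ (this is precisely what legitimizes the identity $(f_i,LL_\epsilon^{-1}f_j)=(Lf_i,L_\epsilon^{-1}f_j)$ already stated before the Corollary). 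Consequently $Lf_i=\ell_{0,0}f_i\in\mathscr{L}^2$, and, because $L=L_0$ involves no driving, the quantity $c_i:=\|Lf_i\|_{\mathscr{L}^2}$ is a fixed finite constant, independent of $\epsilon$.

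Then Cauchy--Schwarz in $\mathscr{L}^2$ gives
\[
\bigl|(Lf_i,\,L_\epsilon^{-1}f_j)\bigr|\ \le\ \|Lf_i\|_{\mathscr{L}^2}\,\|L_\epsilon^{-1}f_j\|_{\mathscr{L}^2}\ =\ c_i\,\|L_\epsilon^{-1}f_j\|_{\mathscr{L}^2}.
\]
Now $f_j$ is a radial vector field with $f_j\in\mathscr{L}^2$ and, by the hypothesis of the Corollary, $f_j/\omega\in\mathscr{L}^2$; hence the preceding Lemma applies verbatim and yields $\|L_\epsilon^{-1}f_j\|_{\mathscr{L}^2}\le \epsilon^{-1}\,\|f_j/\omega\|_{\mathscr{L}^2}$. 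Combining the two estimates,
\[
\bigl|(Lf_i,\,L_\epsilon^{-1}f_j)\bigr|\ \le\ \frac{c_i}{\epsilon}\,\bigl\|f_j/\omega\bigr\|_{\mathscr{L}^2}\ \longrightarrow\ 0\qquad(\epsilon\to\infty),
\]
which is the assertion of the Corollary. Substituting this back into \eqref{stm3} delivers ${\cal K}_{ij}(\infty)=\langle\partial^2 U/\partial x^i\partial x^j(0)\rangle_0$, i.e.\ the universality formula \eqref{inflim}.

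I do not expect a genuine obstacle here: the estimate is essentially one line once the Lemma is in place. The only points that require care are bookkeeping ones --- verifying that $f$ is genuinely a radial vector field (so that both the symmetrization identity recorded in the text and the preceding Lemma are legitimately applicable to it), and observing that the potentially dangerous factor $\|Lf_i\|_{\mathscr{L}^2}$ is simply $\epsilon$-independent, rather than something one would otherwise have to bound uniformly in $\epsilon$.
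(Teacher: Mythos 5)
Your proof is correct and is exactly the argument the paper intends: the paper presents the Corollary as an immediate consequence of the preceding Lemma, using precisely the Cauchy--Schwarz step with the $\epsilon$-independent factor $\|Lf_i\|_{\mathscr{L}^2}$ and the $O(1/\epsilon)$ bound on $\|L_\epsilon^{-1}f_j\|_{\mathscr{L}^2}$. Your additional bookkeeping (checking that $f$ is a radial vector field so the Lemma applies) only makes explicit what the paper leaves implicit.
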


\section{Monotonicity of the stabilization for rigid driving}\label{monot}
We say that the driving is rigid when $\omega$ is constant, say $\omega(r)\equiv 1$. The rotational driving is now $\Omega=\frac{\partial}{\partial \varphi}$.
The arguments that follow are based on the property that $[\Omega,L]V_i:=\Omega LV_i - L\Omega V_i=0$ for a radial vectorfield $V_i \in D$ (for which $\Omega V_i$ is still in $D$, so that the second term of the commutator makes sense).  That remains so when we add interaction between the medium particles which is rotation invariant, i.e. a two-body potential function of the distance between the medium particles.
\begin{lemma} \label{comm}
On all relevant operator domains, $[\Omega,L]=0$ implies that for all $\epsilon_1,\epsilon_2\in \mathbb{R}$
\begin{enumerate}
\item $L_{\epsilon_1}L_{\epsilon_2}=L_{\epsilon_2}L_{\epsilon_1}$
\item $L_{\epsilon_1}L_{\epsilon_2}^{-1}=L_{\epsilon_2}^{-1}L_{\epsilon_1}$
\item $L_{\epsilon_1}^{-1}L_{\epsilon_2}^{-1}=L_{\epsilon_2}^{-1}L_{\epsilon_1}^{-1}$
\end{enumerate}
\end{lemma}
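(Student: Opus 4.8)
The plan is to derive the second and third identities from the first one, and to get the first from a one–line expansion. For item 1, write $L_{\epsilon_k}=L+\epsilon_k\Omega$ and multiply out: the $LL$ term and the $\epsilon_1\epsilon_2\,\Omega\Omega$ term are symmetric under $\epsilon_1\leftrightarrow\epsilon_2$, so
\[
L_{\epsilon_1}L_{\epsilon_2}-L_{\epsilon_2}L_{\epsilon_1}=(\epsilon_2-\epsilon_1)\bigl(L\Omega-\Omega L\bigr)=(\epsilon_2-\epsilon_1)\,[L,\Omega]=0
\]
by the hypothesis $[\Omega,L]=0$. This is clean in the rigid case at hand because there $\Omega=\partial_\varphi$ generates the rotation group, $\rho_0$ is rotation invariant, and $\partial_\varphi$ commutes with $L$ while preserving $D$; hence the successive compositions $L\Omega$, $\Omega L$, $\Omega^2$ stay inside $\mathscr{L}^2$ on $D$ without difficulty.

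For item 2, fix $v$ in the common domain $D$; it is harmless to assume in addition $v\in H$, since both sides of the claimed identity are unchanged under $v\mapsto P_Hv$ (using $L_\epsilon^{-1}=L_\epsilon^{-1}P_H$, $1\in D$, and $L_\epsilon 1=0$). Put $w:=L_{\epsilon_2}^{-1}v\in D\cap H^0$, so that $L_{\epsilon_2}w=v$. Applying $L_{\epsilon_2}$ to the two candidate expressions gives, on one hand, $L_{\epsilon_2}(L_{\epsilon_1}w)=L_{\epsilon_1}(L_{\epsilon_2}w)=L_{\epsilon_1}v$ by item 1, and on the other hand $L_{\epsilon_2}\bigl(L_{\epsilon_2}^{-1}L_{\epsilon_1}v\bigr)=P_HL_{\epsilon_1}v=L_{\epsilon_1}v$, where the last equality uses that the range of $L_{\epsilon_1}$ on $D$ lies in $H$ (indeed $(1,L_{\epsilon_1}u)=(L_{-\epsilon_1}1,u)=0$ by \eqref{adj}). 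Since $L_{\epsilon_2}$ is injective on $D\cap H$ (apply $L_{\epsilon_2}^{-1}$ and use $L_{\epsilon_2}^{-1}L_{\epsilon_2}=P_H$), the two expressions coincide, i.e. $L_{\epsilon_1}L_{\epsilon_2}^{-1}v=L_{\epsilon_2}^{-1}L_{\epsilon_1}v$.

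For item 3, take $v\in\mathscr{L}^2$ and apply $L_{\epsilon_1}$ to each of $L_{\epsilon_1}^{-1}L_{\epsilon_2}^{-1}v$ and $L_{\epsilon_2}^{-1}L_{\epsilon_1}^{-1}v$ (both of which lie in $H^0\subset D$). The first gives $P_HL_{\epsilon_2}^{-1}v=L_{\epsilon_2}^{-1}v$, since $L_{\epsilon_2}^{-1}v\in H^0\subset H$; the second gives, using item 2 with argument $L_{\epsilon_1}^{-1}v\in D$, $L_{\epsilon_1}L_{\epsilon_2}^{-1}(L_{\epsilon_1}^{-1}v)=L_{\epsilon_2}^{-1}L_{\epsilon_1}L_{\epsilon_1}^{-1}v=L_{\epsilon_2}^{-1}P_Hv=L_{\epsilon_2}^{-1}v$. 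Thus $L_{\epsilon_1}^{-1}L_{\epsilon_2}^{-1}v$ and $L_{\epsilon_2}^{-1}L_{\epsilon_1}^{-1}v$ lie in $H^0\cap D$ and have the same image under $L_{\epsilon_1}$, which is injective there; hence they are equal.

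I expect the main obstacle to be the domain bookkeeping rather than the algebra: one must be sure that $[\Omega,L]=0$ is invoked only where it genuinely holds, that the intermediate objects $w=L_{\epsilon_2}^{-1}v$ and $L_{\epsilon_1}w=v+(\epsilon_1-\epsilon_2)\Omega w$ really stay in $D$ (this is where the regularity built into the construction of the pseudo-inverse and the stability of $D$ under $\Omega=\partial_\varphi$ are used), and that the operators involved are closed, so that an identity first checked on a convenient core extends to the stated domains — the uniform bound $\|L_\epsilon^{-1}\|\le k_3(0)$ of \eqref{second bound} being the tool for that extension. Once this is organized, items 2 and 3 are purely formal consequences of item 1 together with the pseudo-inverse relations.
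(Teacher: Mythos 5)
Your proposal is correct and follows essentially the same route as the paper: item 1 by expanding $L_{\epsilon_k}=L+\epsilon_k\Omega$ and invoking $[\Omega,L]=0$, then items 2 and 3 as formal consequences of item 1 combined with the pseudo-inverse relations $L_\epsilon L_\epsilon^{-1}=L_\epsilon^{-1}L_\epsilon=P_H$. The only cosmetic difference is that you apply $L_{\epsilon_2}$ (resp.\ $L_{\epsilon_1}$) to both candidate expressions and conclude by injectivity on $D\cap H$, whereas the paper inserts $L_{\epsilon_2}L_{\epsilon_2}^{-1}$ (resp.\ $L_{\epsilon_1}L_{\epsilon_1}^{-1}$) in the middle and commutes; your version is, if anything, more explicit about the projections onto $H$.
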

\begin{proof}
The first identity is trivial. For the second identity, consider that for all $V \in D$,
\[L_{\epsilon_2}^{-1}L_{\epsilon_1}V = L_{\epsilon_2}^{-1}L_{\epsilon_1}L_{\epsilon_2}L_{\epsilon_2}^{-1}V=L_{\epsilon_2}^{-1}L_{\epsilon_2}L_{\epsilon_1}L_{\epsilon_2}^{-1}V=L_{\epsilon_1}L_{\epsilon_2}^{-1}V\]
where we used the first identity in the middle.
For the third identity, note that for all $V \in H$,
\begin{eqnarray*}
&& L_{\epsilon_1}^{-1}L_{\epsilon_2}^{-1}V= L_{\epsilon_1}^{-1}L_{\epsilon_2}^{-1}L_{\epsilon_1}L_{\epsilon_1}^{-1}V \\
&&=L_{\epsilon_1}^{-1}L_{\epsilon_1}L_{\epsilon_2}^{-1}L_{\epsilon_1}^{-1}V =L_{\epsilon_2}^{-1}L_{\epsilon_1}^{-1}V
\end{eqnarray*}
where we used the second identity in the middle.
\end{proof}
\begin{thm}
When $\Omega$ corresponds to rigid rotation, then the nonequilibrium contribution to the stiffness matrix is negative semi-definite and it is monotone in $|\epsilon|$.
\end{thm}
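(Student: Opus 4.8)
The plan is to use the rotational symmetry to collapse everything onto the commuting one‑parameter family $\{(\tilde{L}+i\epsilon)^{-1}\}$ of resolvents on radial functions, after which the sign and the monotonicity in $|\epsilon|$ become elementary statements about $s\mapsto \epsilon^2/(s+\epsilon^2)$. First I would reduce to the trace: by \eqref{stm2}--\eqref{reflection law} the matrix ${\cal K}(\epsilon)-{\cal K}(0)$ has symmetric part proportional to $\delta_{ij}$, so its semi‑definiteness, and that of ${\cal K}(\epsilon)-{\cal K}(\infty)$ with ${\cal K}_{ij}(\infty)=\langle\frac{\partial^2 U}{\partial x^i\partial x^j}(0)\rangle_0$ from \eqref{inflim}, is governed by the sign of $m(\epsilon)-m(0)=\tfrac12\bigl({\cal K}_i{}^i(\epsilon)-{\cal K}_i{}^i(0)\bigr)$. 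Writing $f_i:=\frac{\partial U}{\partial x^i}(0)$, which by Section \ref{idl} is a smooth compactly supported, hence $D$‑valued, radial vector field, \eqref{stm} gives ${\cal K}_{ij}(\epsilon)-{\cal K}_{ij}(0)=\beta\epsilon\,(f_i,\Omega L_\epsilon^{-1}f_j)$, and \eqref{stm3} with $f_i\in D$ and \eqref{adj} gives ${\cal K}_{ij}(\epsilon)-{\cal K}_{ij}(\infty)=-\beta\,(Lf_i,L_\epsilon^{-1}f_j)$; both are even in $\epsilon$, so I take $\epsilon\ge 0$.

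The decisive step is the rigid reduction. Because the rotation is rigid, $[\Omega,L]=0$, so by Lemma \ref{comm} the operators $\Omega,L_\epsilon,L_\epsilon^{-1}$ mutually commute and, by the section on vector fields and rotation invariance, $L_\epsilon^{-1}$ preserves the rotation‑invariant vector fields. I would identify a rotation‑invariant vector field $V$ with the complex radial function $\zeta_V:=V_r+iV_\varphi$ (weight $\rho_0$): then $\Omega$ acts as multiplication by $\pm i$, $L$ acts as the real self‑adjoint operator $\tilde{L}=L-\frac{1}{\beta r^2}$ of that section, strictly negative with $-\tilde{L}\ge -L\ge\lambda_0>0$ on $1^{\perp}$ by \eqref{Poincare inequality}, $L_\epsilon$ acts as $\tilde{L}+i\epsilon$, and $\sum_i(V_i,W_i)=\re\langle\zeta_V,\zeta_W\rangle$. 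Since $f$ is radial, $\zeta_f=:\phi$ is real. Proposition \ref{equiv} supplies the $\epsilon=0$ form of this dictionary; for constant $\omega$ it extends to all $\epsilon$ because $\epsilon\Omega$ enters only as the scalar $i\epsilon$.

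Then the sign and the monotonicity are short. With $\psi_\epsilon:=(\tilde{L}+i\epsilon)^{-1}\phi=\zeta_{L_\epsilon^{-1}f}$ and the reality of $\langle\psi_\epsilon,\tilde{L}\psi_\epsilon\rangle$ one reads off $\re\langle\phi,i\psi_\epsilon\rangle=\epsilon\|\psi_\epsilon\|^2$, hence
\[
{\cal K}_i{}^i(\epsilon)-{\cal K}_i{}^i(0)=\beta\epsilon\,(f_i,\Omega L_\epsilon^{-1}f^i)=\beta\epsilon^2\,\|L_\epsilon^{-1}f\|^2_{\mathscr{L}^2}=\beta\epsilon^2\,\langle\phi,(\tilde{L}^2+\epsilon^2)^{-1}\phi\rangle\ \ge\ 0,
\]
the last identity using $(L_\epsilon^{-1})^*=L_{-\epsilon}^{-1}$ from \eqref{pseudo adj} and Lemma \ref{comm}; equivalently ${\cal K}_i{}^i(\epsilon)-{\cal K}_i{}^i(\infty)=-\beta\,\langle\phi,\tilde{L}^2(\tilde{L}^2+\epsilon^2)^{-1}\phi\rangle\le 0$, so the nonequilibrium contribution --- the stiffness beyond the bare mechanical Hessian ${\cal K}(\infty)$ --- is negative semi‑definite, while the enhancement ${\cal K}(\epsilon)-{\cal K}(0)$ over equilibrium is positive semi‑definite. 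Since the factors of $\tilde{L}^2(\tilde{L}^2+\epsilon^2)^{-1}$ commute and are positive, $0\le\epsilon_1\le\epsilon_2$ forces $\tilde{L}^2(\tilde{L}^2+\epsilon_1^2)^{-1}\ge\tilde{L}^2(\tilde{L}^2+\epsilon_2^2)^{-1}$ as operators, and taking the $\phi$‑expectation shows $m$ is non‑decreasing in $|\epsilon|$, with $m(\epsilon)\to m(\infty)=\tfrac12\langle\frac{\partial^2 U}{\partial x^i\partial x^i}(0)\rangle_0$ as $|\epsilon|\to\infty$, matching \eqref{inflim}. (Equivalently, with $\mu$ the spectral measure of $\phi$ for $-\tilde{L}$: $m(\epsilon)-m(0)=\tfrac{\beta}{2}\int\frac{\epsilon^2}{\lambda^2+\epsilon^2}\,d\mu(\lambda)$, manifestly non‑negative and increasing in $|\epsilon|$.)

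The main obstacle is not the sign or monotonicity argument, which is routine once the dictionary of the second step is set up, but the rigorous bookkeeping of that dictionary: checking that $L_\epsilon^{-1}$ on the rotation‑invariant sector really is the complex resolvent $(\tilde{L}+i\epsilon)^{-1}$ on weighted‑$L^2$ radial functions with the correct centrifugal term $-\frac{1}{\beta r^2}$, that $\tilde{L}$ is self‑adjoint with the stated spectral gap there, and that $\phi=\zeta_f$ lies in the domains needed to run the identities and (if one uses it) the spectral integral --- for which $f\in D$ with compact support from Section \ref{idl} suffices. Most of these ingredients are already packaged in Proposition \ref{equiv} and the material around it; the point to spell out explicitly is that $L_\epsilon^{-1}$ commutes with the complex structure for every $\epsilon$ --- exactly the place where rigidity of the rotation is used, and where a non‑constant profile $\omega$ would break the scheme.
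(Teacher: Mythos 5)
Your argument is correct and proves the theorem, but it takes a genuinely different route from the paper's. The paper stays entirely inside the real operator calculus: starting from the expansion \eqref{expres} it symmetrizes the $\epsilon$-derivative of the quadratic form, uses the commutation relations of Lemma \ref{comm} to collect all operators on one side, and recognizes the resulting operator $-\Omega LL_{-\epsilon}L_{-\epsilon}+L_{\epsilon}L_{\epsilon}L\Omega=4\epsilon\,\Omega LL\Omega=-\epsilon\,(2L\Omega)^{\dagger}(2L\Omega)$ as manifestly negative for $\epsilon>0$; the sign of the nonequilibrium contribution then follows by integrating the sign of the derivative from $\epsilon=0$. You instead diagonalize: after reducing to the trace via \eqref{stm2}--\eqref{reflection law}, you complexify the rotation-invariant sector so that $\Omega$ acts as multiplication by $i$, $L$ as the real self-adjoint $\tilde{L}$, and $L_{\epsilon}^{-1}$ as $(\tilde{L}+i\epsilon)^{-1}$, whereupon the spectral theorem turns the trace into the explicit integral $\frac{\beta}{2}\int\epsilon^{2}(\lambda^{2}+\epsilon^{2})^{-1}\,\id\mu(\lambda)$. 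Both proofs rest on the same inputs --- rigidity through $[\Omega,L]=0$, the adjoint relations \eqref{adj} and \eqref{pseudo adj}, and the negativity of $L$ --- but yours buys more: an exact formula for $m(\epsilon)-m(0)$ for all $\epsilon$ rather than just the sign of its derivative, strict monotonicity whenever $f\neq 0$, and the interpolation toward the limit \eqref{inflim} as a free by-product. The price is the bookkeeping you correctly identify as the real work: Proposition \ref{equiv} supplies the dictionary between $L^{-1}$ and $\tilde{R}$ only at $\epsilon=0$, so the identification of $L_{\epsilon}^{-1}$ with $(\tilde{L}+i\epsilon)^{-1}$ for all $\epsilon$ (which holds by Lemma \ref{comm} together with uniqueness of the Lax--Milgram solution, and is exactly where constancy of $\omega$ enters), as well as the self-adjointness and spectral gap of $\tilde{L}$ on weighted radial $\mathscr{L}^2$ (from \eqref{symmetry} at $\epsilon=0$, \eqref{condition 6} and the isometry of Lemma \ref{isom lemma}), must be spelled out before the spectral integral is legitimate.
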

\begin{proof}
$f_i := \frac{\partial U}{\partial x^i}(x=0)$ is a rotation--invariant vector field, so all the identities of Lemma \eqref{comm} apply.\\
When $\epsilon=0$, the nonequilibrium contribution is zero, so it suffices to check the monotonicity in $|\epsilon|$.

According to the expansion \eqref{expres}, that monotonicity amounts to verifying that for all $\xi \in \mathbb{R}^d$ and $\epsilon>0$ we have
\begin{eqnarray*}
&& 0\geq\left(L_{-\epsilon}^{-1}LL_{-\epsilon}^{-1}\Omega f_i, \,f_j\right)\xi^i \xi^j = \left(L_{-\epsilon}^{-1}LL_{-\epsilon}^{-1}\Omega f_i\xi^i, \,f_j\xi^j\right) \\
&& = \frac{1}{2}\left(\left(L_{-\epsilon}^{-1}LL_{-\epsilon}^{-1}\Omega v, \,v\right)+\left(v,\,L_{-\epsilon}^{-1}LL_{-\epsilon}^{-1}\Omega v\right)\right) = I(\epsilon)
\end{eqnarray*}
where $v=f^j\xi_j$ has the same domain and symmetry properties as $f^j$.
\newline
Now,
\begin{eqnarray*}
&& 2I(\epsilon)=\left(L\Omega L_{-\epsilon}^{-1}L_{-\epsilon}^{-1} v, \,v\right)+\left(v,\,L\Omega L_{-\epsilon}^{-1}L_{-\epsilon}^{-1} v\right) \\
&& =\left( L_{-\epsilon}^{-1}L_{-\epsilon}^{-1} v, \,\left\{-\Omega L L_{-\epsilon}L_{-\epsilon}+ L_{\epsilon}L_{\epsilon} L\Omega\right\}L_{-\epsilon}^{-1}L_{-\epsilon}^{-1}v\right)
\end{eqnarray*}
and it suffices to check that the operator $-\Omega L L_{-\epsilon}L_{-\epsilon}+ L_{\epsilon}L_{\epsilon} L\Omega$ is negative--definite. After a calculation, one gets $4\Omega LL\Omega = -(2L\Omega)^{\dagger}(2L\Omega)$.
\end{proof}

\section{System with harmonic potentials}
We continue with the system with harmonic potentials as considered in Section \ref{har}:
\begin{enumerate}
\item $M=\mathbb{R}^2$,
\item $V_{pp}(|x-y|)=k_1|x-y|^2$ and $V_{ext}(|y|)=k_2 |y|^2$,\footnote{One can easily verify that these choices conform to all requirements on the potentials stated at the beginning of the paper.}
\item The rotation profile $r \mapsto \omega(r)$ is arbitrary (but conforms to \eqref{bounded driving} as always).
\end{enumerate}
What is intriguing about this system, is the fact that (the Cartesian components of) the probe-particle force at $x=0$
\begin{equation*}
f_j(y)=-\left.\left(\frac{\partial V_{pp}}{\partial y^j}(|x-y|)\right)\right|_{x=0}=-k_1y_j
\end{equation*}
is an eigenfunction of $L$. Indeed
\begin{equation*}
\left(-\frac{\partial V}{\partial y_i}(y)\frac{\partial}{\partial y^i}+\beta^{-1}\frac{\partial}{\partial y_i}\frac{\partial}{\partial y^i}\right)(f_j(y)) = -(k_1+k_2)f_j,
\end{equation*}
so that the corresponding eigenvalue $-(k_1+k_2)$ is found to be negative, if $\rho_0 \propto \exp\left(-\beta(k_1+k_2)|y|^2\right)$ is to correspond to a finite measure.
\\
To prove that ${\cal K}_i^i(\epsilon)-{\cal K}_i^i(0)\geq 0$, we invoke the following general \begin{prop}
${\cal K}_i^i(\epsilon)-{\cal K}_i^i(0)> 0$ whenever (every component of) the force $f_j$ is an eigenfunction of $L$. In that case, the precise form of the rotation profile $r \mapsto \omega(r)$ is immaterial.
\end{prop}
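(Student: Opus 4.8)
The plan is to start from the theorem of Section~\ref{stifm}. Subtracting \eqref{stm} at $\epsilon=0$ from \eqref{stm} at driving $\epsilon$ gives immediately that the nonequilibrium contribution is ${\cal K}_{ij}(\epsilon)-{\cal K}_{ij}(0)=\beta\epsilon\left(f_i,\,\Omega L_{\epsilon}^{-1}f_j\right)$ with $f_i:=\frac{\partial U}{\partial x^i}(0)$, so it suffices to show the trace $\beta\epsilon\,(f_i,\,\Omega L_{\epsilon}^{-1}f^i)$ is strictly positive for $\epsilon\neq 0$. Since $L$ is negative--definite on $H^0$ by \eqref{neg def}, any eigenfunction $f_i\in D$ of $L$ must have a strictly negative eigenvalue; write $Lf_i=-\lambda f_i$ with $\lambda>0$ (rotation symmetry makes $\lambda$ the same for $i=1,2$, though the computation below does not need this). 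Note $f_i\in H$ automatically, because $(1,f_i)=-\lambda^{-1}(1,Lf_i)=-\lambda^{-1}(L1,f_i)=0$ using \eqref{adj}; and $\Omega f_i\in\mathscr{L}^2$ because $\Omega\colon H^0\to\mathscr{L}^2$ (alternatively this follows from \eqref{bounded driving}).

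The key step will be to trade the single factor $\Omega L_{\epsilon}^{-1}$ for the manifestly sign--definite combination $\Omega L_{\epsilon}^{-1}\Omega$, by inserting the eigenvalue relation in the form $f^i=-\lambda^{-1}Lf^i$ and then using the operator identity $L_{\epsilon}^{-1}L=P_H-\epsilon L_{\epsilon}^{-1}\Omega$ on $D$ (which is just $L=L_{\epsilon}-\epsilon\Omega$ together with $L_{\epsilon}^{-1}L_{\epsilon}=P_H$). Applied to $f^i\in D\cap H$ this yields $L_{\epsilon}^{-1}Lf^i=f^i-\epsilon L_{\epsilon}^{-1}\Omega f^i$, hence $\Omega L_{\epsilon}^{-1}Lf^i=\Omega f^i-\epsilon\,\Omega L_{\epsilon}^{-1}\Omega f^i$. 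Pairing with $f_i$, summing over $i$, and using $(f_i,\Omega f^i)=0$ (from the antisymmetry \eqref{antisymmetry} of $\Omega$ together with symmetry of the real inner product), I get
\[
\beta\epsilon\left(f_i,\,\Omega L_{\epsilon}^{-1}f^i\right)
=-\frac{\beta\epsilon}{\lambda}\left(f_i,\,\Omega L_{\epsilon}^{-1}Lf^i\right)
=\frac{\beta\epsilon^2}{\lambda}\left(f_i,\,\Omega L_{\epsilon}^{-1}\Omega f^i\right).
\]

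To finish, set $w_i:=\Omega f_i\in\mathscr{L}^2$. Using \eqref{antisymmetry} once more, $\left(f_i,\,\Omega L_{\epsilon}^{-1}\Omega f^i\right)=-\left(w_i,\,L_{\epsilon}^{-1}w^i\right)$, and componentwise $\left(w_i,L_{\epsilon}^{-1}w_i\right)=\left(L_{\epsilon}z_i,z_i\right)=\left(z_i,L_{\epsilon}z_i\right)=-\beta^{-1}\|z_i\|_{H^0}^2\le 0$ with $z_i:=L_{\epsilon}^{-1}w_i\in H^0$, where I used $L_{\epsilon}L_{\epsilon}^{-1}=P_H$, $w_i\in H$, symmetry of the inner product, and \eqref{neg def} (this recovers the sign statement in \eqref{pseudo adj}). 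Hence
\[
{\cal K}_{i}^{i}(\epsilon)-{\cal K}_{i}^{i}(0)=\frac{\epsilon^2}{\lambda}\sum_{i}\left\|L_{\epsilon}^{-1}\Omega f_i\right\|_{H^0}^2 ,
\]
which is $>0$ for every $\epsilon\neq 0$: the Poincaré inequality \eqref{Poincare inequality} makes $\|\cdot\|_{H^0}$ a genuine norm, and $L_{\epsilon}^{-1}\Omega f_i=0$ would force $\Omega f_i\in\langle 1\rangle$, impossible since $\Omega f_i=\omega I_i^k f_k$ is a nonzero element of $\mathscr{L}^2$ as soon as the profile $\omega$ does not vanish identically. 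Since $\omega$ entered only through the non--vanishing of $\Omega f$, its precise shape is indeed immaterial. The only real obstacle is bookkeeping of domains: one has to verify that $f_i\in D$ (so $Lf_i=\ell f_i=-\lambda f_i$ literally), that $Lf_i,\Omega f_i\in\mathscr{L}^2$ (so $L_{\epsilon}^{-1}$ applies), and that $f_i\in H$ (so $P_Hf_i=f_i$) --- all of which hold for the harmonic example ($f_i=-k_1y_i$) and, more generally, follow from the standing assumptions of Section~\ref{mod} together with the eigenfunction hypothesis.
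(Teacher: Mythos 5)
Your argument is correct and is essentially the paper's own proof: both insert the eigenvalue relation $f^i=-\lambda^{-1}Lf^i$, split $L=L_{\pm\epsilon}\mp\epsilon\Omega$, kill the linear term via $(\Omega f_i,f^i)=0$, and conclude from the negative semi-definiteness of the pseudo-inverse (the paper works with $L_{-\epsilon}^{-1}\Omega f_i$ paired against $f^i$, which is just the adjoint form of your $\Omega L_{\epsilon}^{-1}f^i$). Your version is if anything slightly more careful, since you reduce the final quantity to an explicit $H^0$-norm and address strict positivity, which the paper leaves implicit.
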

\begin{proof}
Recall from \eqref{neg def} that $L_{\epsilon}$ is negative-definite. Therefore, all nonzero eigenvalues $\lambda \in \mathbb{C}$ of $L_{\epsilon}$ must be real and negative.
Taking the trace of the stiffness matrix, we then get (for its nonequilibrium correction)
\begin{eqnarray*}
&& \left( (L-\epsilon\Omega)^{-1}\epsilon\Omega f_i,\, f^i\right)=\left( I_D(L-\epsilon\Omega)^{-1}\epsilon\Omega f_i,\, f^i\right)=\left( L^{-1}L(L-\epsilon\Omega)^{-1}\epsilon\Omega f_i,\, f^i\right) \\
&& =\left( L(L-\epsilon\Omega)^{-1}\epsilon\Omega f_i,\, L^{-1}f^i\right) \\
&& = \lambda^{-1}\left( L(L-\epsilon\Omega)^{-1}\epsilon\Omega f_i,\, f^i\right) \\
&& = \lambda^{-1}\left( (L-\epsilon\Omega+\epsilon\Omega)(L-\epsilon\Omega)^{-1}\epsilon\Omega f_i,\, f^i\right) \\
&& =\lambda^{-1}\epsilon\left(\underbrace{\left( \Omega f_i,\, f^i\right)}_{=0}+\epsilon \left( \Omega (L-\epsilon\Omega)^{-1}\Omega f_i,\, f^i\right)\right) \\
&& = -\lambda^{-1}\epsilon^2 \left( (L-\epsilon\Omega)^{-1}(\Omega f_i),\, (\Omega f^i)\right) \\
&& < 0.
\end{eqnarray*}
\end{proof}

\section{Conclusion}
The occurrence of stable patterns and fixed points for probes in contact with a driven medium is a long-standing challenge within nonequilibrium statistical mechanics.  In the present paper rigorous results have been given about the stabilization of a probe approaching the rotation center of an overdamped medium.  We have shown that the stiffness increases for small nonequilibrium driving and is positive for large rotational driving.  We expect but cannot prove that the same phenomenon of stabilization also occurs for a probe in contact with particles that flip the direction of their rotation at random times.\\

\noindent{\bf Acknowledgements}: 
TD has been partially supported by Belspo-IUAP ``Dynamics, Geometry and Statistical Physics.'' KN acknowledges the support from the Grant Agency of the Czech Republic, grant no. 17-06716S.

\section{Appendix}
\subsection{Construction of the Kolmogorov operators and the diffusion semi-groups} \label{Kolm}
We set out to construct the just-mentioned operators and semigroups by constructing the relevant resolvents first.

\begin{definition}
$H^{\mu}$ ($\mu>0$) is the Hilbert space resulting from the closure of the vector-space of radial weakly differentiable functions $u:M \mapsto \mathbb{R}$ whose norm associated to the inner product
\[(u,v)_{H^{\mu}}=\int_M \rho_0 \left(\nabla u \cdot \nabla v+\mu\frac{uv}{|y|^2}\right)\id y\]
is finite and $\limsup_{y \to 0}\frac{|u(y)|}{|y|}=0$. The closure being with respect to $(.,.)_{H^{\mu}}$.\\
Likewise $\hat{H}^{\mu}$ is the Hilbert space of weakly differentiable functions whose norm associated to the inner product
\[(u,v)_{\hat{H}^{\mu}}=(u,v)_{H^{\mu}}+(u,v)_{\mathscr{L}^2}\]
is finite.
\end{definition}
\begin{lemma} \label{isom lemma}
Let $u$, $v$ be radial $H^1$-functions, then $f^i,\,g^i:M\mapsto \mathbb{R}:y \mapsto u(y)\hat{y}^i$ resp. $v(y)\hat{y}^i$ are in $H^0$. Conversely, if $f$, $g$ are radial vectorfields with $H^0$-components, then $u,v:M \mapsto \mathbb{R}:y \mapsto f^i(y)\hat{y}_i,\,g^i(y)\hat{y}_i$ are in $H^1$. More precisely, one has the isometry
\begin{equation}\label{isometry}(u,v)_{H^1}=(u\hat{y}^i,\,v\hat{y}_i)_{H^0},\qquad (f^j\hat{y}_j,\,g^k\hat{y}_k)_{H^1}=(f^i,\,g_i)_{H^0}\end{equation}
For radial $H^0$-vectorfields $f$, $g$ and radial $H^1$-functions $u$, one also has
\begin{equation} \label{orth}
F^i\partial_i u = 0=g_jF^i\partial_i f^j.
\end{equation}
\end{lemma}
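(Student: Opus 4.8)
The plan is to reduce the lemma to a single pointwise differential identity for radial objects, transport it to the Hilbert spaces by density and completeness, and then read off \eqref{orth} from a one-line computation. I would start from the elementary facts that for a smooth radial $u=u(|y|)$ one has $\partial_\alpha u=u'(|y|)\hat{y}_\alpha$ and $\partial_\alpha\hat{y}^i=|y|^{-1}(\delta^i_\alpha-\hat{y}^i\hat{y}_\alpha)$, whence
\[
\partial_\alpha(u\hat{y}^i)=u'\,\hat{y}_\alpha\hat{y}^i+\frac{u}{|y|}(\delta^i_\alpha-\hat{y}^i\hat{y}_\alpha).
\]
In the plane one has pointwise $\sum_{i,\alpha}(\hat{y}^i\hat{y}_\alpha)^2=\sum_{i,\alpha}(\delta^i_\alpha-\hat{y}^i\hat{y}_\alpha)^2=1$ and $\sum_{i,\alpha}\hat{y}^i\hat{y}_\alpha(\delta^i_\alpha-\hat{y}^i\hat{y}_\alpha)=0$, so squaring the display and summing over $i$ and $\alpha$ gives $\sum_{i,\alpha}(\partial_\alpha(u\hat{y}^i))^2=(u')^2+u^2/|y|^2=|\nabla u|^2+u^2/|y|^2$. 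Multiplying by the radial weight $\rho_0$ and integrating yields $\sum_i\|u\hat{y}^i\|_{H^0}^2=\|u\|_{H^1}^2$, and polarization gives \eqref{isometry}. Exactly the same computation applied to a radial vector field $f^i=w(|y|)\hat{y}^i$ gives $|\nabla f|^2=(w')^2+w^2/|y|^2=|\nabla w|^2+w^2/|y|^2$ a.e., hence $\|f^i\hat{y}_i\|_{H^1}^2=\sum_i\|f^i\|_{H^0}^2$ whenever the components $f^i$ lie in $H^0$.

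To promote this to the stated correspondence of spaces I would define $\Phi u:=(u\hat{y}^1,u\hat{y}^2)$ on the subspace of radial weakly differentiable $u$ with $\|u\|_{H^1}<\infty$ and $\limsup_{y\to0}|u(y)|/|y|=0$, which is dense in $H^1$ by the very definition of $H^1$ as a closure. For such $u$ each $u\hat{y}^i$ lies in $\mathscr{L}^2$ (from $(u\hat{y}^i)^2\le u^2$, using $u\in\mathscr{L}^2$, which follows from the Poincar\'e inequality \eqref{Poincare inequality} toward the boundary and from $u^2\le u^2/|y|^2$ on $\{|y|\le1\}$ toward the origin), has zero $\rho_0$-mean because $y\mapsto u(|y|)\hat{y}^i$ is odd in $y^i$ while $\rho_0$ is even, and has finite $H^0$-seminorm by the identity above; hence $\Phi$ maps isometrically into $H^0\times H^0$ and, $H^1$ being complete, extends uniquely to an isometry on $H^1$ whose range consists of radial vector fields with $H^0$-components, since radiality survives $\mathscr{L}^2$-limits. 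This gives the first assertion and the isometry \eqref{isometry}. For the converse, given a radial $f$ with $f^i=w(|y|)\hat{y}^i\in H^0$, I would approximate $w$ by $w_n:=\psi_n w$ with $\psi_n$ a radial cutoff vanishing near the origin and equal to $1$ away from it; then $w_n$ belongs to the dense subclass above, and using the pointwise identity one sees $\|w_n-w\|_{H^1}\to0$ because every leftover term is a tail of one of the finite integrals $\int\rho_0|\nabla w|^2$, $\int\rho_0 w^2/|y|^2$ (for the gradient term one estimates $w^2|\nabla\psi_n|^2\lesssim n^2 w^2\le w^2/|y|^2$ on the transition shell $\{1/(2n)<|y|<1/n\}$). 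Hence $w=f^i\hat{y}_i\in H^1$.

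The relations \eqref{orth} are then immediate from the fact that the driving field is purely azimuthal, $F^i\hat{y}_i=0$, which is \eqref{invariance} read off the radial profile of $V$. For radial $u$, $\nabla u=u'\hat{y}$ gives $F^i\partial_i u=u'(F^i\hat{y}_i)=0$; for radial vector fields $f^i=w\hat{y}^i$, $g^i=\tilde{w}\hat{y}^i$ one computes $F^i\partial_i f^j=(F^i\partial_i w)\hat{y}^j+wF^i\partial_i\hat{y}^j=0+w|y|^{-1}(F^j-\hat{y}^j F^i\hat{y}_i)=wF^j/|y|$, so $g_jF^i\partial_i f^j=\tilde{w}w|y|^{-1}(F^j\hat{y}_j)=0$. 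I expect the one genuinely delicate point to be the origin: finiteness of $\|u\|_{H^1}$ does not by itself force $\limsup_{y\to0}|u(y)|/|y|=0$ (take $u(y)=|y|$), so ``$w\in H^1$'' in the converse has to be understood in the completion sense, which is precisely why that direction is handled by producing an approximating sequence vanishing near $0$ rather than by verifying pointwise decay; the projector bookkeeping and the $\mathscr{L}^2$-control near the boundary are by comparison routine.
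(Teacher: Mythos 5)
Your proof is correct and follows essentially the same route as the paper's: the pointwise identity for $\partial_\alpha \hat{y}^i$, the vanishing of the cross term, and $F^i\hat{y}_i=0$ for \eqref{orth}. You additionally spell out the completion/density bookkeeping (zero $\rho_0$-mean of $u\hat{y}^i$, behavior at the origin, cutoff approximation for the converse direction) that the paper's one-line computation leaves implicit.
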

\begin{proof}
Notice that
\begin{equation}\label{form}\partial_i \hat{y}^j = \partial_i \left(\frac{y^j}{|y|}\right)= \frac{\delta_i^j}{|y|}-\frac{y_iy^j}{|y|^3}.\end{equation}
Then $\hat{y}^i(\partial_i \hat{y}^j)=0$ and $(\partial_i \hat{y}^j)(\partial^i \hat{y}_j)=\frac{1}{|y|^2}$ and it follows that
\begin{eqnarray*}
&& (f^j,\,g_j)_{H^0}=\int_M \partial_i (u(y)\hat{y}^j)\partial^i (v(y)\hat{y}_j) \id y \\
&& = \int_M \left\{ (\partial_i u)(y) (\partial^i v)(y) +\left[v(y)(\partial_iu)(y)+u(y)(\partial_iv)(y)\right]\underbrace{\hat{y}^j\left(\frac{\delta_j^i}{|y|}-\frac{y_jy^i}{|y|^3}\right)}_{=0}\varphi(y)+\frac{u(y)v(y)}{|y|^2}\right\}\id y \\
&& = \int_M \left\{ (\partial_i u)(y)(\partial^i v)(y) +\frac{u(y)v(y)}{|y|^2}\right\}\id y \\
&& = (u,\,v)_{H^1}
\end{eqnarray*}
which verifies the isometry \eqref{isometry}. In \eqref{orth}, the first equality is trivial. For the second, one has (writing $f^i=u \hat{y}^i$ with $u$ radial)
\[g_jF^i\partial_if^j = g_jF^i\partial_i(u\hat{y}^j)=g_juF^i \partial_i\hat{y}^j=0\]
by \eqref{form} and since $g_iF^i=0=y_iF^i$.
\end{proof}

\begin{definition}
	Let $\lambda \geq 0$ and define $B_{x,\epsilon,\lambda,\mu}:\hat{H}^{\mu} \times \hat{H}^{\mu} \to \mathbb{R}$ as the bilinear map which acts as
	\begin{equation*}
	(u,v) \mapsto \int \rho_0\left\{\left(\beta^{-1}\nabla u+\left(\nabla_y [U(x,y)- U(0,y)]+\epsilon F\right)u\right)\cdot\nabla v+\lambda \,uv+\beta^{-1}\mu\frac{uv}{|y|^2}\right\}\,\id y
	\end{equation*}
	We also use the following auxiliary notations:
\begin{itemize}
\item $B_{x,\epsilon,\lambda}:=B_{x,\epsilon,\lambda,\mu=0}$
\item $B_{\epsilon}=B_{x=0,\epsilon,\lambda=0}:=B_{x=0,\epsilon,\lambda=0,\mu=0}$
\item $\tilde{B}=B_{x=0,\epsilon=0,\lambda=0,\mu=1}$
\end{itemize}
.
\end{definition}
\begin{lemma} \label{bilinear operator} (properties of $B_{x,\epsilon,\lambda,\mu}$)
\newline
\begin{itemize}
\item For all $ u,v \in H^0$:
\begin{equation} \label{symmetry}
B_{0,\epsilon,\lambda,\mu}(u,v)=B_{0,-\epsilon,\lambda,\mu}(v,u).
\end{equation}
\item Let $u$ and $v$ be a pair of functions where one of the pair is smooth and compactly supported and the other is twice weakly differentiable, then
\begin{equation} \label{partial int}
(u,\,(\lambda +\frac{\mu}{\beta|y|^2}-l_{x,\epsilon})v)=B_{x,\epsilon,\lambda,\mu}(u,v).
\end{equation}
\end{itemize}
\end{lemma}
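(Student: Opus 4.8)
The plan is to establish both items by direct manipulation of the integral defining $B_{x,\epsilon,\lambda,\mu}$, using only two inputs: the identity $\nabla\rho_0=-\beta\rho_0\nabla V$ (with $V=U(0,\cdot)$ smooth), and the flow-invariance relations \eqref{invariance}--\eqref{omg}. I treat the two assertions separately.

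For the symmetry \eqref{symmetry} I would first set $x=0$, which annihilates the term $\nabla_y[U(x,y)-U(0,y)]$ in the integrand and leaves
\[
B_{0,\epsilon,\lambda,\mu}(u,v)=\int\rho_0\Bigl\{\beta^{-1}\nabla u\cdot\nabla v+\epsilon\,u\,F\cdot\nabla v+\lambda uv+\beta^{-1}\mu\frac{uv}{|y|^2}\Bigr\}\,\id y .
\]
Every term other than the middle one is manifestly symmetric under the exchange $u\leftrightarrow v$, while the middle one becomes $-\epsilon\,v\,F\cdot\nabla u$ under the combined exchange $u\leftrightarrow v$ together with $\epsilon\to-\epsilon$. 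Hence $B_{0,\epsilon,\lambda,\mu}(u,v)-B_{0,-\epsilon,\lambda,\mu}(v,u)$ collapses to $\epsilon\int\rho_0\,F^i\partial_i(uv)\,\id y$, and the remaining task is to see this vanishes. I would get that from \eqref{omg} by polarization: $uv=\tfrac12\bigl[(u+v)^2-u^2-v^2\bigr]$, and \eqref{omg} applied to each of the weakly differentiable functions $u+v$, $u$, $v$ gives $\int\rho_0 F^i\partial_i((u+v)^2)=\int\rho_0 F^i\partial_i(u^2)=\int\rho_0 F^i\partial_i(v^2)=0$. The integrability needed to run this is supplied by $u,v\in H^0\subset\mathscr{L}^2$, $\nabla u,\nabla v\in\mathscr{L}^2$ and the bound \eqref{bounded driving} on $F$.

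For \eqref{partial int} the key step is to integrate by parts the Laplacian piece of $\ell_{x,\epsilon}v=-\bigl((\nabla_yU)(x)+\epsilon F\bigr)\cdot\nabla v+\beta^{-1}\nabla^2v$ against $\rho_0 u$. Because one of $u,v$ is smooth with support in a compact subset of the open disk $M$, there is no boundary contribution and every integrand that appears lives on one fixed compact set, so
\[
-\beta^{-1}\!\int\rho_0\,u\,\nabla^2 v\,\id y=\beta^{-1}\!\int\nabla(\rho_0 u)\cdot\nabla v\,\id y=\int\rho_0\bigl(\beta^{-1}\nabla u-u\,\nabla V\bigr)\cdot\nabla v\,\id y ,
\]
where I used $\nabla\rho_0=-\beta\rho_0\nabla V$. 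Plugging this into $\bigl(u,(\lambda+\tfrac{\mu}{\beta|y|^2}-\ell_{x,\epsilon})v\bigr)$, splitting $(\nabla_yU)(x)=\nabla V+\nabla_y[U(x)-U(0)]$, and cancelling the two copies of the term $u\,(\nabla V)\cdot\nabla v$ leaves precisely $B_{x,\epsilon,\lambda,\mu}(u,v)$, as wanted.

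The arithmetic in both parts is routine; I expect the only delicate points to be the two legitimacy questions — that \eqref{omg} really is applicable to the polarized products in the first part, and that the integration by parts in the second part is boundary-term-free with integrable pieces throughout. Both are mild, and are dispatched by, respectively, $H^0$-membership plus the decay built into $\rho_0$, and the compact-support hypothesis on one of the two functions.
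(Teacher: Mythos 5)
Your proof is correct and follows essentially the same route as the paper: the symmetry claim is reduced to the vanishing of $\epsilon\int\rho_0\,F^i\partial_i(uv)\,\id y$, which is exactly the antisymmetry \eqref{antisymmetry} (itself the polarized form of \eqref{omg} that you rederive), and the second identity is obtained by the same single, boundary-term-free integration by parts justified by the compact support of one factor. The paper merely asserts both facts in one line, whereas you spell out the cancellation of the two $u\,(\nabla V)\cdot\nabla v$ terms explicitly.
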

\begin{proof}
The first result is straightforward and it requires the global antisymmetry of $F$ as expressed in \eqref{antisymmetry}.
\eqref{partial int} follows from a single partial integration (where the definition of weak differentiation is crucial, as is the compact support of one of the two functions since this ensures that there is no boundary term in the partial integration.)
\end{proof}
\begin{lemma} \label{comp supp}
There exists an increasing sequence of smooth, non-negative, compactly supported functions $(\chi_n)_n$ converging to $1$ and for which $(\chi_n,\chi_n)_{H^0}\to 0$.
\end{lemma}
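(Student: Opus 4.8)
The plan is to build the cutoffs by composing the Lyapunov function $\eta(y)=\exp(\gamma(V(y)-\min V))$ (introduced just before the Poincaré inequality) with a fixed family of one-variable cutoff profiles. Concretely, fix a smooth $\phi:[0,\infty)\to[0,1]$ with $\phi\equiv 1$ on $[0,1]$, $\phi\equiv 0$ on $[2,\infty)$, and $|\phi'|\le 2$; then set $\chi_n(y):=\phi(\eta(y)/n)$. Because property (1) of $\eta$ says $\eta(y)\to+\infty$ as $|y|\to R$, the sub-level set $\{\eta\le 2n\}$ is, by the compactness condition \eqref{pot top} (equivalently: $V^{-1}([a,b])$ compact), a compact subset of $M$; hence each $\chi_n$ is smooth, non-negative, compactly supported, equal to $1$ on $\{\eta\le n\}$, and the sequence increases to the constant function $1$ pointwise.

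The remaining point is the quantitative one: $(\chi_n,\chi_n)_{H^0}=\langle(\nabla\chi_n)^2\rangle_0\to 0$. First, $\nabla\chi_n=\frac1n\phi'(\eta/n)\,\nabla\eta$ is supported in the annular region $A_n:=\{n\le\eta\le 2n\}$, and $\nabla\eta=\gamma(\nabla V)\eta$, so on $A_n$ one has $|\nabla\chi_n|\le \frac{2\gamma}{n}\,\eta\,|\nabla V|\le 4\gamma|\nabla V|$. Therefore
\begin{equation*}
(\chi_n,\chi_n)_{H^0}\le 16\gamma^2\int_{A_n}\rho_0\,(\nabla V)^2\,\id y .
\end{equation*}
By assumption \eqref{int 1}, $\int_M \rho_0 (\nabla V)^2\,\id y<\infty$, and the sets $A_n$ are shrinking toward the boundary with $\bigcap_n\bigl(\bigcup_{k\ge n}A_k\bigr)=\emptyset$ (every compact $K\subset M$ lies in $\{\eta\le n\}$ for large $n$, hence is disjoint from $A_k$ for all $k\ge n$); so dominated convergence gives $\int_{A_n}\rho_0(\nabla V)^2\,\id y\to 0$, whence $(\chi_n,\chi_n)_{H^0}\to 0$.

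I expect the only genuine subtlety to be the bookkeeping that makes the cutoff \emph{increasing}: with the single profile $\phi$ as above, $\chi_n(y)=\phi(\eta(y)/n)$ is automatically nondecreasing in $n$ for each fixed $y$ because $t\mapsto\phi(c/t)$ is nondecreasing in $t$ for $c\ge 0$ (as $\phi$ is nonincreasing and $c/t$ decreases in $t$). One should also double-check the edge case $R=\infty$: then $\eta\to\infty$ as $|y|\to\infty$ by property (1), and \eqref{pot top} still forces $\{\eta\le 2n\}$ to be compact, so the construction is unchanged. Apart from that, everything is a routine application of the stated hypotheses \eqref{pot top}, \eqref{boundary condition} (through property (1) of $\eta$) and the integrability bound \eqref{int 1}.
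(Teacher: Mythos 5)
Your construction is correct and is essentially the paper's: both build the cutoffs as a fixed one--variable profile composed with a monotone function of $V$ (the paper uses $\eta(V(y)/n)$ with $V$ itself), use \eqref{pot top} for compact support and monotone convergence to $1$, and control the Dirichlet energy through \eqref{int 1}. The only difference is in the last step: by rescaling $V$ directly the paper keeps a $1/n$ chain-rule factor on the gradient everywhere and gets the explicit rate $(\chi_n,\chi_n)_{H^0}=O(n^{-2})$, whereas your cutoff in $\exp(\gamma V)$ yields only a uniform bound $|\nabla\chi_n|\le 4\gamma|\nabla V|$ on annuli $A_n$ that escape to the boundary, so you conclude qualitatively by dominated convergence --- both arguments suffice for the lemma as stated.
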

\begin{proof}
Let $\eta:\mathbb{R} \to [0,1]$ be some non-negative smooth function whose support is $(-1,1)$, which is increasing in $(-1,0)$ and decreasing in $(0,1)$ and which has $\eta(0)=1$.
Now, define the auxiliary sequence
\begin{equation*}
\eta_n(x)=\eta(x/n)
\end{equation*}
which is indeed increasing towards the constant function $x \mapsto 1$.
Finally, define $\chi_n = \eta_n \circ U(x=0)$. I.e. 
\begin{equation*}
\chi_n(y)=\eta_n(V(y))=\eta(V(y)/n).
\end{equation*}
The property that $\chi_n$ has compact support follows from \eqref{pot top}. Moreover
\begin{eqnarray}
&& (\chi_n,\chi_n)_{H^0}=Z^{-1}\int_{M} \exp(-\beta V(y)) (\partial_i \chi_n)(y)(\partial^i \chi_n)(y) \id y\nonumber\\
&& = \int_{M} \exp(-\beta V(y)) \left(\eta'(V(y)/n)\right)^2\frac{1}{n^2}(\partial_i V)(y)(\partial^i V)(y) \de y \nonumber\\
&& \leq \left(\frac{\|\eta'\|_{\infty}}{n}\right)^2 \int_{M} \exp(-\beta V(y)) (\partial_i V)(y)(\partial^i V)(y) \id y \nonumber\\
&& \leq C / n^2
\end{eqnarray}
where the constant in the end is supplied by \eqref{int 1}.
\end{proof}
The following calculation will turn out to be useful later:
Let $\chi\geq 0$ be a function with $F^i\partial_i\chi = 0$. Then \eqref{antisymmetry} implies
\begin{eqnarray}
&& B_{x,\epsilon,\lambda}(\chi u,u) \nonumber\\
&&= \int_M \rho_0\left(\beta^{-1}\nabla(\chi u)\cdot\nabla u+\chi u\nabla\left(U(x)-U(0)\right)\cdot\nabla u+\left[\lambda+\frac{\mu}{\beta|y|^2}\right] \chi u^2\right)\id y \label{inter2}\\
&& \geq \int_M \rho_0\left(\beta^{-1}[\chi(\nabla u)^2+u(\nabla \chi \cdot\nabla u)]-\chi k_1(x)|u\nabla u|+\left[\lambda+\frac{\mu}{\beta|y|^2}\right] \chi u^2\right)\id y \nonumber\\
&& \geq \int_M \rho_0\left(\left[\beta^{-1}(\chi+\frac{\theta}{2})+\frac{k_1(x)\chi}{2}\right](\nabla u)^2+\left[\lambda \chi+\frac{(\nabla \chi)^2}{2\theta}+\frac{\mu}{\beta|y|^2}\chi -\frac{k_1(x)\chi}{2}\right] u^2\right)\id y \nonumber\\
&& \label{inter3}
\end{eqnarray}
where $\theta>0$ is an arbitrary parameter.\\\\
In order to apply the Lax-Milgram theorem (see e.g. \cite{Evans}) to the bilinear operator $B=B_{x,\epsilon,\lambda,\mu}:\hat{H}^\mu\times \hat{H}^\mu \to \mathbb{R}$, we need to show
\begin{lemma}\label{coercivity} (coercivity of $B_{x,\epsilon,\lambda}$)\\
For all $\lambda>\beta^{-1}$, there are positive constants $c_i$ so that
\begin{eqnarray} \label{condition 1}
&& \forall u,v \in \hat{H}^0:|B_{x,\epsilon,\lambda}(u,v)|\leq c_1 \left\|u\right\|_{\hat{H}^0}\left\|v\right\|_{\hat{H}^0} \\
&& \forall u \in \hat{H}^0: B_{x,\epsilon,\lambda}(u,u) \geq c_2\left\|u\right\|_{\hat{H}^0}^2 \label{condition 2}\\
&& \forall u,v \in H^0: |B_{\epsilon}(u,v)| \leq c_3 \left\|u\right\|_{H^0}\left\|v\right\|_{H^0} \label{condition 3} \\
&& \forall u \in H^0: B_{\epsilon}(u,u) \geq c_4 \left\|u\right\|_{H^0}^2 \label{condition 4}\\
&& \forall u,v \in H^1: |\tilde{B}(u,v)| \leq c_5 \left\|u\right\|_{H^1}\left\|v\right\|_{H^1} \label{condition 5} \\
&& \forall u \in H^1: \tilde{B}(u,u) \geq c_6 \left\|u\right\|_{H^1}^2 \label{condition 6}
\end{eqnarray}
Also the following bound (which is not required for the Lax-Milgram theorem but rather later to meet the requirements of the Hille-Yosida theorem) applies: $\exists c_7\in \mathbb{R}$: $\forall u \in \hat{H}^0$
\begin{equation} \label{bound to B 2}
(c_7+\lambda) \left\|u\right\|_{\mathscr{L}^2}^2 \leq B_{x,\epsilon,\lambda}(u,u)
\end{equation}
\end{lemma}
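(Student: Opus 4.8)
The plan is to prove the six bounds \eqref{condition 1}--\eqref{condition 6} together with \eqref{bound to B 2} by splitting them into two groups: the \emph{continuity} estimates (odd-numbered), which are term-by-term Cauchy--Schwarz estimates using the a priori bounds already in hand, and the \emph{coercivity} estimates (even-numbered, plus \eqref{bound to B 2}), which all reduce to the pre-calculation \eqref{inter2}--\eqref{inter3} specialized to $\chi\equiv 1$. The one structural observation that makes everything work is that by \eqref{antisymmetry}--\eqref{omg} (equivalently, because $\partial_iF^i=0$ and $F^i\partial_iV=0$ by \eqref{invariance}) the driving contributes nothing on the diagonal: $\int_M\id y\,\rho_0\,\epsilon F^i u\,\partial_i u=0$ for every $u$.

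\textbf{Continuity.} For \eqref{condition 1} I would bound $B_{x,\epsilon,\lambda}(u,v)$ by splitting it into four pieces: $\beta^{-1}\nabla u\cdot\nabla v$ gives $\beta^{-1}\|u\|_{H^0}\|v\|_{H^0}$; the term $u\,\nabla_y[U(x)-U(0)]\cdot\nabla v$ gives $k_1(x)\|u\|_{\mathscr{L}^2}\|v\|_{H^0}$ using that $k_1(x)<\infty$ by \eqref{lip}; the term $\epsilon F^iu\,\partial_iv$ gives $|\epsilon|\,C\,\|u\|_{\mathscr{L}^2}\|v\|_{H^0}$ using the bound \eqref{bounded driving}; and $\lambda uv$ gives $\lambda\|u\|_{\mathscr{L}^2}\|v\|_{\mathscr{L}^2}$. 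Since $\|\cdot\|_{\mathscr{L}^2}$ and $\|\cdot\|_{H^0}$ are each dominated by $\|\cdot\|_{\hat H^0}$, summing the constants produces $c_1$. For \eqref{condition 3} the same computation applies with $x=0$, $\lambda=0$, $\mu=0$, the missing $\mathscr{L}^2$-control being supplied by the Poincar\'e inequality \eqref{Poincare inequality} in the form $\|u\|_{\mathscr{L}^2}\le c^{-1}\|u\|_{H^0}$ on $H^0$. For \eqref{condition 5} one simply notes that $\tilde B(u,v)=\beta^{-1}(u,v)_{H^1}$ identically, so $|\tilde B(u,v)|\le\beta^{-1}\|u\|_{H^1}\|v\|_{H^1}$.

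\textbf{Coercivity.} For \eqref{condition 2} I would apply the pre-calculation \eqref{inter2}--\eqref{inter3} with $\chi\equiv1$ (for which $\nabla\chi=0$ and $F^i\partial_i\chi=0$ hold trivially): dropping the vanishing driving term and using Young's inequality $|u\,\partial_iu|\le\frac{\theta}{2}(\nabla u)^2+\frac1{2\theta}u^2$ together with $|\nabla_y[U(x)-U(0)]|\le k_1(x)$ gives
\[
B_{x,\epsilon,\lambda}(u,u)\ \ge\ \int_M\id y\,\rho_0\Big(\big(\beta^{-1}-\tfrac{\theta}{2}k_1(x)\big)(\nabla u)^2+\big(\lambda-\tfrac{1}{2\theta}k_1(x)\big)u^2\Big).
\]
Choosing $\theta=\beta^{-1}/k_1(x)$ makes the gradient coefficient equal to $\beta^{-1}/2>0$, and then restricting $x$ to a small enough neighbourhood of the origin (so that $k_1(x)$ is small, which is legitimate since $k_1(x)=O(|x|)$ by \eqref{lip}) forces $\lambda-\frac{\beta}{2}k_1(x)^2>0$ under the hypothesis $\lambda>\beta^{-1}$; taking $c_2$ to be the minimum of the two coefficients yields $B_{x,\epsilon,\lambda}(u,u)\ge c_2\|u\|_{\hat H^0}^2$. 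Keeping only the $u^2$-term in the same displayed inequality proves \eqref{bound to B 2} with $c_7:=-\frac{1}{2\theta}k_1(x)$. Finally \eqref{condition 4} and \eqref{condition 6} are immediate: for $B_\epsilon$ one has $U(x)-U(0)=0$ and $\lambda=0$, so $B_\epsilon(u,u)=\beta^{-1}\|u\|_{H^0}^2$, i.e. $c_4=\beta^{-1}$; likewise $\tilde B(u,u)=\beta^{-1}\|u\|_{H^1}^2$, i.e. $c_6=\beta^{-1}$.

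\textbf{Main obstacle.} There is no deep difficulty, only one point that needs care: the coercivity constant $c_2$ in \eqref{condition 2} is positive only after the neighbourhood ${\cal W}\subset B(0,\epsilon)$ is taken small enough, so that the probe-induced perturbation $U(x)-U(0)$ (hence $k_1(x)$) is small; this is exactly where the Lipschitz-type estimate \eqref{lip} and the threshold $\lambda>\beta^{-1}$ are both used. Conceptually, the essential input is the antisymmetry of the driving term, which removes the order-$\epsilon$ contribution $\int\rho_0\,\epsilon F^iu\,\partial_iu$ from $B_{x,\epsilon,\lambda}(u,u)$; without $\partial_iF^i=0$ this term would survive with an indefinite sign and coercivity would be in jeopardy. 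The rest is bookkeeping: keeping track of which of the norms $\mathscr{L}^2$, $H^0$, $H^1$, $\hat H^0$ is being used in each of the seven inequalities and invoking the Poincar\'e inequality whenever an $\mathscr{L}^2$-norm must be absorbed into an $H^0$-norm.
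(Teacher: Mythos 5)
Your proof is correct and follows essentially the same route as the paper: Cauchy--Schwarz term by term (with \eqref{lip}, \eqref{bounded driving} and the Poincar\'e inequality) for the continuity bounds, the antisymmetry \eqref{omg} to kill the driving term on the diagonal, and Young's inequality plus the smallness of $k_1(x)=O(|x|)$ for the coercivity bounds and \eqref{bound to B 2}. The only difference is cosmetic: you optimize the Young parameter $\theta$ explicitly, whereas the paper's computation \eqref{inter2}--\eqref{inter3} arrives at the coefficients $\beta^{-1}-k_1(x)/2$ and $\lambda-k_1(x)/2$ before invoking the same smallness of $x$.
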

\begin{proof}
First, calculate
\begin{eqnarray*}
&& |B_{x,\epsilon,\lambda}(u,v)|= \\
&& \left|\int \rho_0\left(\beta^{-1}\nabla u\cdot\nabla v+u\left(\nabla [U(x)- U(0)]+\epsilon F\right)\cdot \nabla v+\left[\lambda+\beta^{-1}\mu\frac{1}{|y|^2}\right] uv \right)\id y \right| \nonumber\\
&& \leq \beta^{-1}\left|\int \rho_0\left(\nabla u\cdot\nabla v+\left[\beta\lambda+\mu\frac{1}{|y|^2}\right] uv\right)\right|\\
&&\qquad + \int \rho_0\left|\nabla [U(x)- U(0)]+\epsilon F\right||u\nabla v|\id y. \label{inter}
\end{eqnarray*}
When $x=0$, $\lambda$ and $v$ is radial ($F^i\partial_iv=0$), the second term is zero and we can bound \eqref{inter} by $\|u\|_{H^{\mu}}\|v\|_{H^{\mu}}$, thus obtaining \eqref{condition 5}. For the cases \eqref{condition 1} and \eqref{condition 3}, we proceed with \eqref{inter} (where we now put $\mu=0$)
\begin{eqnarray*}
&& \eqref{inter} \leq \max\left(\beta^{-1},\lambda\right)\left\|u\right\|_{\hat{H}^0}\left\|v\right\|_{\hat{H}^0} +\int \rho\left|\nabla_y [U(x,y)- U(0,y)]+\epsilon F\right||u\nabla v|\id y\\
&& \leq \max\left(\beta^{-1},\lambda\right)\left\|u\right\|_{\hat{H}^0}\left\|v\right\|_{\hat{H}^0}+\int\rho\left(|\nabla [U(x)- U(0)]|+\epsilon | F|\right)|u\nabla v|\id y \\
&& \leq \max\left(\beta^{-1},\lambda\right)\left\|u\right\|_{\hat{H}^0}\left\|v\right\|_{\hat{H}^0} + \left(k_1(x)+\epsilon C\right)\left\|u\right\|_{\mathscr{L}^2}\left\|v\right\|_{H^0} \\
&& \leq \left(\max\left(\beta^{-1},\lambda\right)+k_1(x)+\epsilon C)\right)\left\|u\right\|_{\hat{H}^0}\left\|v\right\|_{\hat{H}^0}
\end{eqnarray*}
thus obtaining \eqref{condition 1} and also \eqref{condition 3} after noting that the Poincar\'e inequality implies that $\|u\|_{\hat{H}^0}^2 \leq (1+c^{-2})\|u\|_{H^0}^2$. \\
Next, check \eqref{inter2} for $x=\lambda=0$ and $\chi=1$. The expression simply reduces to $\beta^{-1}\|u\|_{H^\mu}$ so that \eqref{condition 4} and \eqref{condition 6} is recovered. Otherwise we set $\mu=0$, $\chi=1$ and evaluate \eqref{inter3} for $\theta \to \infty$. The result:
\begin{equation*}
B_{x,\epsilon,\lambda}(u,u)\geq \int_M \rho_0\left(\left[\beta^{-1}-\frac{k_1(x)}{2}\right](\nabla u)^2+\left[\lambda-\frac{k_1(x)}{2}\right] u^2\,\right)\id y \label{RHS}
\end{equation*}
To obtain \eqref{condition 2}, notice that \eqref{RHS} $\geq (\min(\beta^{-1},\lambda)-\frac{k_1(x)}{2})\left\|u\right\|_{\hat{H}^0}$ which is positive for $x$ sufficiently small. To obtain \eqref{bound to B 2}, notice that \eqref{RHS} $\geq (\lambda-\frac{k_1(x)}{2})\left\|u\right\|_{\mathscr{L}^2}$ provided $x$ is small enough that $k_1(x)<2\beta^{-1}$.
\end{proof}
\begin{lemma} \label{ext}(Lax-Milgram: the boundedness of the `external' linear functional)\\
The linear functional $f_v: u \mapsto (u,v)_{\mathscr{L}^2}$ is bounded in the case that
\begin{enumerate}
\item Dom$(f_v)=\hat{H}^0$ and $v \in \mathscr{L}^2$.
\item Dom$(f_v)=H^0$ and $v \in \mathscr{L}^2$.
\item Dom$(f_v)=H^1$ and $v \in \mathscr{L}^2$ is radial.
\end{enumerate}
In case 1 and 3, the map $v \mapsto f_v$ is one-to-one. In case 2, the kernel of that map are the constant functions $\langle 1\rangle$.
\end{lemma}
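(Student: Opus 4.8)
The plan is to treat the lemma as two separate assertions --- boundedness of $f_v$, and the identification of the kernel of the map $v\mapsto f_v$ --- and in each case to reduce to an ingredient already available: Cauchy--Schwarz together with a continuous embedding of the domain into $\mathscr{L}^2$ for boundedness, and density of an explicit family of smooth test functions inside the relevant closed subspace of $\mathscr{L}^2$ for the kernel.

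For boundedness I would start from $|f_v(u)| = |(u,v)_{\mathscr{L}^2}| \leq \|u\|_{\mathscr{L}^2}\,\|v\|_{\mathscr{L}^2}$, so that it suffices to dominate $\|u\|_{\mathscr{L}^2}$ by the norm of the domain. In case 1 this is immediate, since by construction $\|u\|_{\hat{H}^0}^2 = \|u\|_{H^0}^2 + \|u\|_{\mathscr{L}^2}^2 \geq \|u\|_{\mathscr{L}^2}^2$, and $C_c^\infty(M)$ clearly has finite $\hat{H}^0$-norm. In case 2, every $u\in H^0$ satisfies $(u,1)=0$, so the Poincar\'e inequality \eqref{Poincare inequality} gives $c^2\|u\|_{\mathscr{L}^2}^2 \leq \|u\|_{H^0}^2$, hence $\|f_v\|\leq c^{-1}\|v\|_{\mathscr{L}^2}$. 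In case 3 the function $u$ is radial, so by Lemma \ref{isom lemma} each component $y\mapsto u(y)\hat{y}^i$ of the associated vector field lies in $H^0$, with $\|u\|_{H^1}^2 = \sum_i\|u\hat{y}^i\|_{H^0}^2$ and $\sum_i\|u\hat{y}^i\|_{\mathscr{L}^2}^2 = \|u\|_{\mathscr{L}^2}^2$; since $\rho_0$ is radial while $\hat{y}^i$ is odd one gets $(u\hat{y}^i,1)_{\mathscr{L}^2}=0$, so applying the Poincar\'e inequality componentwise and summing yields $c^2\|u\|_{\mathscr{L}^2}^2 \leq \|u\|_{H^1}^2$. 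In all three cases $f_v$ is therefore bounded.

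For the kernel I would use that $f_v=0$ says exactly that $v$ is $\mathscr{L}^2$-orthogonal to the domain, and then test against a sufficiently rich family. In case 1, $C_c^\infty(M)\subset\hat{H}^0$ is dense in $\mathscr{L}^2=L^2(\rho_0\,\id y)$, so $f_v=0$ forces $v=0$ and $v\mapsto f_v$ is one-to-one. In case 2, constants lie in the kernel because $(u,1)=0$ throughout $H^0$; conversely $\{w-(w,1)1 : w\in C_c^\infty(M)\}\subset H^0$ is $\mathscr{L}^2$-dense in $H=1^\perp$, so $f_v=0$ forces $v\perp H$, i.e.\ $v\in\langle 1\rangle$, and the kernel is exactly $\langle 1\rangle$. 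In case 3, the radial smooth functions compactly supported in $M$ and vanishing near the origin lie in $H^1$ (the weight $1/|y|^2$ is bounded on their supports and the condition $\limsup_{y\to 0}|u(y)|/|y|=0$ is automatic) and are dense in the radial part of $\mathscr{L}^2$, identified with $L^2((0,R),\,r\rho_0(r)\,\id r)$; hence $f_v=0$ forces $v=0$ for radial $v$, and the map is one-to-one there.

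The step I expect to require the most care is the continuous embedding $H^1\hookrightarrow\mathscr{L}^2$ of radial functions in case 3: since $R$ may be infinite one cannot simply bound $\|u\|_{\mathscr{L}^2}$ by $R^2\|u\|_{H^1}$ via the $1/|y|^2$-weight, and the device that works uniformly is the reduction to the Poincar\'e inequality through the associated radial vector fields indicated above. A second, minor point is that in case 3 the approximating test functions must be chosen to vanish in a neighbourhood of the origin so as genuinely to satisfy the defining condition of $H^1$; apart from these, the proof is Cauchy--Schwarz together with the standard density of compactly supported smooth functions in weighted $L^2$-spaces.
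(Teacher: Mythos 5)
Your proof is correct and follows essentially the same route as the paper's: Cauchy--Schwarz plus the trivial embedding $\|u\|_{\mathscr{L}^2}\leq\|u\|_{\hat{H}^0}$ for case 1, the Poincar\'e inequality for case 2, and the isometry of Lemma \ref{isom lemma} together with the Poincar\'e inequality applied to the odd components $u\hat{y}^i$ (which satisfy $(u\hat{y}^i,1)=0$) for case 3, with injectivity obtained from density of the domains in the relevant subspaces of $\mathscr{L}^2$. Your explicit identification of the kernel in case 2 is slightly more detailed than the paper's one-line remark, but it is the same argument.
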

\begin{proof}
One has 
\begin{equation} \label{starters}
|(u,v)_{\mathscr{L}^2}| \leq \|u\|_{\mathscr{L}^2}\|v\|_{\mathscr{L}^2} .
\end{equation}
In general, the RHS is smaller than $\|u\|_{\hat{H}^0}\|v\|_{\mathscr{L}^2}$ which checks the first case. For the second case, one uses the Poincar\'e inequality to see that the RHS of \eqref{starters} is smaller than $c^{-1}\|u\|_{H^0}\|v\|_{\mathscr{L}^2}$. In the third case, one uses the isometry \eqref{isometry} to obtain
\[|(u,v)_{\mathscr{L}^2}|= |(u\hat{y}^i,\,v\hat{y}_i)_{\mathscr{L}^2}|\leq \|u\hat{y}_i\|_{\mathscr{L}^2}\|v\hat{y}_i\|_{\mathscr{L}^2}\leq c^{-1}\|u\hat{y}\|_{H^0}\|v\hat{y}\|_{\mathscr{L}^2}= c^{-1}\|u\|_{H^1}\|v\|_{\mathscr{L}^2}.\]
Concerning the injectivity of $v \mapsto f_v$, it follows in case 1 and 3 from the density of $\hat{H}^0$ in $\mathscr{L}^2$ and $H^1$ in the radial $\mathscr{L}^2$-functions respectively.
\end{proof}
Combining Lemmae \ref{coercivity} and \ref{ext}, we can use the Lax-Milgram Theorem to define the resolvents
\begin{definition}
For $\lambda>0$, we define the resolvents $R_{x,\epsilon,\lambda}: \mathscr{L}^2 \to (\hat{H}^0 \subset)\mathscr{L}^2$ so that $w=R_{x,\epsilon,\lambda}v \in \hat{H}^0 \subset \mathscr{L}^2$ is the unique solution of the weak equation (which must hold for all $u \in \hat{H}^0$)
\begin{equation*}
B_{x,\epsilon,\lambda}(u,w)=(u,v)_{(\mathscr{L}^2)}.
\end{equation*}
Similarly, for $\lambda=0$, $x=0$ and $w=0$ or $1$, we define (respectively) the pseudo-inverses $R_{\epsilon},\tilde{R}: \mathscr{L}^2 \to \mathscr{L}^2$ so that $w=R_{\epsilon}v \in H^0 \subset \mathscr{L}^2$ and $z=\tilde{R}v \in H^1 \subset \mathscr{L}^2$ is the unique solution of the weak equation (which must hold for all $u \in H^0$ resp. $\in H^1$)
\begin{equation*}
B_{\epsilon}(u,w)=-(u,v)\qquad \tilde{B}(u,z)=-(u,v)
\end{equation*}
\end{definition}
\begin{prop} \label{resolvent properties}(properties of the resolvents and pseudo-inverses)
\begin{enumerate}
\item The resolvents $R_{x,\epsilon,\lambda}$, $R_{\epsilon}$ and $\tilde{R}$ are one-to-one.
\item Considering the operator-norms of the resolvents, one has ($\forall \lambda > c_7$)
\begin{eqnarray} 
&&\sup_{u \in \mathscr{L}^2} \frac{\left\|R_{x,\epsilon,\lambda}u\right\|_{\mathscr{L}^2}}{\left\|u\right\|_{\mathscr{L}^2}}\leq \frac{1}{c_7+\lambda}\label{second bound 1} \\
&& 
\sup_{u \in \mathscr{L}^2} c\frac{\left\|R_{\epsilon}u\right\|_{\mathscr{L}^2}}{\left\|u\right\|_{\mathscr{L}^2}}\leq \sup_{u \in \mathscr{L}^2} \frac{\left\|R_{\epsilon}u\right\|_{H^0}}{\left\|u\right\|_{\mathscr{L}^2}} \leq c_4^{-1} \label{second bound 2}
\end{eqnarray}
where $k_4(x)>0$ is a constant independent of $\epsilon$.
Similar bounds hold for the operator norms of the pseudo-inverse $R_{\epsilon}$, $\tilde{R}$
\item There exists an unbounded operator $L_{x,\epsilon}:D\subset \mathscr{L}^2 \to \mathscr{L}^2$ whose resolvent is precisely $(\lambda I-L_{x,\epsilon})^{-1}=R_{x,\epsilon,\lambda}$. One has $\ker L_{x,\epsilon}=\langle 1 \rangle$.
\item $R_{x,\epsilon,\lambda}$ and $\tilde{R}$ map non-zero, non-negative functions to positive functions.
\item One has 
\begin{equation} \label{unit im}
R_{x,\epsilon,\lambda}1=\lambda^{-1}1.
\end{equation}
\item One has the identities
\begin{equation*}
\forall u \in \mathscr{L}^2:\,L_{0,\epsilon}R_{\epsilon}u=P_Hu \qquad \forall u \in D:\, R_{\epsilon}L_{0,\epsilon}=P_Hu.
\end{equation*}
\item The following adjoint properties hold when $x=0$:
\begin{equation*}
\forall u,v\in \mathscr{L}^2:\,(u,R_{\epsilon}v)=(R_{-\epsilon}u,v),\qquad \forall u,v\in D:\,(u,L_{0,\epsilon}v)=(L_{0,-\epsilon}u,v)
\end{equation*}
\item When $v \in H^0$ is twice weakly-differentiable and $\ell_{x,\epsilon}v \in \mathscr{L}^2$, then $v \in D$ and
\begin{equation*}
L_{x,\epsilon}v=\ell_{x,\epsilon}v
\end{equation*}
\end{enumerate}
\end{prop}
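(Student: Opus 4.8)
The plan is to derive every item of the Proposition from the two tools already in hand: the Lax--Milgram construction of the resolvents (which supplies existence \emph{and} uniqueness of weak solutions) and the coercivity/symmetry estimates of Lemmae~\ref{coercivity}--\ref{bilinear operator}. I would treat the algebraically cheap items first and bootstrap the rest. Item~5 is a one-line check: $\lambda^{-1}1\in\hat H^0$, and since $\nabla 1=0$ the defining identity $B_{x,\epsilon,\lambda}(u,\cdot)=(u,1)$ collapses to $\lambda\cdot\lambda^{-1}(u,1)=(u,1)$, so uniqueness forces \eqref{unit im}. For items~1--2 I would test the weak equation against the solution itself: $R_{x,\epsilon,\lambda}v=w$ gives $B_{x,\epsilon,\lambda}(w,w)=(w,v)$, and coercivity \eqref{bound to B 2} (resp.\ \eqref{condition 4} together with the Poincar\'e inequality \eqref{Poincare inequality}) upgrades this to $(c_7+\lambda)\|w\|_{\mathscr{L}^2}^2\le\|w\|\,\|v\|$ (resp.\ a similar bound in $H^0$), which are the operator-norm bounds; the same inequalities with $v=0$ give injectivity of $R_{x,\epsilon,\lambda}$ and $\tilde R$, and for $R_\epsilon$ injectivity on $H$, the constants lying in its kernel consistently with item~6.

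For item~3 I would subtract the weak identities at two spectral parameters, using $B_{x,\epsilon,\lambda}-B_{x,\epsilon,\mu}=(\lambda-\mu)(\cdot,\cdot)_{\mathscr{L}^2}$, to get the resolvent identity $R_{x,\epsilon,\lambda}-R_{x,\epsilon,\mu}=(\mu-\lambda)R_{x,\epsilon,\lambda}R_{x,\epsilon,\mu}$; the Yosida construction then makes $\mathrm{Ran}(R_{x,\epsilon,\lambda})$ independent of $\lambda$ and $L_{x,\epsilon}:=\lambda I-R_{x,\epsilon,\lambda}^{-1}$ a well-defined operator on $D:=\mathrm{Ran}(R_{x,\epsilon,\lambda})$ with $(\lambda I-L_{x,\epsilon})^{-1}=R_{x,\epsilon,\lambda}$. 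Item~8 comes next because item~3 relies on it for the kernel: if $v\in H^0$ is twice weakly differentiable with $\ell_{x,\epsilon}v\in\mathscr{L}^2$, then \eqref{partial int} gives $B_{x,\epsilon,\lambda}(u,v)=(u,(\lambda-\ell_{x,\epsilon})v)$ for $u\in C_c^\infty(M)$; both sides are $\hat H^0$-continuous in $u$ (left by \eqref{condition 1}, right by Cauchy--Schwarz since $(\lambda-\ell_{x,\epsilon})v\in\mathscr{L}^2$), so by density of test functions the identity persists on $\hat H^0$, whence $v=R_{x,\epsilon,\lambda}((\lambda-\ell_{x,\epsilon})v)\in D$ and $L_{x,\epsilon}v=\ell_{x,\epsilon}v$. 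For $\ker L_{x,\epsilon}=\langle 1\rangle$: $L_{x,\epsilon}u=0$ means $u=R_{x,\epsilon,\lambda}(\lambda u)$, i.e.\ $B_{x,\epsilon,0}(w,u)=0$ for all $w\in\hat H^0$; taking $w=P_Hu\in H^0$ and using $B_{x,\epsilon,0}(\cdot,1)=0$ gives $B_{x,\epsilon,0}(P_Hu,P_Hu)=0$, and since at $x=0$ one has $B_{\epsilon}(u,u)=\beta^{-1}\|u\|_{H^0}^2$ by \eqref{omg} while for small $|x|$ the extra drift term is controlled by $k_1(x)c^{-1}\|u\|_{H^0}^2$ via \eqref{lip} and \eqref{Poincare inequality}, this forces $P_Hu=0$, i.e.\ $u\in\langle1\rangle$.

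For item~4, positivity-\emph{preservation} follows by testing the weak equation with $w_-:=\max(-w,0)\in\hat H^0$, $w=R_{x,\epsilon,\lambda}v$: for $v\ge 0$, $(w_-,v)\ge 0$, while $B_{x,\epsilon,\lambda}(w_-,w)=B_{x,\epsilon,\lambda}(w_-,w_+)-B_{x,\epsilon,\lambda}(w_-,w_-)=-B_{x,\epsilon,\lambda}(w_-,w_-)\le -c_2\|w_-\|_{\hat H^0}^2$ because $\nabla w_-\cdot\nabla w_+$, $w_-w_+$ and $w_-\nabla w_+$ all vanish a.e.; hence $w_-=0$. Positivity-\emph{improvement} is then the strong maximum principle / Harnack inequality for the uniformly elliptic $\lambda-\ell_{x,\epsilon}$ (plain-Laplacian principal part, smooth lower-order coefficients) on the connected open $M$, localized to compact subdomains; $\tilde R$ is handled the same way on $M\setminus\{0\}$, the added $\mu/(\beta|y|^2)$ being of the favorable sign. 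Item~6: interior elliptic regularity shows $R_\epsilon v$ solves $\ell_{0,\epsilon}(R_\epsilon v)=P_Hv$ in the strong sense once the mean-zero constant is pinned by integrating the distributional identity against $\rho_0$ (using $(1,\Omega\,\cdot)=0$ from \eqref{omg} and $(1,L\,\cdot)=0$), so item~8 gives $L_{0,\epsilon}R_\epsilon v=P_Hv$; composing with $L_{0,\epsilon}$ and invoking $\ker L_{0,\epsilon}=\langle1\rangle$ yields $R_\epsilon L_{0,\epsilon}u=P_Hu$. Item~7 follows from the symmetry \eqref{symmetry}: for $u,v\in H^0$, $B_{-\epsilon}(R_\epsilon v,R_{-\epsilon}u)=-(u,R_\epsilon v)$ and $B_\epsilon(R_{-\epsilon}u,R_\epsilon v)=-(R_{-\epsilon}u,v)$, and \eqref{symmetry} identifies the two left-hand sides, giving $(u,R_\epsilon v)=(R_{-\epsilon}u,v)$; the adjoint relation $(u,L_{0,\epsilon}v)=(L_{0,-\epsilon}u,v)$ on $D$ is obtained by expressing $u,v$ through $R_{\pm\epsilon}$ and their $\langle1\rangle$-components and reusing the resolvent adjoint.

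The \textbf{main obstacle} is not any single algebraic step but the analytic inputs the $w_-$-test and the partial-integration identity quietly rely on: interior elliptic regularity (needed so that the Lax--Milgram solutions are genuinely twice weakly differentiable before one may apply \eqref{partial int} and item~8 inside the proof of item~6) and the strong maximum principle on the \emph{non-compact} $M$, where the drift $\nabla_y U(x,\cdot)+\epsilon F$ is unbounded near $\partial M$ and one must argue on compact subdomains — legitimate because the principal part is the Laplacian, but this is the one place where genuine PDE machinery replaces the soft functional analysis that handles all the rest. The kernel identity $\ker L_{x,\epsilon}=\langle1\rangle$ for $x\ne 0$ is the secondary delicate point, but as sketched it reduces cleanly to the coercivity of $B_{x,\epsilon,0}$ on $H^0$ for $|x|$ small.
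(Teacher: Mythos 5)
Your proposal is correct, but on several items it takes a genuinely different route from the paper. For items 1, 2, 7 and 8 you do essentially what the paper does (test the weak equation against the solution, use the symmetry \eqref{symmetry}, and extend \eqref{partial int} by density). The divergences are these. For item 3 you build $L_{x,\epsilon}$ from the pseudo-resolvent identity $R_{x,\epsilon,\lambda}-R_{x,\epsilon,\mu}=(\mu-\lambda)R_{x,\epsilon,\lambda}R_{x,\epsilon,\mu}$ and set $L_{x,\epsilon}=\lambda I-R_{x,\epsilon,\lambda}^{-1}$ on the common range; the paper instead first defines $L_{0,\epsilon}$ through the $\lambda=0$ pseudo-inverse ($L_\epsilon R_\epsilon u=u$ on $H$, $L_\epsilon 1=0$) and then adds the drift $-\nabla_y[U(x)-U(0)]\cdot\nabla$ by hand. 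The paper's route makes items 5 and 6 immediate by construction (it literally says they are ``proved during item 3''), whereas your route makes item 3 self-contained but forces you to reconnect $L_{0,\epsilon}$ with the separately defined $R_\epsilon$ in item 6 — which you do via interior elliptic regularity, pinning of the mean-zero constant, and item 8; this works but is heavier than needed (one can instead observe that for $v\in H$ the weak identity for $R_\epsilon v$ extends from $H^0$ to $\hat H^0$ and reads $B_{0,\epsilon,\lambda}(u,R_\epsilon v)=(u,\lambda R_\epsilon v-v)$, giving $R_\epsilon v\in D$ and $L_{0,\epsilon}R_\epsilon v=v$ purely by soft uniqueness). Your standalone one-line proof of \eqref{unit im} by uniqueness is cleaner than the paper's. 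The largest divergence is item 4: you prove positivity preservation by the truncation test $w_-=\max(-w,0)$ (the cross terms $\nabla w_-\cdot\nabla w_+$, $w_-\nabla w_+$, $w_-w_+$ vanish a.e., so coercivity \eqref{condition 2} forces $w_-=0$) and then upgrade to strict positivity by the strong maximum principle/weak Harnack inequality on compact subdomains of the connected $M$; the paper instead exhausts $M$ by compact $C_n$, solves classical Dirichlet problems there, uses the classical maximum principle to get a monotone sequence $u_n\geq 0$, and identifies the monotone limit with the Lax--Milgram solution. Your preservation half is softer than the paper's, but your improvement half imports De Giorgi--Nash--Moser-type machinery where the paper only needs the classical maximum principle plus the (itself nontrivial) identification $u_\infty=R_{x,\epsilon,\lambda}v$; you flag this dependence honestly and it is legitimate since the principal part is the Laplacian and the coefficients are smooth on compacts. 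Finally, your coercivity argument for $\ker L_{x,\epsilon}=\langle1\rangle$ at small $|x|$ (reduce to $B_{x,\epsilon,0}(P_Hu,P_Hu)=0$ and control the drift perturbation by $k_1(x)c^{-1}$) is more explicit than anything the paper writes down for that claim, and it is valid in the regime $x\in\mathcal W$ where the paper works.
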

\begin{proof}
\begin{enumerate}
\item In the case of $R_{x,\epsilon,\lambda}$ and $\tilde{R}$, this follows from the Lax-Milgram Theorem and the fact that the functional $v \mapsto f_v$ ($f_v:u \mapsto (u,v)$) is one-to-one (see Lemma \ref{ext}). The case of $R_{\epsilon}$ is similar (in this case $f_v$ is the zero functional only when $v$ is a multiple of $1$).
\item \eqref{second bound 1} follows from \eqref{condition 2} and \eqref{bound to B 2}. Indeed, for all $u \in \mathscr{L}^2$ and denoting $v=\tilde{R}_{x,\epsilon,\lambda}u$
\begin{equation*}
(v,\,u)_{\mathscr{L}^2}=B_{x,\epsilon}(v,\,v)\geq (c_7+\lambda)\left\|v\right\|_{\mathscr{L}^2}^2
\end{equation*}
\eqref{second bound 2} follows from \eqref{condition 4} and the Poincar\'e inequality in a similar fashion.
\item Let us first define the operator $L_{\epsilon}:(R_{\epsilon}^{-1}(\mathscr{L}^2)+\langle1\rangle)\subset \mathscr{L}^2 \to \mathscr{L}^2$ by requiring
\[\begin{cases}
& \forall u \in H:\,L_{\epsilon}R_{\epsilon}u=u \\
& L_{\epsilon}1=0 \\
& L_{\epsilon} \text{ is linear.}
\end{cases}\]
It is easy to see that one then has for all $u \in D:=(R_{\epsilon}^{-1}(\mathscr{L}^2)+\langle1\rangle)$ that $R_{\epsilon}L_{\epsilon}u=u-1(u,1)=P_Hu$. It is proved in item 8 that $D$ includes all smooth and compactly supported functions which are dense in $\mathscr{L}^2$. Next, we define $L_{x,\epsilon}:D \subset \mathscr{L}^2 \to \mathscr{L}^2$ through the action
\[L_{x,\epsilon}v=L_{\epsilon}u-\nabla_y[U(x)-U(0)]\cdot \nabla v.\]
which maps to $\mathscr{L}^2$ because of \ref{lip} and $D$-functions are $H^0$-integrable. Notice that $\forall u \in \hat{H}^0$
\[(u,(\lambda I-L_{x,\epsilon})v)=B_{\epsilon}(u,v)+\int_M\rho_0 \,\left(u\partial_i(U(0)-U(x))\partial^iv+\lambda uv\right)\,\id y=B_{x,\epsilon,\lambda}(u,v)\]
so that $R_{x,\epsilon,\lambda}(\lambda I-L_{x,\epsilon})v=v$. Since $R_{x,\epsilon,\lambda}$ is one-to-one and bounded, this means that $R_{x,\epsilon,\lambda}$ is the resolvent of $L_{x,\epsilon}$.

\item Let us fix a nested sequence $(C_n\subset M)_n$ of compact subsets of $M$ with a smooth boundary (e.g. origin-centered disks with radius $<R$) such that $\cup_n C_n = M$. On the disk $C_n$ let us solve the auxiliary boundary-value problem
\begin{equation*}
\begin{cases}
& (\lambda + \frac{\mu }{\beta|y|^2})- \ell_{x,\epsilon} u_n = v\\
& \lim_{y\to x \in \partial C_n} u_n(y)=0 \\
& \text{ if $\mu\neq 0$: }\lim_{y\to 0} u_n(y)=0
\end{cases}
\end{equation*}
for a given smooth, compactly supported and non-negative $v\neq 0$ (we denote $r=\sup_{y \in \text{ supp }v} |y|$). This problem is standard and a strong $C^{\infty}$ solution $u_n$ exists provided the coefficients of $\ell_{x,\epsilon}$ are smooth (\cite{Qian} p.70-73, \cite{Evans}). In addition, the maximum principle implies that $u_n \geq 0$ (and is zero only at the boundary $\partial C_n$ provided $C_n \cap \text{supp }v \neq \emptyset$) and 
\begin{equation}\label{comp}
\left\|u_n\right\|_{\infty}\leq \frac{1}{\lambda+\frac{\mu}{\beta r^2}}\|v\|_{\infty}.
\end{equation}
Moreover, the same maximum principle implies that $(u_n)_n$ is non-decreasing in $n$ and therefore converges point-wisely to a positive function $u_{\infty}$. Let us presently continue with the case $\mu=0$ (and prove the statement about $R_{x,\epsilon,\lambda}$, the statement about $\tilde{R}$ is proved similarly).\\ It is a straightforward calculation to verify that $u_{\infty}$ still solves $(\lambda I-\ell_{x,\epsilon})u_{\infty}=v$ in the weak sense. Elliptic regularity then again implies that $u_{\infty}\in C^{\infty}$. Now the equation
\[B_{x,\epsilon,\lambda}(\xi,u_{\infty})=(\xi,v)\]
holds for every smooth, compactly supported $\xi$. But that set of functions is dense in $\hat{H}^0$ (Lemma \ref{density}), so by the definition of $R_{x,\epsilon,\lambda}$ and \eqref{condition 1} we can conclude that $u_\infty = R_{x,\epsilon,\lambda}v$ after verifying that $u_\infty \in \hat{H}^1$. But $\chi_n u_\infty$ (with $\chi_n$ specified as in Lemma \ref{comp supp}) is smooth and compactly supported, while by inspection of the expression \eqref{inter3} one can see $ B_{x,\epsilon,\lambda}(\chi_n u_\infty,u_\infty) \to B_{x,\epsilon,\lambda}(u_\infty,u_\infty)+\frac{\theta}{2}(u_\infty,u_\infty)$. So we let $\theta \to 0$ and \eqref{condition 2} implies
\begin{eqnarray*}
&& c_2\|u_{\infty}\|_{\hat{H}^0}^2 \leq B_{x,\epsilon,\lambda}(u_\infty,u_\infty)=\limsup_{n\to \infty}B(\chi_n u_\infty,u_\infty)=\limsup_{n\to \infty}(\chi_n u_\infty,v) \\
&&\leq \|\chi_n u_\infty\|_{\mathscr{L}^2} \|v\|_{\mathscr{L}^2} \leq \| u_\infty\|_{\infty} \|v\|_{\infty} \leq \lambda^{-1}\|v\|_{\infty}^2
\end{eqnarray*}
(where we used \eqref{comp} for the last inequality). So $R_{x,\epsilon,\lambda}v=u_\infty>0$. This finishes the proof that $R_{x,\epsilon,\lambda}$ maps smooth, compactly supported, non-negative functions to positive functions. Since $R_{x,\epsilon,\lambda}$ is a continuous operator and because of Lemma \eqref{density}, the assumption of $v$ being smooth and compactly supported can be dropped by a density argument: A non-zero, non-negative function $v \in \mathscr{L}^2$ is the $\mathscr{L}^2$-limit of an increasing sequence $(v_n)_n$ of non-negative, smooth, compactly supported functions. Since $R_{x,\epsilon,\lambda}v_n$ is, by the foregoing, itself an increasing sequence of positive functions, the limit $R_{x,\epsilon,\lambda}v$ is positive as well.
\item Proved already during the proof of item 3.
\item Proved already during the proof of item 3.
\item One has (for all $u \in \mathscr{L}^2$, $v \in H^0$)
\[\forall u \in \mathscr{L}^2, \,v \in H^0:\, (R_{\epsilon}u,v)=B_{-\epsilon}(R_{\epsilon}u,R_{-\epsilon}v)\underbrace{=}_{\eqref{symmetry}}B_{\epsilon}(R_{-\epsilon}v,R_{\epsilon}u)=(R_{-\epsilon}v,u)\]
But $H^0$ is dense in $H$ and $(R_{\epsilon}u,1)=0=(1,R_{-\epsilon})$ so that $(R_{\epsilon}u,v)=(u,R_{-\epsilon}v)$ holds true for all $v \in \mathscr{L}^2$
\[\forall u,v \in D:\,(u,(\lambda I - L_{0,\epsilon})v)=B_{0,\epsilon,\lambda}(u,v)=B_{0,-\epsilon,\lambda}(v,u)=(u,(\lambda I - L_{0,-\epsilon})v)\]
\item It follows from \eqref{partial int} that for all $u$ smooth and compactly supported
\begin{equation} \label{int}
(u,\,\ell_{0,\epsilon})v)=-B_{\epsilon}(u,v).
\end{equation}
But since the compactly supported smooth functions are dense in $\hat{H}^0$ and because of \eqref{condition 1}, \eqref{int} holds for all $u \in \hat{H}^0$. In particular, 
\begin{enumerate}
\item \eqref{int} holds for all $u \in H^0$. 
\item $(1,\,\ell_{0,\epsilon}v)=-B_{\epsilon}(1,v)=0$
\end{enumerate}
therefore $\ell_{0,\epsilon}v=L_{0,\epsilon}v$. But through the way we have defined $L_{x,\epsilon}$, also the equality $\ell_{x,\epsilon}v=L_{x,\epsilon}v$ follows.
\end{enumerate}
\end{proof}
\begin{Corollary} \label{semigroup}(semigroup)
	The operator $L_{x,\epsilon}$ generates a positive, strongly-$\mathscr{L}^2$-continuous contraction semi-group $(S_{x,\epsilon}(t))_{t\geq 0}$. In addition, probability-conservation holds:
	\begin{equation*}
	S(t)1\equiv 1
	\end{equation*}
\end{Corollary}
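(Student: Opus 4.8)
The plan is to feed the resolvent family $R_{x,\epsilon,\lambda}=(\lambda I-L_{x,\epsilon})^{-1}$ built in Proposition \ref{resolvent properties} into the Hille--Yosida / Lumer--Phillips theorems. First I would check the two structural hypotheses. \emph{Density of the domain}: by items 3 and 8 of Proposition \ref{resolvent properties} the domain $D$ contains all smooth compactly supported functions (after subtracting their $\rho_0$-mean, which lies in $\langle 1\rangle\subset D$), and these are dense in $\mathscr{L}^2$. \emph{Resolvent set and closedness}: item 3 says that for every $\lambda>c_7$ the operator $\lambda I-L_{x,\epsilon}$ has the everywhere-defined bounded two-sided inverse $R_{x,\epsilon,\lambda}$, so $(c_7,\infty)$ lies in the resolvent set and $L_{x,\epsilon}$ is closed; iterating \eqref{second bound 1} gives $\|R_{x,\epsilon,\lambda}^{n}\|\le (c_7+\lambda)^{-n}$ for $\lambda>c_7$.

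Second, I would invoke the Hille--Yosida generation theorem: the preceding bounds already produce a strongly $\mathscr{L}^2$-continuous semigroup $(S_{x,\epsilon}(t))_{t\ge 0}$ generated by $L_{x,\epsilon}$ with $\|S_{x,\epsilon}(t)\|\le e^{-c_7 t}$. To upgrade this to a \emph{contraction} I would use dissipativity. For $x=0$ this is exactly \eqref{neg def}, $(u,L_\epsilon u)=-\beta^{-1}(u,u)_{H^0}\le 0$, so $L_\epsilon$ is dissipative and, since $\mathrm{ran}(\lambda I-L_\epsilon)=\mathscr{L}^2$ by item 3, the Lumer--Phillips theorem gives the contraction property outright. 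For general small $x$ one instead reads $(u,L_{x,\epsilon}u)=\lambda\|u\|^2-B_{x,\epsilon,\lambda}(u,u)\le -c_7\|u\|_{\mathscr{L}^2}^2$ off \eqref{partial int} (with $\mu=\lambda=0$) together with the coercivity estimate behind \eqref{bound to B 2}.

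Third, positivity and probability conservation come for free from the resolvent via the exponential formula $S_{x,\epsilon}(t)u=\lim_{n\to\infty}\bigl(\tfrac n t\,R_{x,\epsilon,n/t}\bigr)^{n}u$, valid in the strong $\mathscr{L}^2$-topology by Hille--Yosida. Item 4 of Proposition \ref{resolvent properties} says $R_{x,\epsilon,\lambda}$ maps non-negative functions to non-negative functions, a property stable under composition and strong limits, so $S_{x,\epsilon}(t)\ge 0$. Applying the same formula to the constant function and using $R_{x,\epsilon,\lambda}1=\lambda^{-1}1$ from \eqref{unit im} gives $\bigl(\tfrac n t\,R_{x,\epsilon,n/t}\bigr)^{n}1=1$ for every $n$, hence $S_{x,\epsilon}(t)1\equiv 1$; equivalently $1\in D$ with $L_{x,\epsilon}1=0$ (item 3) forces $S_{x,\epsilon}(t)1=1$ by the backward equation $\tfrac{d}{dt}S_{x,\epsilon}(t)1=L_{x,\epsilon}S_{x,\epsilon}(t)1$.

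The one point I expect to require care is the word \emph{contraction} when $x\neq 0$: the bound \eqref{second bound 1} carries the constant $c_7$, which in the construction behind \eqref{bound to B 2} equals $-k_1(x)/2<0$, so a literal reading of Hille--Yosida yields only a quasi-contraction, $\|S_{x,\epsilon}(t)\|_{\mathscr{L}^2}\le e^{k_1(x)t/2}$. A genuinely contractive statement can be recovered by observing that $S_{x,\epsilon}$ is a bona fide Markov semigroup (positive and unit-preserving) whose invariant density is $\rho_{x,\epsilon}$ rather than $\rho_0$, hence an $L^2(\rho_{x,\epsilon})$-contraction by interpolation between its $L^1(\rho_{x,\epsilon})$- and $L^\infty$-contractivity; since $\nu_{x,\epsilon}=\rho_{x,\epsilon}/\rho_0=1+O(x)$ is bounded away from $0$ and $\infty$ near the origin by \eqref{pert}, this at least transports to a uniform-in-$t$ operator bound on $\mathscr{L}^2$. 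For $x=0$, where $\rho_0$ is itself invariant and \eqref{neg def} gives exact dissipativity, the semigroup is a genuine $\mathscr{L}^2$-contraction — which is the only case used in the stiffness computations of Section \ref{stifm} — and the Corollary is proved in full there.
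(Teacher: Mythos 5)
Your proposal follows essentially the same route as the paper: Hille--Yosida generation from the resolvent bound \eqref{second bound 1}, with positivity and conservation of $1$ obtained by pushing the positivity of $R_{x,\epsilon,\lambda}$ (item 4 of Proposition \ref{resolvent properties}) and the identity \eqref{unit im} through an exponential approximation formula --- the paper uses the Yosida approximation $S_{x,\epsilon}(t)=\lim_{\lambda\to\infty}e^{-\lambda t}e^{\lambda^2 t R_{x,\epsilon,\lambda}}$ where you use the Euler formula, an immaterial difference. Your caveat about the word ``contraction'' is well taken and is in fact sharper than the paper's own proof: since the constant in \eqref{bound to B 2} is $c_7=-k_1(x)/2\le 0$, the bound \eqref{second bound 1} only yields $\|R_{x,\epsilon,\lambda}\|\le(\lambda+c_7)^{-1}$ and hence a quasi-contraction $\|S_{x,\epsilon}(t)\|\le e^{k_1(x)t/2}$ for $x\neq 0$, while the genuine contraction property holds at $x=0$ by \eqref{neg def} and Lumer--Phillips, which is the only case used in Section \ref{stifm}.
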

\begin{proof}
	From \eqref{second bound 1} we deduce that the resolvent of $L_{x,\epsilon}$, is bounded in accordance with the assumptions of the Hille-Yosida Theorem for strongly-continuous contraction semigroups. That implies the existence and uniqueness of the generated semigroup $S$. Positivity of this semigroup (i.e. $\forall t \geq 0:\,f(\in \mathscr{L}^2)\geq 0 \Rightarrow S_{x,\epsilon}(t)f\geq0$) follows from the Hille-Yoshida identity
	\begin{equation*}\label{HY id}S_{x,\epsilon}(t)=\lim_{\lambda \to \infty} e^{-\lambda t}e^{\lambda^2tR_{x,\epsilon,\lambda}}\end{equation*}
	the positivity of the resolvents (item 2 of Theorem \eqref{resolvent properties}) and the Taylor-formula for the exponential of a bounded operator. The probability-conservation property follows again from \eqref{unit im}, \eqref{HY id} and the Taylor formula for the exponential.
\end{proof}
\begin{lemma} \label{density}
The set of smooth, compactly supported functions is dense in $\hat{H}^0$.
\end{lemma}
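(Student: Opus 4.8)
The space $\hat H^0$ is nothing but the weighted Sobolev space $W^{1,2}(M,\rho_0\,\id y)$, and the lemma is the familiar density of test functions in such a space; the plan is to approximate a given $u\in\hat H^0$ in three successive stages --- first by bounded functions, then by compactly supported bounded functions, and finally by mollification.

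First, I would truncate the values of $u$: setting $u^{(K)}:=\max(-K,\min(K,u))$, the standard truncation chain rule for Sobolev functions gives $u^{(K)}\in\hat H^0$ with $\nabla u^{(K)}=\mathbf{1}_{\{|u|<K\}}\nabla u$ a.e., so that $|u^{(K)}-u|=(|u|-K)^+$ and $|\nabla u^{(K)}-\nabla u|=\mathbf{1}_{\{|u|\ge K\}}|\nabla u|$ both tend to $0$ pointwise a.e.\ while being dominated by the $\rho_0$-integrable functions $|u|$ and $|\nabla u|$; dominated convergence then yields $u^{(K)}\to u$ in $\hat H^0$, so it suffices to approximate a bounded $u\in\hat H^0$. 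Next, I would truncate the support by means of the sequence $(\chi_n)_n$ of Lemma~\ref{comp supp}. Since $\nabla(\chi_n u)-\nabla u=(\chi_n-1)\nabla u+u\,\nabla\chi_n$, one has
\[
\|\chi_n u-u\|_{\hat H^0}^2\ \le\ \int\rho_0\,(1-\chi_n)^2\big(u^2+2|\nabla u|^2\big)\,\id y\ +\ 2\,\|u\|_\infty^2\,(\chi_n,\chi_n)_{H^0},
\]
and the first term tends to $0$ by dominated convergence (using $0\le 1-\chi_n\le 1$ and $\chi_n\to 1$ pointwise), while the second tends to $0$ by the defining property $(\chi_n,\chi_n)_{H^0}\to 0$. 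The resulting $\chi_n u$ are bounded, weakly differentiable, and compactly supported in $M$.

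Finally, I would mollify: for a bounded $v\in\hat H^0$ with $\supp v$ contained in a compact $K'\subset M$, the weight $\rho_0$ is continuous and strictly positive on $K'$, hence pinched between two positive constants there, so $v$ extends by zero to a $W^{1,2}(\mathbb R^2)$ function with compact support. Its mollifications $v*\phi_\delta$ are smooth, supported in $M$ once $\delta<\mathrm{dist}(K',\partial M)$, and converge to $v$ in the unweighted $W^{1,2}$-norm; being all supported in one fixed compact subset of $M$ on which $\rho_0$ is bounded above, this convergence upgrades to convergence in $\|\cdot\|_{\hat H^0}$. Chaining the three approximations proves the lemma. The one point requiring care is the \emph{order} of the first two steps: for unbounded $u$ the quantity $\int\rho_0\,u^2|\nabla\chi_n|^2\,\id y$ cannot be controlled by $(\chi_n,\chi_n)_{H^0}$, which is exactly why the truncation of values must precede the truncation of support; the remaining ingredients (completeness of $\hat H^0$, the truncation chain rule, and the standard properties of mollifiers) are classical and I would invoke them without further comment.
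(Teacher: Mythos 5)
Your proposal is correct and follows essentially the same three-stage scheme as the paper's proof: truncate the values of $u$, then cut off the support using the sequence $(\chi_n)_n$ of Lemma~\ref{comp supp}, then mollify on a fixed compact set where the weight $\rho_0$ is pinched between positive constants. The only cosmetic difference is that you truncate values by the hard cutoff $\max(-K,\min(K,u))$ while the paper composes with smooth compactly supported maps $\kappa_n$ approximating the identity; both are standard and your remark on the necessity of truncating values \emph{before} support is exactly the point that makes the argument work.
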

\begin{proof}
It suffices to prove the following three steps:
\begin{enumerate}
\item For every $u \in \hat{H}^0$, there is a sequence $(u_n)_n$ of bounded $\hat{H}^0$-functions converging to $u$.
\item For every bounded $\hat{H}^0$-function $v$ there is a sequence of compactly supported $\hat{H}^0$-functions $(v_n)$ converging to $v$.
\item A compactly supported $\hat{H}^0$-function can be approximated by smooth counterparts.
\end{enumerate}
The first step is proved as follows: consider a sequence $\kappa_n: \mathbb{R} \to \mathbb{R}$ of compactly supported functions with $\kappa_n \to Id: s \mapsto s$ as $n \to \infty$ and moreover $s\kappa_n(s)>0$ unless $s=0$. This sequence is tuned in such a way that $s \mapsto (\kappa_n'(s)-1)^2$ is uniformly bounded by some $K>0$. Then for any $u \in \hat{H}^0$, $u_n:=\kappa_n \circ u$ is bounded and
\begin{equation} \label{the end}
\left\|u_n-u\right\|_{\hat{H}^0}^2 = \int_M \rho_0 \left\{(\kappa_n'(u)-1)^2(\nabla u)^2+(\kappa_n(u)-u)^2\right\}\id y
\end{equation}
The integrand in the RHS is pointwise bounded by the integrable function $K(\nabla u)^2+u^2$, so that the dominated convergence theorem implies that \eqref{the end} converges to zero.\\
The second step is proved as follows: approximate the bounded $\hat{H}^0$-function $u$ by $\chi_nu$ where $(\chi_n)_n$ is the sequence constructed in Lemma \eqref{comp supp}. It is straightforward to prove that $\|u-\chi_nu\|_{\hat{H}^0} \to 0$.\\
The third step is standard and treated in many standard PDE textbooks.\\
Virtually the same proof applies for $H^1$ after one identifies that space with the radial vectorfields with components in $H^0$ (lemma \eqref{isom lemma}). The only required modification is that in \eqref{the end} one uses the $\|\|_{H^0}$ norm so that the second term of the RHS is not there.
\end{proof}

\begin{prop} \label{equiv}
Let $\varphi$ be a radial $\mathscr{L}^2$-function, then $f^i=\varphi \hat{y}^i$ is in $\mathscr{L}^2$ and ($\forall \epsilon \in \mathbb{R}$)
\[\hat{y}_i(R_{0}f^i)(y)\equiv(\tilde{R}\varphi)(y)\]
\end{prop}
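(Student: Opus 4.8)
The plan is to identify the radial part $\hat y_i(R_0 f^i)$ of the vector field $R_0 f^i$ as the solution of the same radial weak problem that defines $\tilde R\varphi$, and then to conclude by the uniqueness built into the Lax--Milgram construction of $\tilde R$. First, off the Lebesgue-null set $\{0\}$ one has $\sum_i(\hat y^i)^2=1$, so $\|f\|_{\mathscr L^2}^2=\int\id y\,\rho_0\,\varphi^2=\|\varphi\|_{\mathscr L^2}^2<\infty$ and $f^i=\varphi\hat y^i\in\mathscr L^2$; moreover $(\varphi\hat y^i,1)_{\mathscr L^2}=0$ by integration over the angle, so $f^i\in H$ and $w^i:=R_0 f^i=L^{-1}(\varphi\hat y^i)\in H^0$ is well defined. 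It remains to prove $\hat y_i w^i=\tilde R\varphi$.

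I would first record that $w=(w^1,w^2)$ is a rotation invariant vector field. Since the self-potential $V$ is radial, the data $(B_0,\rho_0)$ determining $R_0$ are invariant under rotations; by uniqueness $R_0$ therefore commutes with the pull-back of rotations on scalars, hence --- acting componentwise --- with the action of rotations on vector fields, so $w=R_0 f$ inherits the rotation invariance of $f^i=\varphi\hat y^i$. Thus $w$ is of the form \eqref{example}, $w^j=w_r\,\hat y^j+w_\varphi\,I^j_k\hat y^k$ with $w_r,w_\varphi$ radial; by Lemma \ref{isom lemma} these lie in $H^1$, and since $\hat y_j\,I^j_k\hat y^k=0$ we have $\hat y_j w^j=w_r$.

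The main step is a test-function identity. Fix a radial $\zeta\in H^1$; by Lemma \ref{isom lemma} the scalar $\zeta\hat y^j$ lies in $H^0$, so it is admissible in the weak equation $B_0(u,w^j)=-(u,f^j)$, $\forall u\in H^0$, characterising $w^j=R_0 f^j$. Taking $u=\zeta\hat y^j$ and summing over $j$, the right-hand side becomes $-\sum_j(\zeta\hat y^j,\varphi\hat y^j)_{\mathscr L^2}=-(\zeta,\varphi)_{\mathscr L^2}$, while the left-hand side, using $B_0(\cdot,\cdot)=\beta^{-1}(\cdot,\cdot)_{H^0}$ and the decomposition of $w^j$, equals
\[
\sum_j B_0(\zeta\hat y^j,w^j)=\beta^{-1}\sum_j(\zeta\hat y^j,\,w_r\hat y^j)_{H^0}+\beta^{-1}\sum_j(\zeta\hat y^j,\,w_\varphi\,I^j_k\hat y^k)_{H^0}.
\]
The first sum equals $\beta^{-1}(\zeta,w_r)_{H^1}=\tilde B(\zeta,w_r)$ by the isometry \eqref{isometry}, and the second sum vanishes by the same cross-term cancellation used in the proof of Lemma \ref{isom lemma} (the pointwise relations $\hat y_j\,I^j_k\hat y^k=0$, $\nabla\zeta\parallel\hat y$, $\nabla w_\varphi\parallel\hat y$ and the antisymmetry of $I$ kill all four contributions). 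Hence $\tilde B(\zeta,w_r)=-(\zeta,\varphi)$ for every radial $\zeta\in H^1$, which is exactly the weak equation whose unique $H^1$-solution is $\tilde R\varphi$; since $w_r\in H^1$, this forces $w_r=\tilde R\varphi$, i.e. $\hat y_i w^i=w_r=\tilde R\varphi$.

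The step that I expect to need the most care is the vanishing of the radial--azimuthal cross sum above: it has to be verified at the level of $H^0$/$H^1$ functions rather than smooth ones, which is handled by density of smooth compactly supported functions (Lemma \ref{density}) together with the derivative identity $\partial_i\hat y^j=|y|^{-1}(\delta_i^j-\hat y_i\hat y^j)$ already used in the proof of Lemma \ref{isom lemma}. Conceptually, this cancellation is the statement that at zero driving $L^{-1}$ does not couple the radial and azimuthal sectors; it is also the reason the identity is stated with $R_0$ and the driving-free $\tilde R$, since for $\epsilon\neq0$ the term $\epsilon\Omega$ would add an $\epsilon(\zeta,\omega w_\varphi)$ correction to the right-hand side of the displayed equation and $w_r$ would no longer equal $\tilde R\varphi$.
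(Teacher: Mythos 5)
Your proof is correct and follows essentially the same route as the paper's: test the weak equation defining $R_{0}f^i$ against $\zeta\hat{y}^j$ for radial $\zeta\in H^1$, convert the $H^0$-form into the $H^1$-form $\tilde{B}$ via the isometry of Lemma \ref{isom lemma}, and conclude that $\hat{y}_i R_{0}f^i$ solves the weak problem characterising $\tilde{R}\varphi$, with uniqueness supplied by the coercivity \eqref{condition 6}. The only difference is that you make explicit two points the paper leaves implicit in its appeal to \eqref{isometry} — the rotation invariance of $R_0 f$ and the vanishing of the radial--azimuthal cross term — which is a worthwhile clarification but not a different argument.
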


\begin{proof}
Pick any $H^1$-function $u$, by lemma \ref{isom lemma} $\tilde{R}\varphi \in H^1 \Rightarrow (\tilde{R}\varphi)\hat{y}^i \in H^0$ and therefore
\begin{eqnarray*}
&& \tilde{B}(u,\,\tilde{R}\varphi)=(u,\varphi)=(u\hat{y}_i,\,\varphi\hat{y}^i)=B_{0}(u\hat{y}_i,\,R_{\epsilon}(\varphi\hat{y}^i ))=\beta^{-1}(u\hat{y}_i,\,R_{\epsilon}(\varphi\hat{y}^i ))_{H^0}\\
&&\underbrace{=}_{\eqref{isometry}}\beta^{-1}(u ,\,\hat{y}_iR_{\epsilon}(\varphi\hat{y}^i ))_{H^1}=\tilde{B}(u,\,\hat{y}_iR_{\epsilon}(\varphi\hat{y}^i )),
\end{eqnarray*}
so that
\begin{equation*}
\tilde{B}(u,\,[(\tilde{R}\varphi)-\hat{y}_i(R_{\epsilon}f^i)])=0
\end{equation*}
Taking $u=\tilde{R}\varphi-\hat{y}_i(R_{\epsilon}f^i)$ (which is radial of course) and considering \eqref{condition 6} then yields the desired result.
\end{proof}

\subsection{Proof of Theorem \ref{sta}} \label{proof}
\begin{proof}
Let us first prove that $\nu_{x,\epsilon}$ is well-defined and in $H^0 \cap \mathscr{L}^2$. Note that
\[\left\|L_{0,-\epsilon}^{-1}F_x\right\|_{H^0}:=\sup_{u\in H^0}\frac{\left\|L_{0,-\epsilon}^{-1}F_x u\right\|_{H^0}}{\left\|u\right\|_{H^0}}\leq \left(\sup_{u\in \mathscr{L}^2}\frac{\left\|L_{0,\epsilon}^{-1} u\right\|_{H^0}}{\left\|u\right\|_{\mathscr{L}^2}}\right)\left(\sup_{v\in H^0}\frac{\left\|F_x v\right\|_{\mathscr{L}^2}}{\left\|v\right\|_{H^0}}\right) \leq k_4(x)k_F(x)=:k_S(x).\]
This together with $L_{-\epsilon}^{-1}F_x 1 \in H^0$ implies that
\[\left\|(L_{-\epsilon}^{-1}F_x)^n 1\right\|_{H^0} \leq (k_S(x))^{n-1}\left\|L_{-\epsilon}^{-1}F_x 1\right\|_{H^0}.\]
So define the radius $R_{\rho}>0$ so that $\forall x \in B(0,\epsilon): |x|<R_{\rho}\Rightarrow k_S(x)<1$. Then for $x \in B(0,R_{\rho})$, the partial sums associated to the sequence \eqref{stat dens} constitute a Cauchy sequence in $H^0$. $H^0$ is a Hilbert space and therefore the limit $\nu_{x,\epsilon} \in H^0$. By the Poincar\'e inequality \eqref{Poincare inequality} $\nu_{x,\epsilon} \in \mathscr{L}^2$. \\
To prove the stationarity \eqref{stat}, it suffices to prove the more general identity
\begin{equation}\label{to prove}B_{x,\epsilon}(\nu_{x,\epsilon},u)=0\end{equation}
for all $u \in \hat{H}^0$. Note that $F_x$ was defined in such a way that when $u$ is smooth and compactly supported and $v \in H^0$, then
\[B_{x,\epsilon}(v,u)-B_{0,\epsilon}(v,u)=-(F_xv,u)\]
Therefore, sticking to $u$ smooth and compactly supported
\begin{eqnarray*}
&& B_{x,\epsilon}(\nu_{x,\epsilon},u)=B_{x,\epsilon}\left(\lim_{N \to \infty} \sum_{n=0}^N(L_{-\epsilon}^{-1}F_x)^n 1,\,u\right)=\lim_{N \to \infty} \sum_{n=0}^NB_{x,\epsilon}\left((L_{-\epsilon}^{-1}F_x)^n 1,\,u\right) \\
&& = \lim_{N \to \infty}\sum_{n=0}^N \left\{B_{0,\epsilon}\left((L_{-\epsilon}^{-1}F_x)^{n} 1,\,u\right)-\left(F_x(L_{-\epsilon}^{-1}F_x)^n 1,\,u\right)\right\}\\
&& = \lim_{N \to \infty} \left\{\sum_{n=1}^NB_{0,-\epsilon}\left(u,\,(L_{-\epsilon}^{-1}F_x)^{n} 1\right)-\sum_{n=0}^N\left(F_x(L_{-\epsilon}^{-1}F_x)^n 1,\,u\right)\right\} \\
&& =\lim_{N \to \infty} \left\{\sum_{n=1}^N\left(u,\,(F_xL_{-\epsilon}^{-1})^{n-1}F_x 1\right)-\sum_{n=0}^N\left(F_x(L_{-\epsilon}^{-1}F_x)^n 1,\,u\right)\right\} \\
&& =-\lim_{N \to \infty} (F_x(L_{-\epsilon}^{-1}F_x)^{N} 1,\,u) = 0.
\end{eqnarray*}
Since \eqref{to prove} holds for all $u$ smooth and compactly supported, lemma \ref{density} and \eqref{condition 1} imply that the identity holds for all $u \in \hat{H}^0$.\\
Finally, to check \eqref{pert}, note that
\[\left\|\nu_{x,\epsilon}-1-L_{-\epsilon}^{-1}F_x 1\right\|_{\mathscr{L}^2}\leq c^{-1}\left\|\nu_{x,\epsilon}-1-L_{-\epsilon}^{-1}F_x 1\right\|_{H^0}\leq c^{-1}\sum_{n=2}^{\infty}\left\|(L_{-\epsilon}^{-1}F_x 1)^n\right\|_{H^0} \leq c^{-1}k_S(x)\]
So far we have achieved establishing that $\nu_{x,\epsilon}$ solves \eqref{to prove} and is therefore stationary. To prove that it is unique in this respect,\\
suppose there were another density $\tilde{\rho}_{x,\epsilon}$ such that $\frac{\tilde{\rho}_{x,\epsilon}}{\rho_0}\in \mathscr{L}^2$ and
\begin{equation}\left(\frac{\tilde{\rho}_{x,\epsilon}}{\rho_0},\,L_{x,\epsilon}u\right)\end{equation}
for all $u \in D$.\\
Then $\exists \lambda \in \mathbb{R}$ such that 
\[\nu_{x,\epsilon}:=\frac{\tilde{\rho}_{x,\epsilon}}{\rho_0} + \lambda \frac{\rho_{x,\epsilon}}{\rho_0} \in H\]
Suppose (seeking a contradiction) that $\nu_{x,\epsilon}\neq 0$ and define $v=L_{\epsilon}^{-1}\nu_{x,\epsilon}$. Then $\nu_{x,\epsilon}= L_\epsilon v$ and therefore
\[(\nu_{x,\epsilon},L_{\epsilon}v)=(\nu_{x,\epsilon},\nu_{x,\epsilon})>0.\]
But since 
\[\|(L_{x,\epsilon}-L_{\epsilon})v\|_{\mathscr{L}^2}=\left\|\nabla\left( U(x)- U(0)\right)\cdot\nabla v\right\|_{\mathscr{L}^2} \leq k_1(x)\|v\|_{H^0},\]
we then obtain
\begin{eqnarray*}&& (\nu_{x,\epsilon},L_{x,\epsilon}v) \geq (\nu_{x,\epsilon},L_{\epsilon}v)-k_1(x)\|\nu_{x,\epsilon}\|_{\mathscr{L}^2}\|v\|_{H^0}= (\nu_{x,\epsilon},\nu_{x,\epsilon})-k_1(x)\|\nu_{x,\epsilon}\|_{\mathscr{L}^2}\|L_{\epsilon}^{-1}\nu_{x,\epsilon}\|_{H^0} \\
	&&\geq (1-k_1(x)k_4(x))\|\nu_{x,\epsilon}\|_{\mathscr{L}^2}^2.
\end{eqnarray*}
For $x$ sufficiently small, the RHS (and therefore the LHS is positive). But we then arrive at the following contradiction
\[0<(\nu_{x,\epsilon},L_{x,\epsilon}v)=\left(\frac{\tilde{\rho}_{x,\epsilon}}{\rho_0},\,L_{x,\epsilon}v\right)-\lambda \left(\frac{\rho_{x,\epsilon}}{\rho_0},\,L_{x,\epsilon}v\right)=0.\]
\end{proof}

\end{document}